\def \coeffs{{\rm coeffs}}
\def \codim {{\rm codim}}
\def \OWD  {{\rm OWD}}
\def \Res  {{\rm Res}}
\def \MCproj  {{\rm MCproj}}
\def \discrim  {{\rm discrim}}
\def \sqrfree  {{\rm sqrfree}}
\def \lc  {{\rm lc}}
\def \Bproj  {{\tt Bp}}
\def  \zero {{\rm Zero}}
\def  \Nproj {{\tt Np}}
\def  \Hproj {{\tt Hp}}
\def \Proineq {{\tt DPS}}
\def \Bprojection {{\tt Bproj}}
\def \RR {{\mathbb R}}
\def \ZZ {{\mathbb Z}}
\newcommand{\va}{\bm{\alpha}}
\newcommand{\vb}{\bm{\beta}}
\newcommand{\xx}{\bm{x}}
\newcommand{\yy}{\bm{y}}
\newcommand{\zz}{\bm{z}}
\def \RAGlib {{\tt RAGlib}}
\def \FI {{\tt FI}}
\def \QEPCAD {{\tt QEPCAD}}
\def \PCAD {{\tt PCAD}}
\def \TwoPro {{\tt PSD-HpTwo}}
\def \TwoHp {{\tt HpTwo}}
\def \FI {{\tt FI}}
\def \QEPCAD {{\tt QEPCAD}}
\def \PCAD {{\tt PCAD}}
\def \TCPT {{\tt CMT}}
\def \OO {{\mathcal{O}}}
\newcommand\numberthis{\addtocounter{equation}{1}\tag{\theequation}}
\newtheorem{ex}{Example}   
\begin{document}

\begin{frontmatter}

\title{\textbf{Open Weak CAD and Its Applications}}

\thanks{This research was partly supported by US National Science Foundation Grant 1319632, National Science Foundation of China Grants 11290141, 11271034 and 61532019, and China Scholarship Council.}

\author{Jingjun Han, 
Liyun Dai}
\ead{hanjingjunfdfz@gmail.com, dailiyun@pku.edu.cn}
\address{LMAM $\&$ School of Mathematical Sciences\\ $\&$ Beijing International Center for Mathematical Research\\Peking University, Beijing 100871, China}

\author{Hoon Hong}
\address{Department of Mathematics, North Carolina State University, Raleigh NC 27695, USA}
\ead{hong@ncsu.edu}
\ead[url]{www.math.ncsu.edu/\~{}hong}

\author{Bican Xia}
\address{LMAM $\&$ School of Mathematical Sciences, Peking University, Beijing 100871, China}
\ead{xbc@math.pku.edu.cn}
\ead[url]{www.math.pku.edu.cn/is/\~{}xbc/}

\begin{abstract}
The concept of open weak CAD is introduced. Every open CAD is an open weak CAD. On the contrary, an open weak CAD is not necessarily an open CAD. An algorithm for computing projection polynomials of open weak CADs is proposed. The key idea is to compute the intersection of projection factor sets produced by different projection orders. The resulting open weak CAD often has smaller number of sample points than open CADs.

The algorithm can be used for computing sample points for all open connected components of $ f\neq0$ for a given polynomial $f$. It can also be used for many other applications, such as testing semi-definiteness of polynomials and copositive problems. In fact, we solved several difficult semi-definiteness problems   efficiently by using the algorithm. Furthermore, applying the algorithm to copositive problems, we find an explicit expression of the polynomials producing open weak CADs under some conditions, which significantly improves the efficiency of solving copositive problems.


\end{abstract}
\begin{keyword}
Open weak CAD, open weak delineable, CAD projection, semi-definiteness, copositivity.
\end{keyword}

\end{frontmatter}

\section{Introduction}
\label{secc:intro}
In this paper, we introduce the concept of {\em\ open weak CAD}, provide an algorithm for computing corresponding projection polynomials, and illustrate its usefulness through various applications. In the following, we elaborate on the above sentence.

Cylindrical Algebraic Decomposition (CAD)\ is a fundamental concept and tool for computational real algebraic geometry with numerous applications. It was introduced by
\citet{collins1,Caviness1998} and have been improved
by  many \citep{McCallum, Hong:90a, Collins_Hong:91, Hong:92a, LS93, Ren, BPR96,  McCallumeq, AnaiW01, Brown01a, Strzebonski06, Hong_Safey2012,Chen_Maza2014}.

For polynomials in $n$ variables, a CAD is a finite collection of sign-invariant cells satisfying the following two requirements : (a) the cells constitute a  decomposition  of the whole $n$-dimensional space.  (b) the cells  are cylindrically arranged.  Such a decomposition is typically constructed in two stages: (1)\  compute   a kind of triangular set of  polynomials through repeated projection, resulting  in so-called projection polynomials (2)\ carrying back substitutions by repeatedly solving the projection polynomials.

It  was immediately observed that the computation   of CAD can be     time-consuming. Hence there have been intensive and diverse effort to improve its computational efficiency.  For instance, it was observed that the algorithms spend  huge amount of time on  constructing low dimensional cells and that those cells are often not needed in various applications. Hence a relaxed notion called {\em\ open CAD} or {\em\ generic CAD} was introduced, where  low dimensional cells are ignored. Consequently, it relaxes the requirement (a) that the cells constitute a decomposition of the {\em whole\/} space.

In this paper, we introduce a further relaxed notion called {\em open weak CAD}, where we  also relax  the requirement (b)  that the cells  are cylindrically arranged.   Technically, the cylindricity is intimately related to  delineability.   We replace the original delineability requirement with a weaker version in such a way that it still captures sufficient amount of geometric information needed in various applications.

Furthermore, we  provide an algorithm for computing projection polynomials of open weak CADs. The key idea is to compute the intersection of projection factor sets produced by different projection orders. The resulting open weak CAD often has smaller number of cells than open CADs.

We illustrate the usefulness of open weak CAD theory and algorithm by tackling several application problems. First we show how to compute sample points for all open connected components of $ f\ne 0$ for a given polynomial $f$. Next we show how to test polynomial inequalities. Finally we  show how to tackle  co-positiveness problems; we find an explicit expression of the polynomials producing open weak CADs under some conditions, which significantly improves the efficiency of solving copositive problems.

The structure of this paper is as follows. In Section \ref{sec:problem}, we introduce a notion of  open weak delineability  and open weak CAD and then state the problem of finding projection polynomials. In Section \ref{sec:pre}, we review basic definitions, lemmas and concepts of CAD. In Section \ref{sec:refined}, we introduce several properties of open weak delineable, provide an algorithm for computing projection polynomials of open weak CADs  and prove its correctness. In Section \ref{sec:reduced_open_cad}, we apply open weak CAD theory and algorithm to  compute open sample. In Section \ref{sec:improved}, we apply open weak CAD theory and algorithm and some previous work \citep{han2016proving} to prove  polynomial inequalities.    In Section \ref{sec:copositive}, we again  apply open weak CAD theory and algorithm to  test co-positiveness. Section \ref{sec:applicat}, we provide  several examples which demonstrate the effectiveness of the algorithms. 


 \section{Problem: Open weak CAD}\label{sec:problem}


 In this section, we give a precise statement of the problem.  We begin by  introducing several notions.

 \begin{defn}[Open weak delineable] \label{def:weakopendeli}
Let  $f\in \RR[x_1,\ldots,x_n]$ and let $S$ be an open set of $\RR^j (1\le j< n)$. We say that $f$ is  {\em open weak delineable} (OWD) on $S$ if, for any open connected component $U\subseteq \RR^n$ defined by $f\neq0$, we have either
$$S\subseteq \pi_j^n(U)  \,\,\,\text{or}\,\,\, S\cap \pi_j^n(U)=\emptyset,$$
where $\pi_j^n(x_1,\ldots,x_n)=(x_1,\ldots,x_j)$.
Let $h\in \RR[x_1,\ldots,x_j]$ for $j\le n.$ We say
that $f$ is {\em open weak delineable\/} over~$h$ in $\RR^j$, if $f$ is open weak delineable on any open connected component of $h\neq0$ in $\RR^j$.
\end{defn}

\begin{ex}\label{ex:wod}
Let $$f=(x_2^2-1)^2-x_1\in \RR[x_1,x_2].$$
The plot of $f=0$ is given by
\begin{figure}[ht]
\begin{centering}
        \includegraphics[width=2.0in]{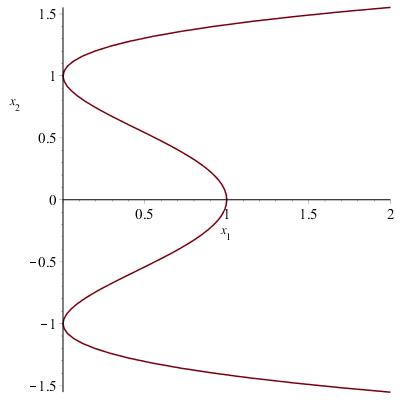}
\caption{Example \ref{ex:wod}\label{fig:ex21} }
\end{centering}
\end{figure}

\noindent
Note that $f$ is analytically delineable (\citet{collins1}, \citet{McCallum2}) and also open weak delineable on the set $(-\infty, 0)$.
Note that $f$ is {\em not\/} analytically delineable  but {\em is\/} open weak delineable on the set $(0,\infty)$. Note also that $f$ is open weak delineable over~$h=x_1$ in $\RR.$
\end{ex}



\begin{defn}[Open weak CAD]
Let $f\in \RR[x_1,\ldots,x_n]$. A decomposition of $\RR^j$, $1\leq j< n$, is called an {\em open weak CAD\/} of $f$ in $\RR^j$ if and only if $f$ is open weak delineable on every $j$-dimensional open set in the decomposition. 
\end{defn}

\begin{ex}
Let $f$ be the  polynomial from Example~\ref{ex:wod}, 
\[\{(-\infty,0),[0,0],(0,\infty)\}\]
is an open weak CAD of $f$ in $\RR$.
\end{ex}

Finally, we are ready to state the problem precisely.\medskip

\noindent \textbf{Problem}. (Projection polynomials of open weak CAD) Devise an algorithm with the following specification.
\medskip
\begin{description}[leftmargin=3em,style=nextline,itemsep=0.5em]
\item[\sf In:]   $f\in \ZZ[x_1,\ldots,x_n]$
\item[\sf Out:]  $h_1, \ldots, h_{n-1}$ where $h_j\in \ZZ[x_1,\ldots,x_j]$
such that $f$ is open weak delineable over $h_j$ in $\RR^j$.
\end{description}

\begin{ex}\label{ex:1}
Consider the following polynomial. \medskip
\begin{description}[leftmargin=3em,style=nextline,itemsep=0.5em]
\item[\sf In:]   $f=(x_3^2+x_2^2+x_1^2-1)(4x_3+3x_2+2x_1-1)\in \ZZ[x_1, x_2,x_3]$
\item[\sf Out:]  $h_1 =(x_1-1)(x_1+1)(29x_1^2-4x_1-24)((20x_1^2-4x_1-15)^2+(13x_1^2-4x_1-8)^2)\in \ZZ[x_1]$
\item[]             $h_2 =(x_2^2+x_1^2-1)(25x_2^2+12x_2x_1+20x_1^2-6x_2-4x_1-15)\in \ZZ[x_1, x_2] $
\end{description} \medskip
The left plot in Figure~\ref{fig:ex1ab} below shows the open weak CAD of $f$  produced by $h_1$ and~$h_2$.
The factor $(20x_1^2-4x_1-15)^2+(13x_1^2-4x_1-8)^2$ in $h_1$ does not have real root and thus it does not contribute to the open weak CAD.
\end{ex}
\begin{figure}[h]
\begin{center}
\includegraphics[width=2.5in]{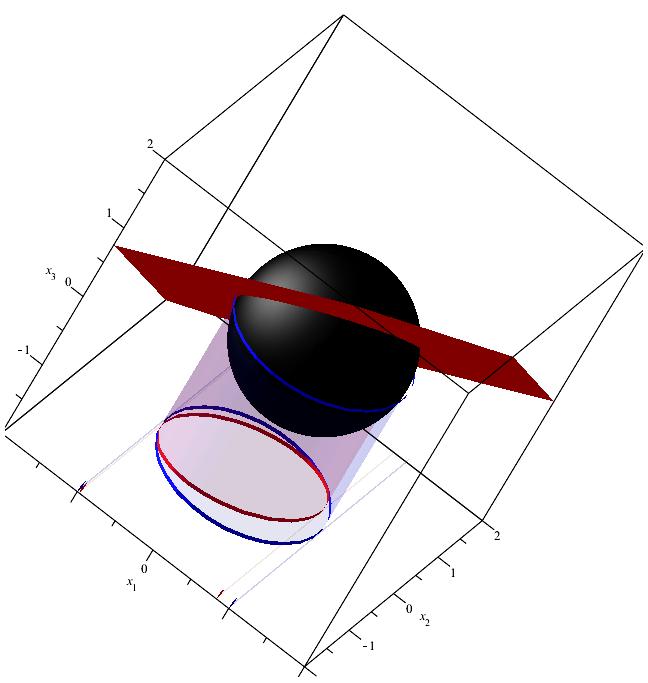}
\includegraphics[width=2.5in]{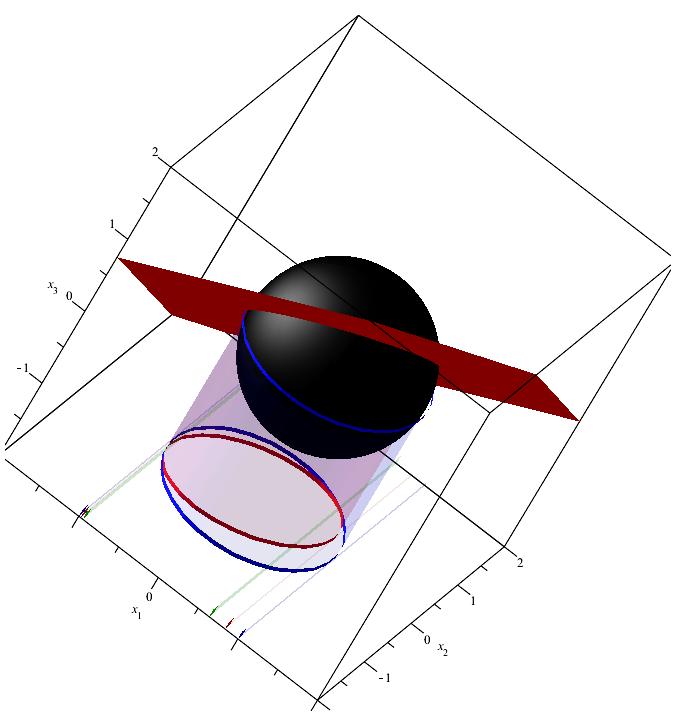}
\end{center}
\caption{\label{fig:ex1ab}Example \ref{ex:1} }
\end{figure}

\begin{rem} For comparison, if we apply an open CAD  algorithm on the above $f$, one would obtain the following
output\medskip
\begin{description}[leftmargin=3em,style=nextline,itemsep=0.5em]
\item[\sf Out:]  $h_1=(x_1-1)(x_1+1)(29x_1^2-4x_1-24)(13x_1^2-4x_1-8)\in \ZZ[x_1]$
\item[]          $h_2 =(x_2^2+x_1^2-1)(25x_2^2+12x_2x_1+20x_1^2-6x_2-4x_1-15)\in \ZZ[x_1, x_2]$
\end{description}  \medskip
The right plot in Figure~\ref{fig:ex1ab} shows the open  CAD of $f$  produced by $h_1$ and~$h_2$.
Note that it has more cells than the open weak CAD (on the left).
\end{rem}

\begin{rem}
It is natural to wonder whether the multivariate discriminants of $f$ always produce open weak CADs. Unfortunately, this is not true since the discriminant ${\rm discrim}(f,[x_n,\ldots,x_{j+1}])$ (the multivariate discriminant of $f$ with respect to $x_n,\ldots,x_{j+1}$) may vanish identically and thus does not always produces an open weak CAD of $\RR^j$.
One may also wonder whether if the multivariate discriminants of $f$ do produce open weak CADs, then they would be the smallest open weak CADs. Unfortunately this is not true  either. In Example \ref{ex:wod}, it has been shown that $x_1$ produces an open weak CAD of $\RR$ with $2$ open intervals. But the discriminant ${\rm discrim}(f,x_2)=-256(x_1-1)x_1^2$ produces an open weak CAD of $\RR$ with $3$ open intervals.
\end{rem}

\begin{rem}
The output of the above problem is a list of ``projection'' polynomials, not an open weak CAD. As usual, we can compute sample points in an open weak CAD of $f$ from the projection polynomial $h_j$ by standard lifting technique. We say that $h_j$ produce an open weak CAD. Thus, in this paper, we will focus ourselves on the problem of finding projection polynomial of Open Weak CAD.
\end{rem}




\section{Preliminaries}
\label{sec:pre}
If not specified, for a positive integer $n$, let $\xx_n$ be the list of variable $(x_1,\dots,x_n)$ and $\va_n$ and $\vb_n$ denote the point $(\alpha_1,\dots,\alpha_n)\in \RR^n$ and $(\beta_1,\dots,\beta_n)\in \RR^n$, respectively.



\begin{defn}
Let $f\in \ZZ[\xx_n]$, denote by $\lc(f,x_i)$ and $\discrim(f,x_i)$ the {\em leading coefficient} and the {\em discriminant} of $f$ with respect to (w.r.t.) $x_i$, respectively. 
The set of real zeros of $f$ is denoted by $\zero(f)$. Denote by $\zero(L)$ or $\zero(f_1,\ldots,f_m)$ the common real zeros of $L=\{f_1,\ldots,f_m\}\subset \ZZ[\xx_n].$
The {\em level} for $f$ is the biggest $j$ such that $\deg(f,{x_j})>0$ where $\deg(f,x_j)$ is the degree of $f$ w.r.t. $x_j$.
        For polynomial set $L\subseteq \ZZ[\xx_n]$, $L^{[i]}$ is the set of polynomials in $L$ with level $i$.
\end{defn} 

\begin{defn}\label{de:sqrfree}
If $h$$\in \ZZ[\xx_n]$ can be factorized in $\ZZ[\xx_n]$ as:
        $$h=al_{1}^{2j_1-1}\cdots l_t^{2j_t-1}{h_1}^{2i_1}\cdots {h_m}^{2i_m},$$
        where $a\in \ZZ$, $t\ge 0, m\ge 0$, $l_i$$(i=1,\ldots,t)$ and $h_i$$(i=1,\ldots,m)$ are
pairwise different irreducible primitive polynomials with positive leading coefficients $($under a suitable ordering$)$ and positive degrees in $\ZZ[\xx_n]$,
then define
        \begin{align*}
        &\sqrfree(h)=l_{1}\cdots l_t{h_1}\cdots {h_m},\\ 
                &\sqrfree_1(h)=\{l_i,i=1,2,\ldots,t\},\\
                &\sqrfree_2(h)=\{h_i,i=1,2,\ldots,m\}.
        \end{align*}
        If $h$ is a constant, let $\sqrfree(h)=1,$ $\sqrfree_1(h)=\sqrfree_2(h)=\{1\}.$
\end{defn}

In the following, we introduce some basic known concepts and results of CAD. The reader is referred to \cite{collins1,Hong:90a,Collins_Hong:91,McCallum1,McCallum2,Brown01a} for a detailed discussion on the properties of CAD.

\begin{defn}\citep{collins1,McCallum1}\label{def:delineable}
        An $n$-variate polynomial $f(\xx_{n-1},x_{n})$ over the reals is said to be {\em delineable} on a subset $S$ (usually connected) of $\RR^{n-1}$ if
        (1) the portion of the real variety of $f$ that lies in the cylinder $S\times \RR$ over $S$ consists of the union of the graphs of some $t\ge0$ continuous functions $\theta_1<\cdots<\theta_t$ from $S$ to $\RR$; and
        (2) there exist integers $m_1,\ldots,m_t\ge1$ s.t. for every $a\in S$, the multiplicity of the root $\theta_i(a)$ of $f(a,x_n)$ (considered as a polynomial in $x_n$ alone) is $m_i$.
\end{defn}

\begin{defn}\citep{collins1,McCallum1}
        In the above definition, the $\theta_i$ are called the real root functions of $f$ on $S$, the graphs of the $\theta_i$ are called the $f$-{\em sections} over $S$, and the regions between successive $f$-sections are called $f$-{\em sectors}.

\end{defn}

\begin{thm}\label{thm:McCallum}\citep{McCallum1,McCallum2}
        Let $f(\xx_n,x_{n+1})$ be a polynomial in $\ZZ[\xx_n,x_{n+1}]$ of positive degree and $\discrim(f,x_{n+1})$ is a nonzero polynomial. Let $S$ be a connected submanifold of $\RR^n$ on which $f$ is degree-invariant and does not vanish identically, and in which $\discrim(f,x_{n+1})$ is order-invariant. Then $f$ is analytic delineable on $S$ and is order-invariant in each $f$-section over $S$.
\end{thm}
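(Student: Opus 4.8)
The plan is to prove the two assertions---that $f$ is analytic delineable on $S$ (that is, delineable in the sense of Definition~\ref{def:delineable} with the real root functions $\theta_i$ real-analytic) and that $f$ is order-invariant along each $f$-section over $S$---by a purely local analysis which is then propagated over all of $S$ by connectedness. Concretely, I would show that the ``delineability data'' (the number $t$ of real root functions, their multiplicities $m_1,\dots,m_t$, the ordering $\theta_1<\dots<\theta_t$, and the vanishing order of $f$ along each section) is locally constant on $S$, and conclude by the usual clopen argument since $S$ is connected. After the routine normalizations---$\discrim(f,x_{n+1})\neq 0$ means $f$ has no repeated factor of positive degree in $x_{n+1}$, so we may take $f$ squarefree in $x_{n+1}$, and degree-invariance of $f$ on $S$ keeps $\lc(f,x_{n+1})$ from vanishing on $S$---I would fix $\va_0\in S$, use that $S$ is a submanifold to pick a local analytic parametrization $\phi\colon V\to\RR^n$ with $\phi(\bm 0)=\va_0$, and study $\tilde f(\bm u,x_{n+1})=f(\phi(\bm u),x_{n+1})$. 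Then $\tilde f(\bm u,\cdot)$ has a fixed degree $d$ and exactly $d$ complex roots counted with multiplicity, depending continuously on $\bm u$; near a complex root $\zeta$ of $\tilde f(\bm 0,\cdot)$ of multiplicity $\mu$, the Weierstrass preparation theorem writes $\tilde f$ as a unit times a Weierstrass polynomial of degree $\mu$ in $x_{n+1}$, and Puiseux's theorem realizes its $\mu$ root branches as analytic functions on a ramified cover of a neighborhood of $\bm 0$.

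Next, the hypothesis that $\discrim(f,x_{n+1})$ is order-invariant on the connected set $S$ forces a dichotomy: either $\discrim(f,x_{n+1})$ vanishes nowhere on $S$, or $S\subseteq\zero(\discrim(f,x_{n+1}))$ and $\discrim(f,x_{n+1})$ has one common vanishing order $\kappa\geq 1$ at every point of $S$. In the first case every root of $\tilde f(\bm u,\cdot)$ is simple; the complex implicit function theorem then yields $d$ pairwise disjoint analytic root functions, the real ones are real-analytic, their number $t$ is locally constant, all $m_i=1$, and $f$ meets each section transversally in the $x_{n+1}$-direction and hence has order $1$ there. This disposes of the non-vanishing case.

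The real work is the second case, in which the roots of $\tilde f(\bm u,\cdot)$ genuinely collide over $S$. Here I would recover the branch structure from the order $\kappa$ of the discriminant: expanding $\discrim(f,x_{n+1})$ as $\lc(f,x_{n+1})^{2d-2}$ times the product of the squared pairwise differences of the complex root branches, and using that $\lc(f,x_{n+1})$ is nonvanishing at $\va_0$, one obtains that $\kappa$ equals twice the sum, over all pairs of branches, of their contact (intersection) orders at $\va_0$. Because this sum is constant along $S$, a semicontinuity argument on the individual contact orders forbids any branch from splitting, from merging with another, or from changing its ramification index or multiplicity as $\bm u$ varies; therefore the partition of branches into the real root functions, the multiplicities $m_i$, and the ordering $\theta_1<\dots<\theta_t$ are all preserved, and a parallel computation identifies the vanishing order of $f$ along each section with a quantity built from this same invariant data, hence constant.

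The main obstacle is exactly this second case: carefully tracking Puiseux branches over a multivariable base, and establishing the precise identity relating the order of $\discrim(f,x_{n+1})$ to the contact orders of the branches, so that order-invariance of the discriminant really does pin the branch structure down. This is the technical heart of the statement and is what is supplied by \citep{McCallum1,McCallum2}; once it is available, the local conclusions glue together over $S$ by the connectedness argument described above.
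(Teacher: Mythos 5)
This statement is not proved in the paper at all: it is quoted verbatim as a known result of McCallum, with the proof living entirely in the cited references \citep{McCallum1,McCallum2}. Your sketch is a faithful outline of how McCallum's argument actually goes---reduce to a local analytic picture via Weierstrass preparation and Puiseux branches, split on whether the order of $\discrim(f,x_{n+1})$ on $S$ is zero or positive, handle the first case by the implicit function theorem, and in the second case use the identity expressing the order of the discriminant as a sum of pairwise contact orders of root branches together with semicontinuity to rule out splitting or merging, then glue over $S$ by connectedness. So the approach is the right one; but note that what you label ``the technical heart'' and explicitly defer to the references (tracking Puiseux branches over a multidimensional base and proving that order-invariance of the discriminant rigidifies the branch structure) is precisely the nontrivial content of the theorem, so your text is a correct road map rather than a self-contained proof. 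Two smaller points to tighten if you ever flesh this out: degree-invariance of $f$ on $S$ does not by itself keep $\lc(f,x_{n+1})$ from vanishing on $S$ (the invariant degree may be lower than the nominal one, so one must work with the leading coefficient relative to the degree on $S$ and account for roots escaping to infinity); and the semicontinuity step needs care because individual contact orders are only defined after a consistent labelling of branches on a ramified cover, which is exactly where McCallum invokes the deeper input.
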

Based on this theorem, McCallum proposed the projection operator \MCproj, which consists of the discriminant of $f$ and all coefficients of $f$.

\begin{thm}\label{thm:Brown}\citep{Brown01a}
        Let $f(\xx_n,x_{n+1})$ be a $(n+1)$-variate polynomial of positive degree 
in $x_{n+1}$ such that $\discrim(f,x_{n+1})$ $\neq0$. Let $S$ be a connected submanifold of $\RR^n$ in which $\discrim(f,x_{n+1})$ is order-invariant, the leading coefficient of $f$ is sign-invariant, and such that $f$ vanishes identically at no point in $S$. $f$ is degree-invariant on $S$.
\end{thm}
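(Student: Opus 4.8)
The plan is to induct on $\deg(f,x_{n+1})$, reducing at each step to the \emph{reductum} of $f$. Write $f=\sum_{i=0}^{m}a_i(\xx_n)x_{n+1}^{i}$ with $m=\deg(f,x_{n+1})$ and $a_m=\lc(f,x_{n+1})$. Since $a_m$ is sign-invariant on the connected set $S$, either $a_m$ vanishes nowhere on $S$ — in which case $\deg(f(\va_n,x_{n+1}))=m$ for every $\va_n\in S$ and $f$ is degree-invariant — or $a_m\equiv0$ on $S$. In the second case I would set $g:=f-a_mx_{n+1}^{m}$; then $g$ and $f$ restrict to the same function on $S\times\RR$, so $f$ is degree-invariant on $S$ if and only if $g$ is. If $\deg(g,x_{n+1})=0$ then $g\in\ZZ[\xx_n]$, and since $f$ vanishes identically at no point of $S$ we get $g(\va_n)\neq0$ for all $\va_n\in S$, so $\deg(f(\va_n,x_{n+1}))=0$ throughout $S$; otherwise $\deg(g,x_{n+1})\ge1$ and the goal is to feed $g$ to the induction hypothesis. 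The point of contention is that $g$ is not literally given to satisfy the hypotheses of the theorem, so what is needed must be extracted from the hypotheses on $f$.

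The algebraic tool is the classical behaviour of the discriminant under a drop of degree. Viewing $\discrim(\cdot,x_{n+1})$ as a polynomial in generic coefficients $c_0,\dots,c_m$: a polynomial of degree $\le m-2$ has a double root ``at infinity'', so $\discrim(\cdot,x_{n+1})$ vanishes on the linear subspace $\{c_m=c_{m-1}=0\}$ and hence lies in the ideal $(c_m,c_{m-1})$; and specializing only $c_m$ to $0$ turns the formal-degree-$m$ discriminant of a degree-$(m-1)$ polynomial into $\pm c_{m-1}^{2}$ times its formal-degree-$(m-1)$ discriminant. Translated back, there is an identity $\discrim(f,x_{n+1})=\pm a_{m-1}^{2}\,\discrim(g,x_{n+1})+a_m\cdot P$ in $\ZZ[\xx_n]$ for some $P$, and when $a_{m-1}$ also vanishes on $S$ (equivalently, the degree of $f$ drops by at least two on $S$) one in addition has $\discrim(f,x_{n+1})\in(a_m,a_{m-1})$, so $\discrim(f,x_{n+1})\equiv0$ on $S$. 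The heart of the argument is then to contradict the assumption that $f$ is not degree-invariant on $S$. Let $M$ be the largest index with $a_M\not\equiv0$ on $S$ (it exists because $f$ vanishes identically at no point of $S$; here $M<m$). Non-invariance yields $\vb_n\in S$ with $\deg(f(\vb_n,x_{n+1}))<M$, i.e.\ $a_M(\vb_n)=0$. In the principal case $M=m-1$ with $\discrim(g,x_{n+1})\not\equiv0$ on $S$, choose $\vb_n'\in S$ outside the nowhere-dense zero sets of $a_{m-1}$ and of $\discrim(g,x_{n+1})$; the identity together with $a_m(\vb_n')=0$ gives $\discrim(f,x_{n+1})(\vb_n')\neq0$, so by order-invariance $\discrim(f,x_{n+1})$ vanishes nowhere on $S$, whereas at $\vb_n$ the fibre $f(\vb_n,x_{n+1})$ has degree $\le M-1=m-2$ and so its formal-degree-$m$ discriminant $\discrim(f,x_{n+1})(\vb_n)$ vanishes — a contradiction.

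The hard part will be disposing of the two leftover configurations: (a) the degree of $f$ drops by at least two on $S$ (so $a_{m-1}\equiv0$ on $S$ as well), and (b) $M=m-1$ but $\discrim(g,x_{n+1})$ vanishes on $S$, possibly identically as a polynomial, in which case $g$ is not squarefree in $x_{n+1}$. For (a) I would iterate the passage to the reductum, each step lowering the formal degree while leaving the fibre-degree function $\va_n\mapsto\deg(f(\va_n,x_{n+1}))$ unchanged, until the formal degree equals $M$; for (b) I would pass to the squarefree part of $g$ with respect to $x_{n+1}$ (cf.\ Definition~\ref{de:sqrfree}), which has the same zero set in every fibre — hence the same degree-invariance behaviour on $S$ after accounting for multiplicities — and a non-vanishing discriminant. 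The obstacle is that neither reduction visibly preserves sign-invariance of the leading coefficient or order-invariance of the discriminant, so one must instead carry along only the consequences actually needed, tracking orders of vanishing along $S$ via the degree-drop identities and the fact that $a_{M+1},\dots,a_m$ vanish on $S$ while $a_M$ does not. This is exactly the step that uses order-invariance of $\discrim(f,x_{n+1})$ — not merely its sign-invariance — in an essential way, as one already sees for $f=x_1x_3^{3}+x_1x_3^{2}+x_2x_3+1$ on $S=\{x_1=0\}$, where $\discrim(f,x_3)\equiv0$ on $S$ (so it is sign-invariant) yet $f$ is not degree-invariant on $S$. Once degree-invariance is established, Theorem~\ref{thm:McCallum} becomes applicable and the delineability statements in the spirit of Definition~\ref{def:delineable} follow.
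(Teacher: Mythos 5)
This theorem is imported verbatim from Brown (2001); the paper gives no proof of it, so there is no in-paper argument to compare yours against, and your attempt has to stand on its own. Its first half does: the reduction to $\lc(f,x_{n+1})\equiv 0$ on $S$, the specialization identity $\discrim(f,x_{n+1})=\pm a_{m-1}^{2}\,\discrim(g,x_{n+1})+a_m P$ for the reductum $g$, and the contradiction you extract in the case $M=m-1$ with $\discrim(g,x_{n+1})\not\equiv 0$ on $S$ are all correct. In fact that case can be phrased even more simply: whenever $\discrim(f,x_{n+1})$ does not vanish identically on $S$, order-invariance forces it to have order $0$, hence to be nonvanishing, on all of $S$; since the formal-degree-$m$ discriminant vanishes at any point whose fibre degree is at most $m-2$, the fibre degree is then everywhere $m$ or everywhere $m-1$ according to the sign of $a_m$, and degree-invariance follows without the reductum identity.

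The genuine gap is the complementary case $\discrim(f,x_{n+1})\equiv 0$ on $S$, which is exactly your two ``leftover configurations'' (a) and (b). This is not a residual edge case but the substance of the theorem: it is the only regime in which order-invariance, as opposed to mere sign-invariance, can do any work, and your own witness $f=x_1x_3^{3}+x_1x_3^{2}+x_2x_3+1$ on $S=\{x_1=0\}$ lands in case (a), where $\discrim(f,x_3)$ is divisible by $x_1$ and so vanishes identically on $S$ while its order jumps from $1$ to $2$ at the origin. Once the discriminant vanishes identically on $S$, its order at a point of $S$ is determined by its behaviour in a full neighbourhood in $\RR^n$, not by its restriction to $S$, so one must compare orders of vanishing transverse to $S$ at a generic point (where $a_M\neq 0$) and at the putative bad point (where $a_M=0$). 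Your proposed reductions --- iterating the reductum until the formal degree reaches $M$, or replacing $g$ by its squarefree part --- discard exactly the hypotheses (order-invariance of $\discrim(f,x_{n+1})$, sign-invariance of $a_m$) that such a comparison would use, as you concede. What is missing is the quantitative lemma that closes the loop, e.g.\ a lower bound on ${\rm ord}_{p}\,\discrim(f,x_{n+1})$ in terms of the orders at $p$ of $a_m,\dots,a_{M}$ and of the discriminant of the degree-$M$ truncation, showing that the order strictly increases where $a_M$ vanishes. Without it, the argument proves the statement only when $\discrim(f,x_{n+1})$ is not identically zero on $S$, which is the easy half.
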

Based on this theorem, Brown obtained a reduced McCallum projection in which only leading coefficients, discriminants and resultants appear. The Brown projection operator is defined as follows.
\begin{defn} \label{def:brown projection}\citep{Brown01a}
        Given a polynomial $f\in \ZZ[\xx_n]$, if $f$ is with level $n$,
        the Brown projection operator for $f$ is
        $$\Bproj(f,[x_{n}])=\Res(\sqrfree(f),\frac{\partial (\sqrfree(f))}{\partial x_{n}}, x_{n}).$$
        Otherwise, $\Bproj(f,[x_{n}])=f$.
        If $L$ is a polynomial set with level $n$, then
        \begin{align*}
                \Bproj(L,[x_{n}])=&\bigcup_{f\in L}\{\Res(\sqrfree(f),\frac{\partial (\sqrfree(f))}{\partial x_{n}}, x_{n})\}\bigcup\\
                &\bigcup_{f,g\in L, f\neq g}\{\Res(\sqrfree(f),\sqrfree(g),x_{n})\}.
        \end{align*}
        Define
        \begin{align*}
                &\Bproj(f,[x_{n},x_{n-1},\ldots, x_i])\\
                =&\Bproj(\Bproj(f,[x_{n},x_{n-1},\ldots,x_{i+1}]),[x_i]).
        \end{align*}
\end{defn}

The following definition of {\em open CAD\/}  is essentially the GCAD introduced in \cite{Strzebonski}. For convenience, we use the terminology of open CAD in this paper.

\begin{defn} $($Open CAD$)$ \label{def:opencad}
        For a polynomial $f(\xx_n)\in \ZZ[\xx_n]$, an open CAD defined by $f(\xx_n)$ under the order $x_1\prec x_2\prec\cdots\prec x_n$ is a set of sample points in $\RR^n$ obtained through the following three phases:
\begin{enumerate}
       \item[(1)] [Projection] Use the Brown projection operator on $f(\xx_n)$,
                  let $$F=\{f,\Bproj(f,[x_n]),\ldots, \Bproj(f,[x_n,\dots,x_2])\};$$ 
       \item[(2)] [Base] Choose one rational point in each of the open intervals defined by the real roots of $F^{[1]}$; 
       \item[(3)] [Lifting] Substitute each sample point of $\RR^{i-1}$ for $\xx_{i-1}$ in $F^{[i]}$ to get a univariate polynomial $F_{i}(x_i)$ and then, by the same method as Base phase, choose sample points for $F_{i}(x_i)$. Repeat the process for $i$ from $2$ to $n$.
\end{enumerate}
\medskip
Sometimes, we say that the polynomial set $F$ produces the open CAD or simply, $F$ is an open CAD.
\end{defn}

\section{Open Weak CAD: Properties and Algorithm}\label{sec:refined}
In this section, we derive some basic properties of Open weak CAD, describe an algorithm (Algorithm \ref{alg:openweakcad}) for computing open weak CAD
and prove its correctness (Theorem~\ref{thm:openweakcad}). 

We first prove two simple but useful properties of open weak delineable. The first one is a transitive property.
\begin{prop}  \label{prop:weaktransitive}
 Let $f_n(\xx_n)\in\ZZ[x_1,\ldots,x_n]$, and $S$ be an open set of $\RR^j (1\le j< n)$. Suppose that there exists $k (j\le k\le n)$ and a polynomial $f_k(\xx_k)\in\ZZ[x_1,\ldots,x_k]$ such that $f_k(\xx_k)$ is open weak delineable on $S$, and $f_n(\xx_n)$ is open weak delineable on $f_k(\xx_k)$. Then $f_n(\xx_n)$ is open weak delineable on $S$.
\end{prop}
\begin{proof}
Let $U\subseteq \RR^n$ be any open connected component defined by $f_n\neq0$ such that $\pi_j^n(U)\cap S\neq\emptyset$. We have $\pi_j^{k^{-1}}(S)\cap\pi_k^n(U)\neq\emptyset$ since $\pi_j^n(U)=\pi_k^n(\pi_j^k(U))$. Let $S'\subseteq \RR^k$ be any open connected component defined by $f_k\neq0$ such that $S'\cap \pi_j^{k^{-1}}(S)\cap\pi_k^n(U)\neq\emptyset$. Now, $S'\subseteq \pi_k^n(U)$ and $S\subseteq\pi_{j}^k(S')$ since $f_n(\xx_n)$ is open weak delineable on $f_k(\xx_k)$, and~$f_k(\xx_k)$ is open weak delineable on $S$. Hence, $\pi_j^n(U)\supseteq \pi_{j}^k(S')\supseteq S$.
\end{proof}
Before stating the next property, let us take an example to illustrate our motivation. Let $f(x,y,z)=x^2+y^2\in\ZZ[x,y,z]$. In this case $f\neq0$ has only one open component $U=\{(x,y)\in\RR^2|x^2+y^2\neq0\}$. Let $S_1=\{(x,y)\in\RR^2|x>0\}$, $S_2=\{(x,y)\in\RR^2|x<0\}$, $S_3=\{(x,y)\in\RR^2|y>0\}$, $S_4=\{(x,y)\in\RR^2|y<0\}$. It is clear that $S_i\subseteq \pi_2^3 (U)$ $(i=1,\ldots,4)$, and $f$ is OWD over $x$ ($y$) in $\RR^2$, respectively. Since $\cup_{i=1}^4 S_i\subseteq \pi_2^3 (U)$ and $\cup_{i=1}^4 S_i=(\RR^2\backslash\zero(x))\cup(\RR^2\backslash\zero(y))=\RR^2\backslash\zero(x,y)=\RR^2\backslash\zero(x^2+y^2)$,
 $f$ is OWD over $x^2+y^2$. We note that $f$ is not OWD over $\gcd(x,y)=1$, while $\zero(\gcd(x,y))$ and $\zero(x^2+y^2)$ only differ at a closed set of codimension $2$. In general, we have the following Proposition.
\begin{prop} \label{prop:weakgcd}
 Let $f(\xx_n)\in\ZZ[\xx_n]$, suppose $f$ is OWD over $p_1,\ldots,p_t$ in $\RR^j$, $f$ is OWD over $p'=\sum_{i=1}^t p_i^2$ in $\RR^j$.
\end{prop}

\begin{proof}
Let $S\subseteq\RR^j$ be any open connected component defined by $p'\neq0$, $U\subseteq\RR^n$ be any open connected set defined by $f\neq0$. Let $S_1=\{\va|\va\in S,(\va\times\RR^{n-j})\cap U\neq\emptyset\}$, $S_2=\{\va|\va\in S,(\va\times\RR^{n-j})\cap U=\emptyset\}$. It is clear that $S=S_1\cup S_2$, $S_1\cap S_2=\emptyset$. For any $\va\in S$, we assume that $p_1(\va)\neq0$. There exists an open set $S_{\va}\subseteq \RR^j$ containing $\va$, such that $p_1(S_{\va})\neq0$. Since $f$ is OWD over $p_1$, either $S_{\va}\subseteq S_1$ or $S_{\va}\subseteq S_2$. Hence, either $S=S_1$ or $S=S_2$ since $S$ is a connected set, and it can not be partitioned into two nonempty subsets which are open. Therefore, $f$ is OWD on $S$, and $f$ is OWD over~$p'$.
\end{proof}

Now the following two theorems follow immediately from the above Propositions. The first one states that the set $$\OWD^j(f)=\{h_j\in\ZZ[\xx_j]|f\text{ is open weak delineable over }h_j\text{ in } \RR^j\}.$$
 is nonempty, so the problem proposed in Section 2 makes sense. 
\begin{thm}\label{thm:owdnepty}
Let $f(\xx_n)\in\ZZ[\xx_n]$, $f$ is OWD over $\Bproj(f,[x_n,\ldots,x_{j}])$ in $\RR^{j-1}$. As a result, the set $\OWD^j(f)$ is nonempty.
\end{thm}
\begin{proof}
By Theorem \ref{thm:McCallum} and Theorem \ref{thm:Brown}, $f$ is OWD over $\Bproj(f,[x_n])$, and $\Bproj(f,[x_n,\ldots,x_{i}])$ is OWD over $\Bproj(f,[x_n,\ldots,x_{i+1}])$ ($j-1\le i\le n$). By Proposition \ref{prop:weaktransitive}, $f$ is OWD over $\Bproj(f,[x_n,\ldots,x_{j}])$ in $\RR^{j-1}$.
\end{proof}
The next theorem says that there is a minimal element in $\OWD^j(f)$ in some sense.
 In the following, we call the number of the open components in $\RR^j$ defined by $p\neq0$
 the {\em scale} of the open weak CAD of $f$ defined by $p$ in $\RR^j$.
\begin{thm}
Let $f(\xx_n)\in\ZZ[\xx_n]$, there exists $p\in\OWD^j(f)$, such that any $p'\in\OWD^j(f)$, $\zero(p)\subseteq\zero(p')$. In particular, the scale of the open weak CAD of $f$ defined by $p$ in $\RR^j$ is minimal.
\end{thm}
\begin{proof}
  Otherwise, for any $p\in\OWD^j(f)$, there exists $p'\in\OWD^j(f)$ such that $\zero(p^2+p'^2)=\zero(p)\cap\zero(p')\subsetneq\zero(p)$. By Proposition \ref{prop:weakgcd}, $p^2+p'^2\in \OWD^j(f)$. Thus, we can find a sequence of polynomials $p_i\in \OWD^j(f)$ ($i=1,2,\ldots$), such that the descending chain of closed sets $\zero(p_1)\supsetneq\zero(p_2)\supsetneq\cdots$ is not stationary, which contradicts with the well-known fact that $\RR^j$ is noetherian under the Zariski Topology.
\end{proof}

We want to obtain an element in $\OWD^j(f)$ as small as possible. A natural way is to apply Theorem \ref{thm:owdnepty} and Proposition \ref{prop:weakgcd}. Let us take $f\in\ZZ[x,y,z,w]$ as an example. According to Theorem \ref{thm:owdnepty}, $f$ is OWD over $p_1=\Bproj(f,[w,z])$ and $p_2=\Bproj(f,[z,w])$ in $\RR^2$, respectively. According to Proposition \ref{prop:weakgcd}, $f$ is OWD over $p'=p_1^2+p_2^2$ in $\RR^2$.
If we want to obtain an element in $\OWD^1(f)$ from $p'$, the simplest way is to apply Brown's projection operator directly, $\Bproj(p',y)\in\OWD^1(f)$. But it is quite possible that $\Bproj(p',y)$ is more complicated than $\Bproj(f,[w,z,y])$, since the degree of $p'$ is twice as much as that of $\Bproj(f,[w,z])$.
For a polynomial $p\in\ZZ[x,y]$, whether $p\in\OWD^2(f)$ or not is only dependent on the real zeros of $p$. It indicates us that, instead of computing $\Bproj(p',y)$ directly, we may find a polynomial $p\in\ZZ[x,y]$, such that $\zero(p)$ and $\zero(p')$ are almost the same, and $\Bproj(p,y)\in\OWD^1(f)$. Roughly speaking, let $p=\gcd(p_1,p_2)$, $p_1'=\frac{p_1}{p}$, $p_2'=\frac{p_2}{p}$. For simplicity, we suppose that $p$ is an irreducible polynomial. It is clear that $\zero(p_1^2+p_2^2)=\zero(p(p_1'^2+p_2'^2))=\zero(p)\cup\zero(p_1'^2+p_2'^2)$. Since $\gcd(p_1',p_2')=1$, intuitively, $\zero(p_1'^2+p_2'^2)=\zero(p_1',p_2')$ is a closed set of codimension $2$. If $p$ is not semi-definite, one can show that $\zero(p)$ is a closed set of codimension $1$. Thus, the two sets $\zero(p)$ and $\zero(p')$ are almost the same. We will show that $f$ is ``almost'' OWD over $\Bproj(p,y)$ in $\RR^1$.

In order to state our results precisely, we introduce the following definitions.

\begin{defn}
Let $Q=\{q_1,\ldots,q_s\}$ be a polynomial set, where $q_i\neq0$. We say that $Q$ is a polynomial set of level $j$ if $q_i\in \RR[x_1,\ldots,x_j]$. Define
$$Q^2=\sum_{i=1}^s q_i^2.$$
Let $\bm{\alpha}\in\RR^j$, define
$$Q(\bm{\alpha})=\{q_i(\bm{\alpha})|i=1,\ldots,s\}.$$
It is clear that $\bm{\alpha}\in\zero(Q)$ if and only if $Q(\bm{\alpha})=\{0\}$.

Let $\bm{x}_{I}=(x_{i_1},\ldots,x_{i_l})$, where $1\le i_1<i_2<\ldots<i_l\le j$. Define
$\coeffs(Q,[\bm{x}_{I}])$ to be the set of all the coefficients of all the polynomials $q_i$ in $Q$ with respect to the indeterminates $\bm{x}_{I}$.

Let $p\in\RR[x_1,\ldots,x_j],$ define
$$pQ=\{pq_1,\ldots,pq_s\}.$$
We say that a polynomial $g\in \RR[x_1,\ldots,x_n]$ is a common factor of $Q$, if $g$ is a factor of $q_i$ for every $i$.

We say that a polynomial set $Q$ of level $j$ has codimension at least two and denote it by $\codim(Q)\ge2$, if for any open connected set $U\subseteq\RR^j$, $U\backslash\zero(Q)$ is still open connected.
\end{defn}
Lemma \ref{lem:codim2} below gives a description of a polynomial set of codimension at least two. Before proving the lemma, we introduce the following result.

\begin{lem}\label{thm:1} \citep{han2016proving}
We have
\begin{enumerate}
\item Let $f$ and $g$ be coprime in $\RR[\xx_n]$. For any connected open set $U$ of $\RR^n$, the open set $V = U\backslash \zero(f,g)$ is also connected.

\item Suppose $f\in \RR[\xx_n]$ is a non-zero squarefree polynomial and $U$ is a connected open set of $\mathbb{R}^{n}$. If $f(\xx_n)$ is semi-definite on $U$, then $U\backslash \zero(f)$ is also a connected open set.
\end{enumerate}
\end{lem}

\begin{lem}\label{lem:connected} Let $f=\gcd(f_1,\ldots,f_m)$ where $f_i\in\RR[\xx_n]$, $i=1,2,\ldots,m.$ Suppose $f$ has no real zeros in a connected open set $U\subseteq\RR^n$, then the open set $V = U\backslash \zero(f_1,\ldots,f_m)$ is also connected.
\end{lem}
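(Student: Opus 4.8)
I would prove Lemma~\ref{lem:connected} by reducing it to Lemma~\ref{thm:1} via an induction on the number $m$ of polynomials, after first clearing away the "common factor" structure so that the gcd becomes trivial. The key observation is that $f=\gcd(f_1,\ldots,f_m)$ means we can write $f_i = f\cdot \tilde f_i$ where $\gcd(\tilde f_1,\ldots,\tilde f_m)=1$, and since $f$ has no zeros in $U$, we have $\zero(f_1,\ldots,f_m)\cap U = \zero(\tilde f_1,\ldots,\tilde f_m)\cap U$. So without loss of generality it suffices to treat the case $\gcd(f_1,\ldots,f_m)=1$ and show $U\setminus\zero(f_1,\ldots,f_m)$ is connected.

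\textbf{Induction step.} For the coprime case, I would induct on $m$. The base case $m=1$: if $\gcd(f_1)=1$ then $f_1$ is a nonzero constant, so $\zero(f_1)=\emptyset$ and there is nothing to prove. (Alternatively, one can start at $m=2$, which is exactly Lemma~\ref{thm:1}.) For the inductive step, set $d=\gcd(f_1,\ldots,f_{m-1})$. Then $\gcd(d,f_m)=\gcd(f_1,\ldots,f_m)=1$, so by Lemma~\ref{thm:1} the set $W := U\setminus\zero(d,f_m)$ is connected and open. Now I want to apply the induction hypothesis on $W$ with the polynomials $f_1,\ldots,f_{m-1}$, whose gcd is $d$: on $W$, does $d$ have no zeros? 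Not necessarily — $W$ only removes the \emph{common} zeros of $d$ and $f_m$. So I need to be more careful: the right intermediate set is obtained by removing from $U$ the locus where $d$ vanishes \emph{and} we are still inside $\zero(f_1,\ldots,f_{m-1})$... but $\zero(f_1,\ldots,f_{m-1})=\zero(d,\ldots)$ is contained in $\zero(d)$, so $\zero(f_1,\ldots,f_{m-1})\subseteq\zero(d)$, and on the open set $W$ we have $\zero(f_1,\ldots,f_{m-1})\cap W \subseteq \zero(d)\cap W = \zero(d)\cap\zero(f_m)\cap W = \emptyset$ — wait, that is wrong since $\zero(d)\cap W$ need not be empty. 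Let me instead use: $\zero(f_1,\ldots,f_m)\cap U = (\zero(f_1,\ldots,f_{m-1})\cap U)\cap\zero(f_m)$, and observe that removing $\zero(f_1,\ldots,f_{m-1})$ from $U$ is handled by the induction hypothesis (which requires $d$ to have a zero-free set — it doesn't here), so the cleaner route is the \emph{two-at-a-time} peeling combined with the coprime normalization at each stage.

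\textbf{The cleaner argument I would actually write.} Normalize so $\gcd(f_1,\ldots,f_m)=1$ as above. Let $g=\gcd(f_1,\ldots,f_{m-1})$; then $\gcd(g,f_m)=1$. By Lemma~\ref{thm:1}, $V_1 := U\setminus\zero(g,f_m)$ is connected. Since $\zero(f_1,\ldots,f_{m-1})\subseteq\zero(g)$, we get $\zero(f_1,\ldots,f_m)\cap U = \zero(f_1,\ldots,f_{m-1})\cap\zero(f_m)\cap U \subseteq \zero(g)\cap\zero(f_m)\cap U = \zero(g,f_m)\cap U$; hence actually I only need that removing $\zero(f_1,\ldots,f_m)$ does no more damage than removing $\zero(g,f_m)$. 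But that gives the wrong direction (removing \emph{less} can disconnect). The correct move: write $V = U\setminus\zero(f_1,\ldots,f_m)$. Decompose $V = V_1 \cup (V\setminus V_1) = V_1 \cup \big((\zero(g,f_m)\setminus\zero(f_1,\ldots,f_m))\cap U\big)$. On the piece $\zero(g,f_m)\cap U$, which is where $g=f_m=0$, the polynomials $f_1,\ldots,f_{m-1}$ all vanish too (since $g\mid f_i$), so this piece is disjoint from... no. I think the genuinely correct and simplest path is: induct, and in the inductive step apply the hypothesis \emph{first} to peel $\zero(f_1,\ldots,f_{m-1})$ — but as noted the hypothesis needs $g$ zero-free. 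So one must instead restructure: prove by induction on $m$ the statement with the coprimeness already assumed, using at step $m$ that $U' := U\setminus\zero(f_1,\ldots,f_{m-1})$ need not be connected, but each of its components $C$ satisfies $\gcd(f_1,\ldots,f_{m-1})\big|_C$ having a zero-free... still circular.

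\textbf{Resolution and the main obstacle.} The obstacle is precisely this: the induction hypothesis as stated requires the gcd of the \emph{fewer} polynomials to be zero-free on the ambient set, which is a condition one cannot propagate naively. I would fix this by strengthening the induction or, better, by the following direct argument avoiding induction on $m$ entirely: let $g=\gcd(f_1,\ldots,f_m)$; by hypothesis $\zero(g)\cap U=\emptyset$. Consider $V=U\setminus\zero(f_1,\ldots,f_m)$ and suppose for contradiction $V=A\sqcup B$ with $A,B$ nonempty, open, disjoint. Pick $p\in A$, $q\in B$ and a path $\gamma$ in $U$ from $p$ to $q$; $\gamma$ must meet $Z:=\zero(f_1,\ldots,f_m)\cap U$. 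Now $Z = \bigcap_i \zero(f_i)$; since the $f_i$ have no common factor, $Z$ is a real algebraic set of codimension $\geq 2$ in $U$ (the common zero set of polynomials with trivial gcd cannot contain a hypersurface — any irreducible component of a hypersurface inside $\bigcap\zero(f_i)$ would force a common factor). A connected open set minus a codimension-$\geq 2$ (closed) subset is still connected, giving the contradiction. So the real content is: \textbf{the common zero set of polynomials with trivial gcd has codimension at least $2$}, which itself follows by writing $f_i = g\tilde f_i$, $\gcd(\tilde f_i)=1$, and noting no irreducible hypersurface $\zero(p)$ lies in $\bigcap\zero(\tilde f_i)$ since that would make $p$ a common divisor. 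I expect this codimension argument — making "codimension $\geq 2$ doesn't disconnect" rigorous over $\RR$ — to be the main technical point, though it is standard; the authors may well prefer the slicker induction using Lemma~\ref{thm:1} with a careful bookkeeping of which zero-loci survive, which I would present as the primary proof and keep the codimension remark as intuition.
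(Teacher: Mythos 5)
Your normalization to the case $\gcd(f_1,\ldots,f_m)=1$ and your diagnosis of why the naive induction stalls are both correct, but the proposal does not close the gap: the induction you attempt is abandoned as ``circular,'' and the codimension-$\ge 2$ argument you fall back on is left as a sketch resting on two unproven (and, over $\RR$, genuinely delicate) facts --- that a trivial gcd forces the common real zero set to have local dimension at most $n-2$, and that deleting such a closed set cannot disconnect an open set. That route also discards Lemma \ref{thm:1} entirely and re-derives its content from scratch, which is exactly what the reduction is meant to avoid. Along the way there is a false step: you assert $\zero(f_1,\ldots,f_{m-1})\subseteq\zero(g)$ for $g=\gcd(f_1,\ldots,f_{m-1})$, but the inclusion goes the other way ($g\mid f_i$ gives $\zero(g)\subseteq\zero(f_i)$); for instance $f_1=x$, $f_2=y$ have $g=1$, so $\zero(g)=\emptyset$ while $\zero(f_1,f_2)=\{(0,0)\}$.

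The idea you are missing is to \emph{enlarge} the deleted set rather than shrink it. With $g=\gcd(f_1,\ldots,f_{m-1})$ and $g_i=f_i/g$, every common zero of $f_1,\ldots,f_m$ either satisfies $g=f_m=0$ or satisfies $g_1=\cdots=g_{m-1}=0$; hence $A:=\zero(f_1,\ldots,f_m)\subseteq B:=\zero(g_1,\ldots,g_{m-1})\cup\zero(g,f_m)$. The set $U\setminus B$ is connected: the induction hypothesis applies to $g_1,\ldots,g_{m-1}$ (their gcd is $1$), and then Lemma \ref{thm:1} applies to the coprime pair $g,f_m$ on the resulting connected open set. Since $B\cap U$ is closed and nowhere dense in $U$, the set $U\setminus B$ is dense in $U$, so $U\setminus B\subseteq U\setminus A\subseteq\overline{U\setminus B}$, and any set sandwiched between a connected set and its closure is connected. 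This also disposes of your worry that ``removing less can disconnect'': for closed nowhere-dense sets it cannot, precisely by this sandwich argument, and that observation is the hinge of the paper's proof.
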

\begin{proof}
Without loss of generality, we may assume that $f=1$. If $m=1,$ the result is obvious. The result of case $m=2$ is just the claim of Lemma \ref{thm:1}. For $m\ge3$, let $g=\gcd(f_1,\ldots,f_{m-1})$ and $g_i=f_i/g$ $(i=1,\ldots,m-1)$, then $\gcd(f_m,g)=1$ and $\gcd(g_1,\ldots,g_{m-1})=1$. Let $A=\zero(f_1,\ldots,f_m)$, $B=\zero(g_1,\ldots,g_{m-1})\bigcup \zero(g,f_m)$.
Since $A\subseteq B$, we have $U\backslash B\subseteq U\backslash A$. Notice that the closure of ${U\backslash B}$ equals the closure of ${U\backslash A}$, it suffices to prove that $U\backslash B$ is connected, which follows directly from Lemma \ref{thm:1} and the induction.
\end{proof}

\begin{lem}\label{lem:codim2}
Let $Q=\{q_1,\ldots,q_s\}$ be a polynomial set of level $j$. $\codim(Q)\ge2$ if and only if for any common factor $g$ of $Q$, $g$ is semi-definite on $\RR^n$.
\end{lem}
\begin{proof}
  If $\codim(Q)\ge2$, $U\backslash\zero(Q)$ is connected for any open connected set $U\subseteq \RR^j$. It is obvious that $\zero(g)\subseteq\zero(Q)$, and
  $$U\backslash\zero(Q)\subseteq U\backslash\zero(g).$$
  Notice that the closure of $U\backslash\zero(g)$ equals the closure of $U\backslash\zero(Q)$, $U\backslash\zero(g)$ is open connected. In particular, $\RR^j\backslash\zero(g)$ is open connected, and $g$ is semi-definite on $\RR^j$ since $g$ is sign invariant on $\RR^j\backslash\zero(g)$.

If for any common factor $g$ of $Q$, $g$ is semi-definite on $\RR^n$. Let $q=\gcd(q_1,\ldots,q_s)$, $V=U\backslash\zero(q)$, $q_i=qq_i'$, $Q'=\{q_1',\ldots,q_s'\}$. By assumption, $q$ is semi-definite on $\RR^n$, and $V$ is open connected by Lemma \ref{thm:1}. According to Lemma \ref{lem:connected},
$$U\backslash\zero(Q)=V\backslash\zero(Q')$$
is open connected since $\gcd(q_1',\ldots,q_s')=1$.
\end{proof}

By Lemma \ref{lem:codim2}, any common factor $g$ of a polynomial set $Q$ of codimension at least 2 is semi-definite, by Theorem 4.5.1 in \citep{bochnak2013real}, $\dim(\zero(g))\le j-2$. In fact, we can show that $\dim(\zero(Q))\le j-2$. That's why we call $Q$ has codimension at least 2.

\begin{defn}\label{def:generalweakopendeli}
Let $f\in\RR[x_1,\ldots,x_n]$, $p\in \RR[x_1,\ldots,x_j]$ for $1\le j\le n,$ and $Q=\{q_1,\ldots,q_s\}$ is a polynomial set of level $j$, $q_i\neq0$. We say that $f$ is \emph{OWD over} $p$ \emph{w.r.t.} $Q$ (in $\RR^j$), if for any open connected component $S$ of $p\neq0$ in $\RR^j$, $f$ is OWD on $S\backslash\zero(Q)$. We also say that $f$ is \emph{OWD} over $p$ \emph{in general}.
\end{defn}
\begin{rem}
If $f$ is OWD over $p$, it is clear that $f$ is OWD over $p$ w.r.t. $\{1\}$. If $f$ is OWD over $p$ w.r.t. $Q_1$, and $\zero(Q_1)\subseteq\zero(Q_2)$, by definition, $f$ is OWD over $p$ w.r.t. $Q_2$ since $f$ is OWD on $S\backslash \zero(Q_2)\subseteq S\backslash \zero(Q_1)$ for any open connected component $S$ of $p\neq0$ in $\RR^j$. In particular, if $f$ is OWD over $p$ (w.r.t. $\{1\}$), $f$ is OWD over $p$ w.r.t. $Q$ for any polynomial set $Q$ of level $j$.
\end{rem}

The following lemma shows that the above definition is just a variant of OWD when $\codim(Q)\ge2$. In the rest of this paper, we will switch the two notations freely. 
\begin{lem}\label{lem:owdeq}
  Let $f\in\RR[x_1,\ldots,x_n]$, $p\in \RR[x_1,\ldots,x_j]$, $Q$ is a polynomial set of level $j$. If $f$ is OWD over $p$ w.r.t. $Q$, $f$ is OWD over $pQ^2$. If $\codim(Q)\ge2$, $f$ is OWD over $pQ^2$ if and only if $f$ is OWD over $p$ w.r.t. $Q$. 
\end{lem}
\begin{proof}
If $f$ is OWD over $p$ w.r.t. $Q$. Let $S'$ be any open connected component of $pQ^2\neq0$, there there exists a unique open connected component $S$ of $p\neq0$, such that $S'\subseteq S$. Since $Q^2(S')\neq0$, $S'\subseteq S\backslash\zero(Q)$.  $f$ is OWD on $S'$ since $f$ is OWD on $S\backslash\zero(Q)$.

If $\codim(Q)\ge2$, and $f$ is OWD over $pQ^2$. Let $S$ be any open connected component of $p\neq0$, and $S'$ be any open connected component of $pQ^2\neq0$, such that $S'\subseteq S$. We have $S'\subseteq S\backslash\zero(Q)$. Now $pQ^2(S\backslash\zero(Q))\neq0$, and $S\backslash\zero(Q)$ is open connected since $\codim(Q)\ge2$. Thus, $S\backslash\zero(Q)\subseteq S'$. Hence, $S\backslash\zero(Q)=S'$, and $f$ is OWD on $S\backslash\zero(Q)$.
\end{proof}

The following two theorems are analogous to Proposition \ref{prop:weaktransitive} and Proposition \ref{prop:weakgcd}. 
\begin{thm}\label{thm:weaktran}
  Let $f\in\RR[\xx_n]$, $1\le j<k<n$, $p_1\in\RR[\xx_k]$, $p_2\in\RR[\xx_j]$, $Q_1$ is a polynomial set of level $k$. Suppose $f$ is OWD over $p_1$ w.r.t. $Q_1$, and $p_1$ is OWD over $p_2$. $f$ is OWD over $p_2$ w.r.t. $Q_2=\coeffs(Q_1,[x_{k},\ldots,x_{j+1}])$. Furthermore, if $\codim(Q_1)\ge2$, $\codim(Q_2)\ge2.$  
\end{thm}
\begin{proof}
  Let $S''\subseteq\RR^j$ be any open component of $p_2\neq0$, $S=S''\backslash \zero(Q_2)$. We prove that $f$ is OWD on $S$. Let $U\subseteq\RR^n$ be any open component of $f\neq0$, suppose $S\cap \pi_{j}^n(U)\neq\emptyset$. It is clear that $\pi_j^{k^{-1}}(S)\cap\pi_{k}^n(U)\neq\emptyset$. Let $S'\subseteq\RR^k$ be any open component of $p_1\neq0$, such that $(S'\cap\pi_j^{k^{-1}}(S))\cap\pi_{k}^n(U)\neq\emptyset$. $(S'\backslash\zero(Q_1))\cap\pi_{k}^n(U)\neq\emptyset$ is nonempty since $S'\cap\pi_{k}^n(U)$ is a nonempty open set. Thus, $S'\backslash\zero(Q_1)\subseteq\pi_{k}^n(U)$ since $f$ is OWD over $p_1$ w.r.t. $Q_1$. We only need to show that $S\subseteq \pi_{j}^{k}(S'\backslash\zero(Q_1))$. Since $S\cap\pi_{j}^{k}(S')\neq\emptyset$ and $p_1$ is OWD over $p_2$, $S\subseteq S''\subseteq\pi_{j}^{k}(S')$. For any $\bm{\alpha}\in S$, let $\bm{\beta}\in \RR^{k-j}$ such that $(\bm{\alpha},\bm{\beta})\in S'$. $Q_2^2(\bm{\alpha})\neq0$ implies that $Q_2(\bm{\alpha})\neq\{0\}$. Thus, there exists a polynomial $q\in Q_2$, such that $q(\bm{\alpha},x_{j+1},\ldots,x_k)\neq0$. Let $U_{\bm{\beta}}\subseteq\RR^{k-j}$ be a neighborhood of $\bm{\beta}$, such that $(\bm{\alpha},U_{\bm{\beta}})\in S'$, $Q_1(\bm{\alpha},U_{\bm{\beta}})\neq\{0\}$, and
   $(\bm{\alpha},U_{\bm{\beta}})\cap (S'\backslash\zero(Q_1))\neq\emptyset$. This indicates that $\bm{\alpha}\in\pi_{j}^{k}(S'\backslash\zero(Q_1)).$

 If $\codim(Q_1)\ge2$, let $g$ be any common factor of $Q_2$. It is clear that $g$ is a common factor of $Q_1$. Since $\codim(Q_1)\ge2$, $g$ is semi-definite on $\RR^k$ by Lemma \ref{lem:codim2}. By Lemma \ref{lem:codim2} again, $\codim(Q_2)\ge2$.
The theorem is proved.
\end{proof}
As a special case of Theorem \ref{thm:weaktran}, when $k=j+1$, and the set $Q_=\{q_1\}$ has only one polynomial. $f$ is OWD over $p_2$ w.r.t. $Q_2=\coeffs(\{q_1\},[x_{k}])$. In particular, $f$ is OWD over $p_2$ w.r.t. $\lc(q_1,x_{k})$.

One benefit of Theorem \ref{thm:weaktran} is that we can reduce the computational complexity when we apply Brown's projection operator. Namely, suppose $k=j+1$, $f$ is OWD over $p_1q_1$, where $q_1$ is an irreducible polynomial and is semi-definite on $\RR^k$. If we apply Brown's projection operator directly, $f$ is OWD over $\Bproj(p_1q_1,x_k)$. Now we use Theorem \ref{thm:weaktran} to get a simpler but stronger result. By Lemma \ref{lem:owdeq}, $f$ is OWD over $p_1$ w.r.t. $q_1$. According to Theorem \ref{thm:weaktran}, $f$ is OWD over $\Bproj(p_1,x_k)$ w.r.t. $\lc(q_1,x_{k})$. By Lemma \ref{lem:owdeq} again, $f$ is OWD over $\lc(q_1,x_{k})\Bproj(p_1,x_k),$ which is a factor of $\Bproj(p_1q_1,x_k)$.

\begin{thm}\label{thm:weakgcd}
   Let $f\in\RR[\xx_n]$, $p_i\in\RR[\xx_j]$, $Q_i$ is a polynomial set of level $j$ ($i=1,2,\ldots,t$), $p=\gcd(p_1,\ldots,p_t)$, $p_i'=\frac{p_i}{p}$. Suppose $f$ is OWD over $p_i$ w.r.t. $Q_i$. $f$ is OWD over $p$ w.r.t. $Q=\bigcup_{i=1}^tp_i'Q_i$, and $\zero(pQ^2)\subseteq\zero(p_iQ_i^2)$ for any $i$. Furthermore, if $\codim(Q_i)\ge2$, $\codim(Q)\ge2$.
\end{thm}
\begin{proof}
Let $S$ be any open connected component defined by $p\neq0$, $\mathcal{A}_i$ be the set of open components of $p_i\neq0$ in $S$. By definition, $S\backslash{\zero(p_i)}=\bigcup_{S_{i}\in\mathcal{A}_i} S_{i}$, and

Let $U\subseteq\RR^n$ be any open connected set defined by $f\neq0$, $$\mathcal{B}_i=\{S_{i}|S_{i}\in\mathcal{A}_i, S_{i}\backslash{\zero(Q_i)}\subseteq \pi_{j}^n(U)\},$$
$$\mathcal{C}_i=\{S_{i}|S_{i}\in\mathcal{A}_i, S_{i}\backslash{\zero(Q_i)}\cap \pi_{j}^n(U)=\emptyset\}.$$
Since $f$ is OWD over $p_i$ w.r.t. $Q_i$, $\mathcal{A}_i=\mathcal{B}_i\cup \mathcal{C}_i$.
For any $S'\in \mathcal{B}_i$, and $S''\in \mathcal{C}_j$, $S'\cap S''=\emptyset$ since $$(S'\backslash{\zero(Q_i)})\cap (S''\backslash{\zero(Q_j)})\subseteq \pi_{j}^n(U)\cap (S''\backslash{\zero(Q_j)})=\emptyset.$$
Let $$B=\bigcup_{1\le i\le t,S_{i}\in\mathcal{B}_i} S_{i}, B'=\bigcup_{1\le i\le t,S_{i}\in\mathcal{B}_i} (S_{i}\backslash{\zero(Q_i)}),$$
$$C=\bigcup_{1\le i\le t,S_{i}\in\mathcal{C}_i} S_{i}, C'=\bigcup_{1\le i\le t,S_{i}\in\mathcal{C}_i} (S_{i}\backslash{\zero(Q_i)}).$$
We have $B\cap C=\emptyset$, and
$$S\backslash{\zero(p_1,\ldots,p_t)}=\bigcup_{1\le i\le t} S\backslash{\zero(p_i)}=B\cup C.$$
By Lemma $\ref{lem:connected}$, $V=S\backslash{\zero(p_1,\ldots,p_t)}$ is open connected, and can not be
partitioned into two nonempty subsets which are open. Hence, either $B=\emptyset$ or $C=\emptyset$.
Since $$S\backslash{\zero(Q)}=\bigcup_{1\le i\le t} S\backslash{\zero(p_i'Q_i)}=\bigcup_{1\le i\le t} S\backslash{\zero(p_iQ_i)}=B'\cup C',$$
and $B'\subseteq B$, $C'\subseteq C$, either $B'=\emptyset$ or $C'=\emptyset$. Thus, either $S\backslash{\zero(Q)}\subseteq \pi_{j}^n(U)$ or $(S\backslash{\zero(Q)})\cap \pi_{j}^n(U)=\emptyset$. Therefore, $f$ is OWD over $p$ w.r.t. $Q$.

We have
$$\zero(pQ^2)=\zero(p)\cup\zero(Q^2)\subseteq\zero(p)\cup\zero(p_i'Q_i^2)=\zero(p_iQ_i^2),$$
for any $i$.

Since $\gcd(p_1',\ldots,p_t')=1$, any common factor $g$ of $Q$ must be a common factor of $Q_i$ for some $i$. If $\codim(Q_i)\ge2$, by Lemma \ref{lem:codim2}, $g$ is semi-definite on $\RR^j$. By Lemma \ref{lem:codim2} again, $\codim(Q)\ge2$.
\end{proof}
We can apply Theorem \ref{thm:weaktran}, Theorem \ref{thm:weakgcd} and Brown's projection operator $\Bproj$ to get a weak open CAD with ``smaller'' scale now. let us take $f\in\ZZ[x,y,z,w]$ again as an example.
According to Theorem \ref{thm:owdnepty}, $f$ is OWD over $\Hproj(f,[w,z],z)=\Bproj(f,[w,z])$ and $\Hproj(f,[w,z],w)=\Bproj(f,[z,w])$ in $\RR^2$, respectively. Let $$\Hproj(f,[w,z])=\gcd(\Hproj(f,[w,z],z),\Hproj(f,[w,z],w)),$$ $${\Hproj^{\vartriangle}(f,[w,z],w)}=\frac{\Hproj(f,[w,z],w)}{\Hproj(f,[w,z])},$$ $${\Hproj^{\vartriangle}(f,[w,z],z)}=\frac{\Hproj(f,[w,z],z)}{\Hproj(f,[w,z])}.$$
According to Theorem \ref{prop:weakgcd}, $f$ is OWD over $\Hproj(f,[w,z])$ w.r.t. $${\Hproj^{\ast}(f,[w,z])}=\{{\Hproj^{\vartriangle}(f,[w,z],w)},{\Hproj^{\vartriangle}(f,[w,z],z)}\}$$
in $\RR^2$, and $f$ is OWD over $h_2(f)=\Hproj(f,[w,z]){\Hproj^{\ast}(f,[w,z])}^2$.

According to Theorem \ref{thm:weaktran}, $f$ is OWD over $$\Hproj(f,[w,z,y],y)=\Bproj(\Hproj(f,[w,z]),y)$$
w.r.t. $${\Hproj^{\ast}(f,[w,z,y],y)}=\coeffs({\Hproj^{\ast}(f,[w,z])},[y])$$
in $\RR$.

Similarly, we can define
$$\Hproj(f,[w,z,y],z),\, \Hproj(f,[w,z,y],w),\, \Hproj(f,[w,z,y]),$$
$${\Hproj^{\ast}(f,[w,z,y],z)},\,\Hproj^{\ast}(f,[w,z,y],w),$$
and
$${\Hproj^{\vartriangle}(f,[w,z,y],y)},{\Hproj^{\vartriangle}(f,[w,z,y],z)},\,\Hproj^{\vartriangle}(f,[w,z,y],w).$$
Define
\begin{align*}
\Hproj^{\ast}(f,[w,z,y])=&\Hproj^{\vartriangle}(f,[w,z,y],y)\Hproj^{\ast}(f,[w,z,y],y)\bigcup\\
& \Hproj^{\vartriangle}(f,[w,z,y],z)\Hproj^{\ast}(f,[w,z,y],z)\bigcup\\
&\Hproj^{\vartriangle}(f,[w,z,y],w)\Hproj^{\ast}(f,[w,z,y],w)
\end{align*}
By applying Theorem \ref{thm:weakgcd} again, one can show that $f$ is OWD over $\Hproj(f,[w,z,y])$ w.r.t. $\Hproj^{\ast}(f,[w,z,y])$, or equivalently, $f$ is OWD over
$$h_1(f)=\Hproj(f,[w,z,y])\Hproj^{\ast}(f,[w,z,y])^2.$$

This procedure could apply to any polynomial $f\in\ZZ[\xx_n]$. In order to present our results in general, we need to introduce \emph{open weak CAD projection operator} $\Hproj$.
\begin{defn}[Open weak CAD projection operator]\label{def:hp}
Let $f\in \ZZ[x_1,\ldots,x_n]$. 
For given $m (1\le m\le n),$ denote $[\yy]=[y_1,\dots,y_{m}]$ where $ y_i \in \{x_1,\dots,x_n\}$ for $1\le i\le m$ and $y_i\neq y_j$ for $i\neq j$. For $1\le i\le m$,
$\Hproj(f,[\yy],y_i)$ and $\Hproj(f,[\yy])$ are defined recursively as follows.
        \begin{align*}
                 \Bproj(f,[y_i])    &=\Res(\sqrfree(f),\frac{\partial \sqrfree(f)}{\partial y_{i}}, y_{i}),\\
                \Hproj(f,[\yy],y_i) &=\Bproj(\Hproj(f,[\hat{\yy}_i],[y_i]),\\
                \Hproj(f,[\yy])     &=\gcd(\Hproj(f,[\yy],y_1),\ldots,\Hproj(f,[\yy],y_m)),\\
                \Hproj(f,[~])       &=f,
    \end{align*}
where $\hat{[\yy]}_i=[y_1,\ldots,y_{i-1},y_{i+1},\ldots,y_{m}]$.

We define ${\Hproj^{\vartriangle}(f,[\yy],y_i)}$, $\Hproj^{\ast}(f,[\yy],y_i)$, $\Hproj^{\ast}(f,[\yy])$ recursively as follows.
\begin{align*}
\Hproj^{\ast}(f,[~])&=\{1\},\\
{\Hproj^{\ast}(f,[\yy],y_i)}&=\coeffs({\Hproj^{\ast}(f,[\hat{\yy}_i])},y_i),\\
{\Hproj^{\vartriangle}(f,[\yy],y_i)}&=\frac{\Hproj(f,[\yy],y_i)}{\Hproj(f,[\yy])},\\
{\Hproj^{\ast}(f,[\yy])}&=\bigcup_{i=1}^m {\Hproj^{\vartriangle}(f,[\yy],y_i)}{\Hproj^{\ast}(f,[\yy],y_i)}.
\end{align*}
\end{defn}

\begin{ex}\label{ex:step}
We have
\begin{align*}
\Hproj(f,\left[  x_{1},x_{2}\right]  )  & =\gcd\left(  \Hproj\left(  f,\left[
x_{1},x_{2}\right]  ,x_{1}\right)  ,\Hproj\left(  f,\left[  x_{1}%
,x_{2}\right]  ,x_{2}\right)  \right),  \\
\Hproj\left(  f,\left[  x_{1},x_{2}\right]  ,x_{1}\right)    &
=\Bproj(\Hproj(f,\left[  x_{2}\right]  ),[x_{1}]),\\
\Hproj\left(  f,\left[  x_{1},x_{2}\right]  ,x_{2}\right)    &
=\Bproj(\Hproj(f,\left[  x_{1}\right]  ),[x_{2}]),\\
\Hproj(f,\left[  x_{2}\right]  )  & =\Hproj(f,[x_{2}],x_{2}),\\
\Hproj(f,\left[  x_{1}\right]  )  & =\Hproj(f,[x_{1}],x_{1}),\\
\Hproj(f,[x_{2}],x_{2})  & =\gcd(\Bproj(\Hproj(f,[~]),[x_{2}])),\\
\Hproj(f,[x_{1}],x_{1})  & =\gcd(\Bproj(\Hproj(f,[~]),[x_{1}])),\\
\Hproj(f,[~])  & =f.
\end{align*}
Condensing the above expressions, we have%
\[
\Hproj(f,\left[  x_{1},x_{2}\right]  )=\gcd\left(  \Bproj(\Bproj(f,[x_{2}%
]),[x_{1}]),\Bproj(\Bproj(f,[x_{1}]),[x_{2}])\right).
\]
We have
\begin{align*}
{\Hproj^{\ast}(f,[~])}&=\{1\},\\
{\Hproj^{\ast}(f,[x_1],x_1)}&=\coeffs({\Hproj^{\ast}(f,[~])},x_1)=\{1\},\\
{\Hproj^{\vartriangle}(f,[x_1],x_1)}&=\frac{\Hproj(f,[x_1],x_1)}{\Hproj(f,[x_1])}=1,\\
{\Hproj^{\ast}(f,[x_1])}&={\Hproj^{\vartriangle}(f,[x_1],x_1)}{\Hproj^{\ast}(f,[x_1],x_1)}=\{1\}.
\end{align*}
Similarly,
$${\Hproj^{\ast}(f,[x_2])}={\Hproj^{\vartriangle}(f,[x_2],x_2)}{\Hproj^{\ast}(f,[x_2],x_2)}=\{1\}.$$
\begin{align*}
{\Hproj^{\ast}(f,[x_2,x_1],x_2)}&=\coeffs({\Hproj^{\ast}(f,[x_1])},x_2)=\{1\},\\
{\Hproj^{\ast}(f,[x_2,x_1],x_1)}&=\coeffs({\Hproj^{\ast}(f,[x_2])},x_1)=\{1\},\\
{\Hproj^{\vartriangle}(f,[x_2,x_1],x_1)}&=\frac{\Hproj(f,[x_2,x_1],x_1)}{\Hproj(f,[x_2,x_1])},\\
{\Hproj^{\vartriangle}(f,[x_2,x_1],x_2)}&=\frac{\Hproj(f,[x_2,x_1],x_2)}{\Hproj(f,[x_2,x_1])},\\
{\Hproj^{\ast}(f,[x_2,x_1])}&=\bigcup_{i\in\{1,2\}}{\Hproj^{\vartriangle}(f,[x_2,x_1],x_i)}{\Hproj^{\ast}(f,[x_2,x_1],x_i)}\\
&=\{{\Hproj^{\vartriangle}(f,[x_2,x_1],x_1)},{\Hproj^{\vartriangle}(f,[x_2,x_1],x_2)}\}.
\end{align*}
\end{ex}
The Algorithm \ref{alg:openweakcad} ({\tt Projection phase of open weak CAD}) based on the open weak CAD projection operator $\Hproj$ solves the problem proposed in Section~\ref{sec:problem}.

\begin{algorithm}[!ht]
\caption{{Projection polynomials of open weak CAD}}
\label{alg:openweakcad}
\medskip
\begin{description}[leftmargin=3em,style=nextline,itemsep=0.5em]
\item[\sf In:]   $f\in \ZZ[x_1,\ldots,x_n]$
\item[\sf Out:]  $h_1, \ldots, h_{n-1}$ where $h_j\in \ZZ[x_1,\ldots,x_j]$
such that $f$ is open weak delineable over $h_j$ in $\RR^j$.
\item[\sf 1:]  For all $1\le j\le n-1$, compute
\[
\Hproj(f,[x_n,\ldots,x_{j+1}]) \;\;\;\;\;\; \text{and}\;\;\;\;\;\;
{\Hproj^{\ast}(f,[x_n,\ldots,x_{j+1}])}
\]
using Definition \ref{def:hp}.
\item[\sf 2:]  For all $1\le j \le n-1$, compute       \[h_j(f)=\Hproj(f,[x_n,\ldots,x_{j+1}])\cdot{\Hproj^{\ast}(f,[x_n,\ldots,x_{j+1}])}^2.\]
\end{description}
\end{algorithm}
\begin{ex}\label{ex:1-1}
We illustrate the Algorithm \ref{alg:openweakcad} using the polynomial $f$ from Example \ref{ex:1}.
\medskip
\begin{description}[leftmargin=3em,style=nextline,itemsep=0.5em]
\item[\sf In:]   $f=(x_3^2+x_2^2+x_1^2-1)(4x_3+3x_2+2x_1-1)\in \ZZ[x_1, x_2,x_3]$

\item[\sf 1:] Note
\[\begin{array}{ll}
\Hproj(f,[x_3])&=\Hproj(f,[x_3],x_3)=\Bproj(\Hproj(f,[~]),[x_3])=\Bproj(f,[x_3])\\
& =(x_2^2+x_1^2-1)(25x_2^2+12x_2x_1+20x_1^2-6x_2-4x_1-15),\\
\Hproj(f,[x_3, x_2],x_2) & =\Bproj(\Hproj(f,[x_3]),[x_2])=\Bproj(\Hproj(f,[x_3],x_3), [x_2])\\
  & = (x_1-1)(x_1+1)(29x_1^2-4x_1-24)(13x_1^2-4x_1-8), \\
\Hproj(f,[x_3, x_2],x_3) & = \Bproj(\Hproj(f,[x_2]),[x_3])=\Bproj(\Hproj(f,[x_2],x_2), [x_3])\\
  & = (x_1-1)(x_1+1)(29x_1^2-4x_1-24)(20x_1^2-4x_1-15),\\
\Hproj(f,[x_3,x_2])&=\gcd(\Hproj(f,[x_3, x_2],x_2),\Hproj(f,[x_3, x_2],x_3))\\
&=(x_1-1)(x_1+1)(29x_1^2-4x_1-24). \\
\\
{\Hproj^{\ast}(f,[x_3])}&={\Hproj^{\vartriangle}(f,[x_3],x_3)}{\Hproj^{\ast}(f,[x_3],x_3)}=\{1\},\\
{\Hproj^{\vartriangle}(f,[x_3,x_2],x_2)}&=\frac{\Hproj(f,[x_3,x_2],x_2)}{\Hproj(f,[x_3,x_2])}=13x_1^2-4x_1-8,\\
{\Hproj^{\vartriangle}(f,[x_3,x_2],x_3)}&=\frac{\Hproj(f,[x_3,x_2],x_3)}{\Hproj(f,[x_3,x_2])}=20x_1^2-4x_1-15,\\
{\Hproj^{\ast}(f,[x_3,x_2])}&=\{{\Hproj^{\vartriangle}(f,[x_3,x_2],x_2)},{\Hproj^{\vartriangle}(f,[x_3,x_2],x_3)}\}\\
&=\{13x_1^2-4x_1-8,20x_1^2-4x_1-15\}.
\end{array}\]
\item[\sf 2:] Note
\[\begin{array}{ll}
h_1(f) &=\Hproj(f,[x_3,x_2]){\Hproj^{\ast}(f,[x_3,x_2])}^2\\
    & =(x_1-1)(x_1+1)(29x_1^2-4x_1-24)((20x_1^2-4x_1-15)^2+(13x_1^2-4x_1-8)^2),\\
h_2(f) &=\Hproj(f,[x_3]){\Hproj^{\ast}(f,[x_3])}^2 \\
    &=(x_2^2+x_1^2-1)(25x_2^2+12x_2x_1+20x_1^2-6x_2-4x_1-15).
\end{array}
\]
\item[\sf Out:]  $h_1(f) =(x_1-1)(x_1+1)(29x_1^2-4x_1-24)((20x_1^2-4x_1-15)^2+(13x_1^2-4x_1-8)^2),$
\item[]          $h_2(f) =(x_2^2+x_1^2-1)(25x_2^2+12x_2x_1+20x_1^2-6x_2-4x_1-15).$
\end{description}
\end{ex}
\begin{rem}
  If $\zero({\Hproj^{\ast}(f,[x_n,\ldots,x_{j+1}])})=\emptyset$, we can get a simpler expression of $h_j(f)$ by computing $h_j(f)=\Hproj(f,[x_n,\ldots,x_{j+1}])$ instead of computing $$h_j(f)=\Hproj(f,[x_n,\ldots,x_{j+1}])\cdot{\Hproj^{\ast}(f,[x_n,\ldots,x_{j+1}])}^2,$$ since they have the same real zeros. When $n=3$ and $j=1$, it is always the case, since ${\Hproj^{\vartriangle}(f,[x_3,x_2],x_2)},{\Hproj^{\vartriangle}(f,[x_3,x_2],x_3)}$ are two coprime polynomials of one variable and $\zero({\Hproj^{\ast}(f,[x_3,x_{2}])})=\zero({\Hproj^{\vartriangle}(f,[x_3,x_2],x_2)},{\Hproj^{\vartriangle}(f,[x_3,x_2],x_3)})=\emptyset.$ Hence, in the above example, we can get a simper expression of $h_1(f)$,
  $$h_1(f) =(x_1-1)(x_1+1)(29x_1^2-4x_1-24).$$
\end{rem}
\begin{thm}[Correctness]\label{thm:openweakcad} Algorithm \ref{alg:openweakcad} is correct.
\end{thm}
\begin{proof}
Let $f\in\ZZ[\xx_n]$. We begin by proving that  $f$ is OWD over $\Hproj(f,[x_n,\ldots,x_{k+1}])$ w.r.t. ${\Hproj^{\ast}(f,[x_n,\ldots,x_{k+1}])}$, and $\codim({\Hproj^{\ast}(f,[x_n,\ldots,x_{k+1}])})\ge2$ by induction on $k$.

When $k=n-1$, $f$ is OWD over $\Hproj(f,[x_n])=\Bproj(f,[x_n])$ w.r.t. ${\Hproj^{\ast}(f,[x_n])}=\{1\}$, and $\codim({\Hproj^{\ast}(f,[x_n])})\ge2$.  Suppose the theorem is true for $k=n-1,\ldots,j+1$. Now, we consider the case $k=j$. By Theorem \ref{thm:weaktran} and the induction, $f$ is OWD over $\Hproj(f,[x_n,\ldots,x_{j}],x_t)$ w.r.t. ${\Hproj^{\ast}(f,[x_n,\ldots,x_{j}],x_t)}$, and $\codim({\Hproj^{\ast}(f,[x_n,\ldots,x_{j}],x_t)})\ge2$ for $t=j,\ldots,n$. By Theorem \ref{thm:weakgcd}, $f$ is OWD over $\Hproj(f,[x_n,\ldots,x_{j}])$ w.r.t. ${\Hproj^{\ast}(f,[x_n,\ldots,x_{j}])}$, and $\codim({\Hproj^{\ast}(f,[x_n,\ldots,x_{j}])})\ge2.$ We complete the induction.

By Lemma \ref{lem:owdeq}, $f$ is OWD over $h_{i-1}(f)$.
Hence the algorithm is correct.
\end{proof}

Although the expression of $h_j(f)$ in Algorithm \ref{alg:openweakcad} is complicated, the zero of $h_j(f)$ is simper than that of $\Bproj(f,[x_n,\ldots,x_{j+1}])$.
\begin{thm}\label{thm:smhp}Let $f\in\ZZ[x_1,\ldots,x_n]$, we have
$$\Hproj(f,[x_n,\ldots,x_{k+1}])|\Hproj(f,[x_n,\ldots,x_{k+1}],x_{k+1})|\Bproj(f,[x_n,\ldots,x_{k+1}]),$$
and there exists a polynomial $q_k\in {\Hproj^{\ast}(f,[x_n,\ldots,x_{k+1}])}$, such that $$q_k|\Bproj(f,[x_n,\ldots,x_{k+1}]);$$
Moreover, in Algorithm \ref{alg:openweakcad},
\begin{align*}
\zero(h_k)=&\zero(\Hproj(f,[x_n,\ldots,x_{k+1}]))\cup\zero({\Hproj^{\ast}(f,[x_n,\ldots,x_{k+1}])})\\
\subseteq&\zero(\Bproj(f,[x_n,\ldots,x_{k+1}])).
\end{align*}
\end{thm}
\begin{proof}
We prove the theorem by induction on $k$. When $k=n-1$,
$$\Hproj(f,[x_n])=\Hproj(f,[x_n],x_n)=\Bproj(f,[x_n]),\,{\Hproj^{\ast}(f,[x_n])}=\{1\}.$$
Let $q_{n-1}=1\in {\Hproj^{\ast}(f,[x_n])}$, $q_{n-1}|\Hproj(f,[x_n])$. By definition, $$h_{n-1}=\Hproj(f,[x_n]){\Hproj^{\ast}(f,[x_n])}^2=\Hproj(f,[x_n]),\, \zero(h_{n-1})=\zero(\Bproj(f,[x_n])).$$
Suppose the theorem is true for $k=n-1,\ldots,j+1$. Now, we consider the case $k=j$. By definition and the induction, $\Hproj(f,[x_n,\ldots,x_{j+1}])$ is a factor of $$\Hproj(f,[x_n,\ldots,x_{j+1}],x_{j+1})=\Bproj(\Hproj(f,[x_n,\ldots,x_{j+2}]),[x_{j+1}]),$$
and
$$\Hproj(f,[x_n,\ldots,x_{j+2}])|\Bproj(f,[x_n,\ldots,x_{j+2}]).$$
These imply that
$\Bproj(\Hproj(f,[x_n,\ldots,x_{j+2}]),[x_{j+1}])$ is a factor of $$\Bproj(\Bproj(f,[x_n,\ldots,x_{j+2}]),[x_{j+1}])=\Bproj(f,[x_n,\ldots,x_{j+1}]),$$ and
$$\Hproj(f,[x_n,\ldots,x_{j+1}])|\Bproj(f,[x_n,\ldots,x_{j+1}]).$$
 Suppose $q_{j+1}\in {\Hproj^{\ast}(f,[x_n,\ldots,x_{j+2}],x_{j+2})}$, such that
 $$q_{j+1}|\Bproj(f,[x_n,\ldots,x_{j+2}]).$$
 Let $q_j=\lc(q_{j+1},x_{j+1})\cdot{\Hproj^{\vartriangle}(f,[x_n,\ldots,x_{j+1}],x_{j+1})}\in {\Hproj^{\ast}(f,[x_n,\ldots,x_{j+1}])}$.
 Now,
  $$\lc(q_{j+1},x_{j+1})|\lc(\Bproj(f,[x_n,\ldots,x_{j+2}]),x_{j+1})|\Bproj(f,[x_n,\ldots,x_{j+1}]).$$
 and
$${\Hproj^{\vartriangle}(f,[x_n,\ldots,x_{j+1}],x_{j+1})}|\Hproj(f,[x_n,\ldots,x_{j+1}],x_{j+1})|\Bproj(f,[x_n,\ldots,x_{j+1}]).$$
Thus, $q_j|\Bproj(f,[x_n,\ldots,x_{j+1}])$.

Now, we have
\begin{align*}
&\zero({\Hproj^{\ast}(f,[x_n,\ldots,x_{j+1}])})\\
\subseteq &\zero(q_j)=\zero(\lc(q_{j+1},x_{j+1}))\cup\zero({\Hproj^{\vartriangle}(f,[x_n,\ldots,x_{j+1}],x_{j+1})})\\
\subseteq&\zero(\Bproj(f,[x_n,\ldots,x_{j+1}])).
\end{align*}
We complete the induction, and the theorem is proved.
\end{proof}
 This theorem implies that, for every open cell $C'$ of open CAD produced by Brown's projection $\Bproj(f,[x_n,\ldots,x_{j+1}])$, there exists an open cell $C$ of open weak CAD produced by $h_j$ such that $C'\subseteq C$. Thus, the scale of open weak CAD is not bigger than that of open CAD.

\begin{rem}
In Algorithm \ref{alg:openweakcad}, the scale of the open weak CAD of $f$ defined by $h_j$ in $\RR^j$ is not always the smallest. For example, let $f$ be the polynomial in Example \ref{ex:wod}, then $h_1=(x_1-1)x_1$, and $f$ is open weak delineable over $x_1$, as mentioned earlier.
\end{rem}

\section{Application: Open Sample}
\label{sec:reduced_open_cad}
As a first application of Theorem \ref{thm:openweakcad}, we show how to compute open sample based on Algorithm \ref{alg:openweakcad}.

\begin{defn} $($Open sample$)$
A set of sample points $S_f\subseteq \RR^k\setminus \zero(f)$ is said to be an {\em open sample} defined by $f(\xx_k)\in\ZZ[\xx_k]$ in $\RR^k$ if it has the following property: for every open connected set $U\subseteq \RR^k$ defined by $f\neq0$, $S_f\cap U\ne \emptyset$.

Suppose $g(\xx_k)$ is another polynomial. If $S_f$ is an open sample defined by $f(\xx_k)$ in $\RR^k$ such that $g(\va)\neq0$ for any $\va\in S_f$, then we denote the open sample by $S_{f,g}$. 
\end{defn}

As a corollary of Theorems \ref{thm:McCallum} and \ref{thm:Brown}, a property of open CAD (or GCAD) is that at least one sample point can be taken from every highest dimensional cell via the open CAD (or GCAD) lifting phase.
So, an open CAD is indeed an open sample.



Obviously, there are various efficient ways to compute $S_{f,g}$ for two given {\em univariate} polynomials $f,g\in \ZZ[x]$. For example, we may choose one rational point from every open interval defined by the real zeros of $f$ such that $g$ does not vanish at this point. Therefore, we only describe the specification of such algorithms {\tt  SPOne} here and omit the details of the algorithms.
\begin{defn}
Let $\va_j=(\alpha_1,\ldots,\alpha_j)\in \RR^j$ and $S\subseteq \RR$ be a finite set, define
\[\va_j\boxplus S=\{(\alpha_1,\ldots,\alpha_j,\beta)\mid \beta\in S\}.\]
\end{defn}

\begin{algorithm}[!ht]
        \caption{{\tt SPOne}} \label{one-dimsample}
        \begin{algorithmic}[1]
                \Require{Two univariate polynomials $f,g \in \ZZ[x]$.}
                \Ensure{$S_{f,g}$, an open sample defined by $f(x)$ in $\RR$ such that $g(\va)\neq0$ for any $\va\in S_{f,g}$.}
        \end{algorithmic}
\end{algorithm}

\begin{algorithm}[!ht]
        \caption{{\tt OpenSP}} \label{opensample}
        \begin{algorithmic}[1]
                \Require{Two lists of polynomials $L_1=[f_n(\xx_n), \ldots, f_j(\xx_j)]$, $L_2=[g_n(\xx_n), \ldots, g_j(\xx_j)]$, and a set of points $S$ in $\RR^{j}$.}
                \Ensure{A set of sample points in $\RR^{n}$.}
        \State $P_j:=S$
        \For {$i$ from $j+1$ to $n$}
        \State $P_{i}:=\emptyset$
        \For {$\va$ in $O$}
        \State $P_i:=P_i\bigcup(\va\boxplus {\tt SPOne}(f_i(\va,x_i),g_i(\va,x_i)))$ 
        \EndFor
        \EndFor
        \State \Return $P_n$
        \end{algorithmic}
\end{algorithm}

\begin{rem}
For a polynomial $f(\xx_n)\in \ZZ[\xx_n]$,
let
\begin{align*}
B_1 & =[f, \Bproj(f,[x_n]),\dots, \Bproj(f,[x_n,\dots,x_2])],\\
B_2 &=[1,\ldots,1],\\
S   & ={\tt SPOne}(\Bproj(f,[x_n,\dots,x_2]),1),
\end{align*}
then ${\tt OpenSP}(B_1,B_2,S)$ is an open CAD (an open sample) defined by $f(\xx_n)$.
\end{rem}

\begin{rem}
The output of ${\tt OpenSP}(L_1,L_2,S)$ is dependent on the method of choosing sample points in Algorithm $\tt SPOne$. In the following, when we use the terminology ``any ${\tt OpenSP}(L_1,L_2,S)$'', we mean ``no matter which method is used in Algorithm $\tt SPOne$ for choosing sample points".
\end{rem}

\begin{defn}
Given a polynomial $f\in\ZZ[x_1,\ldots,x_n]$ and an ordering $x_1\prec x_2\prec\ldots\prec x_n$.
Let $\Hproj(f,n+1)=1$, we define polynomials $\Hproj(f,i)$$(2\le i\le n)$ recursively as follows
 $$\Hproj(f,i)=\lc(\Hproj(f,i+1),x_{i}){\Hproj^{\vartriangle}(f,[x_n,\ldots,x_{i}],x_{i})}.$$
 Define two polynomial sets,
\begin{align*}
\overline{\Hproj}(f,i)  &=\{f,\Hproj(f,[x_{n}]), \ldots, \Hproj(f,[x_{n},\ldots,x_i])\},\\
{\Hproj^{\vartriangle}}(f,i) &=\{\Hproj(f,n+1),\Hproj(f,n),\ldots,\Hproj(f,i)\}.
\end{align*}
\end{defn}
\begin{thm}\label{thm:owdsp}
Let $f\in\ZZ[\xx_n]$. $f$ is OWD over $\Hproj(f,[x_n,\ldots,x_i])$ w.r.t. $\Hproj(f,i)$.
\end{thm}
\begin{proof}
  By definition, $\Hproj(f,i)\in{\Hproj^{\ast}(f,[x_n,\ldots,x_i])}$,
$$\zero({\Hproj^{\ast}(f,[x_n,\ldots,x_i])})\subseteq\zero(\Hproj(f,i)).$$
 According to Theorem \ref{thm:openweakcad}, $f$ is OWD over $\Hproj(f,[x_n,\ldots,x_i])$ w.r.t. ${\Hproj^{\ast}(f,[x_n,\ldots,x_{i}])}$. Hence, $f$ is OWD over $\Hproj(f,[x_n,\ldots,x_i])$ w.r.t. $\Hproj(f,i)$.
\end{proof}

\begin{defn}\label{de:reduced}
A {\em reduced open CAD} of $f(\xx_n)$ w.r.t. $[x_n,\ldots,x_{j+1}]$ is a set of sample points in $\RR^n$,
 $${\tt OpenSP}(\overline{\Hproj}(f,i),{\Hproj^{\vartriangle}}(f,i),OS)$$
 where $OS$ is an open sample $OS=S_{\Hproj(f,[x_n,\ldots,x_{j+1}]),\Hproj(f,j+1)}$ in $\RR^{j}$.
\end{defn}

\begin{thm}\label{thm:reducedcad}
Any reduced open CAD of $f(\xx_n)$ w.r.t. $[x_n,\ldots,x_{j+1}]$ is an open sample defined by $f(\xx_n)$.
\end{thm}

\begin{proof}
Let $P_j$ be defined as in Algorithm \ref{opensample}. For any open connected component $U\subseteq\RR^n$ defined by $f\neq0$, we prove by induction on $k$ that $\pi_{k}^n(U)\cap P_k\neq\emptyset$.
When $k=j$, let $S\subseteq\RR^j$ be an open connected component of $\Hproj(f,[x_n,\ldots,x_{j+1}])\neq0$ such that $S\cap \pi_{j}^n(U)\neq\emptyset$. By Theorem \ref{thm:owdsp}, $f$ is OWD over $\Hproj(f,[x_n,\ldots,x_{j+1}])$ w.r.t. $\Hproj(f,j+1)$, $S\backslash\zero(\Hproj(f,j+1))\subseteq \pi_{j}^n(U)$. Let $\va \in P_j$, such that $\va\in S\backslash\zero(\Hproj(f,j+1))$, $\va\in \pi_{j}^n(U)\cap P_j$.

Suppose the induction holds for $k=j,j+1,\ldots,i$. Now, we consider the case $k=i+1$. Let $\va\in P_i$ such that $\va\in \pi_{i}^n(U)$. $\Hproj(f,i+1)(\va)\neq0$ implies that $$\Hproj(f,[x_n,\ldots,x_{i+1}],x_{i+1})(\va)\neq0.$$
Let $S\subseteq\RR^i$ be the open component of $\Hproj(f,[x_n,\ldots,x_{i+1}],x_{i+1})\neq0$ containing $\va$.
 Since $\va\in S\cap\pi_{i+1}^n(U)$, $S\cap\pi_{i}^n(U)\neq \emptyset$ and the open set ${\pi_{i}^{i+1}}^{-1}(S)\cap\pi_{i+1}^n(U)$ is nonempty. Let $S'$ be an open connected component of $\Hproj(f,[x_n,\ldots,x_{i+2}])\neq0$ (for simplicity, we define $\Hproj(f,[x_n,\ldots,x_{i+2}])=f$ if $i=n-1$) such that $S'\cap {\pi_{i}^{i+1}}^{-1}(S)\cap\pi_{i+1}^n(U)\neq\emptyset$. By Theorem \ref{thm:owdsp}, $f$ is OWD over $\Hproj(f,[x_n,\ldots,x_{i+2}])$ w.r.t. $\Hproj(f,i+2)$, and
 $$S'\backslash\zero(\Hproj(f,i+2))\subseteq\pi_{i+1}^n(U).$$
 Since $\Hproj(f,[x_n,\ldots,x_{i+2}])$ is OWD over $\Hproj(f,[x_n,\ldots,x_{i+1}],x_{i+1})$, $S\subseteq\pi_{i}^{i+1}(S')$ and $\va\in\pi_{i}^{i+1}(S')$. Let $U_{\va}\subseteq\RR$ be the maximal open set such that $(\va,U_{\va})\in S'$. $\Hproj(f,i+1)(\va)\neq0$ implies that $\lc(\Hproj(f,i+2),x_{i+1})(\va)\neq0$, and $\Hproj(f,i+2)(\va,x_{i+1})$ is a nonzero polynomial. Thus, there exists
$$\beta\in U_{\va}\cap{\tt SPOne}(\Hproj(f,[x_n,\ldots,x_{i+2}])(\va,x_{i+1}), \Hproj(f,i+2)(\va,x_{i+1})).$$
We have
$$(\va,\beta)\in P_{i+1}\cap (S'\backslash\zero(\Hproj(f,i+2)))\subseteq P_{i+1}\cap \pi_{i+1}^n(U)\neq\emptyset,$$
and the induction is completed.
\end{proof}

\begin{ex}\label{ex:1-2}
We illustrate the main steps of computing ${\tt OpenSP}(\overline{\Hproj}(f,i),{\Hproj^{\vartriangle}}(f,i),OS)$
using the polynomial $f$ from Examples \ref{ex:1} and~\ref{ex:1-1}. 
\medskip
\begin{description}[leftmargin=3em,style=nextline,itemsep=0.5em]
\item[\sf In:]   $f=(x_3^2+x_2^2+x_1^2-1)(4x_3+3x_2+2x_1-1)\in \ZZ[x_1, x_2,x_3]$
\item[]          $S_{\Hproj(f,[x_3,x_2]),\Hproj(f,2)}=\{-2,-\frac{27}{32},0,\frac{63}{64},2\}$ in $\RR$.

\item[\sf 1:] $P_1 :=\{-2,-\frac{27}{32},0,\frac{63}{64},2\}$
 \;\;\;\;\;\;\;\;($P_1$ has 5 elements, $\va_1,\ldots, \va_{5}$)

\item[\sf 3:] $P_2 := \emptyset$
\item[\sf 5:] $P_2:=P_2\bigcup (\va_1\boxplus {\tt SPOne}(\Hproj(f,[x_3])(\va_1,x_2),\Hproj(f,3)(\va_1,x_2)))$
\item[]       $P_2:=P_2\bigcup (\va_2\boxplus {\tt SPOne}(\Hproj(f,[x_3])(\va_2,x_2),\Hproj(f,3)(\va_2,x_2)))$
\item[]       $\vdots$
\item[]       $P_2:=P_2\bigcup (\va_5\boxplus {\tt SPOne}(\Hproj(f,[x_3])(\va_5,x_2),\Hproj(f,3)(\va_5,x_2)))$
\item[]       $P_2$ now has 13 elements, $\va_1,\ldots, \va_{13}$
\item[\sf 3:] $P_3 := \emptyset$
\item[\sf 5:] $P_3:=P_3\bigcup (\va_1\boxplus {\tt SPOne}(f(\va_1,x_3),\Hproj(f,4)(\va_1,x_3)))$
\item[]       $P_3:=P_3\bigcup (\va_2\boxplus {\tt SPOne}(f(\va_2,x_3),\Hproj(f,4)(\va_2,x_3)))$
\item[]       $\vdots$
\item[]       $P_3:=P_3\bigcup (\va_{13}\boxplus {\tt SPOne}(f(\va_{13},x_3),\Hproj(f,4)(\va_{13},x_3)))$
\item[]       $P_3$ has 36 elements, $\va_1,\ldots, \va_{36}$
\item[\sf Out]       $P_3$
\end{description}
\end{ex}

\begin{rem}\label{re:a1}
As an application of Theorem \ref{thm:reducedcad}, we could design a CAD-like method to get an open sample defined by $f(\xx_n)$ for a given polynomial $f(\xx_n)$. Roughly speaking, if we have already got an open sample defined by $\Hproj(f,[x_n,\ldots,x_j])$ in $\RR^{j-1}$,
according to Theorem \ref{thm:reducedcad}, we could obtain an open sample defined by $f$ in $\RR^n$.
That process could be done recursively.

In the definition of $\Hproj$, we first choose $m$ variables from $\{x_1,\ldots,x_n\}$, compute all projection polynomials under all possible orders of those $m$ variables, and then compute the gcd of all those projection polynomials.
Therefore, Theorem \ref{thm:reducedcad} provides us many ways for designing various algorithms for computing open samples. For example, we may set $m=2$ and choose $[x_n,x_{n-1}]$, $[x_{n-2},x_{n-3}]$, etc. successively in each step. Because there are only two different orders for two variables, we compute the gcd of two projection polynomials under the two orders in each step. Algorithm \ref{TwoHp} is based on this choice.
\end{rem}

\begin{algorithm}[ht]
        \caption{\TwoHp} \label{TwoHp}
        \begin{algorithmic}[1]
                \Require{A polynomial $f \in \ZZ[\xx_n]$ of level $n$.}
                \Ensure{An open sample defined by $f$, {\it i.e.}, a set of sample points which contains at least one point from each connected component of $f\neq0$ in $\RR^{n}$}
        \State $g:=f$;~$L_1:=\{f\}$;~$L_2:=\{1\}$;~$h:=1$;
\While{$i\ge 3$}
        \State $L_1:=L_1\bigcup \{\Hproj(g,[x_i]),\Hproj(g,[x_i,x_{i-1}])\}$; 
        \State $h:=\lc(h,[x_i])$;
        \State $L_2:=L_2\bigcup \{h\}$; 
        \State $h:=\lc(h,[x_{i-1}]){\Hproj^{\vartriangle}(g,[x_i,x_{i-1}],x_{i-1})}$;
        \State $g:=\Hproj(g,[x_{i},x_{i-1}])$;
        \State $i:=i-2$;
\EndWhile
\If {$i=2$}
        \State $L_1:=L_1\bigcup \{\Hproj(g,[x_i])\}$; 
        \State $h:=\lc(h,[x_i])$;
        \State $L_2:=L_2\bigcup \{h\}$;
        \State $g:=\Hproj(g,[x_{i}])$;
        \EndIf
        \State $S$:=${\tt SPOne}(L_1^{[1]},L_2^{[1]})$;
        \State $C$:= ${\tt OpenSP}(L_1,L_2,S)$;
        \State  \Return $C$.
        \end{algorithmic}
\end{algorithm}
\begin{rem}\label{re:a2}
If $\Hproj(f,[x_{n},x_{n-1})]\neq \Bproj(f,[x_{n},x_{n-1}])$ and $n>3$, it is obvious that the scale of projection in Algorithm \ref{TwoHp} is smaller than that of open CAD in Definition \ref{def:opencad}.
\end{rem}

\begin{rem}
It should be mentioned that there are some non-CAD methods for computing sample points in semi-algebraic sets, such as critical point method. For related works, see for example, \cite{basu1998new,safey2003polar,el2007testing,faugere2008classification,Hong_Safey2012}.
\end{rem}

\section{Application: Polynomial Inequality Proving}
\label{sec:improved}
In this section, we combined the idea of $\Hproj$ and the simplified CAD projection operator $\Nproj$ we introduced previously in \cite{han2016proving}, to get a new algorithm for testing semi-definiteness of polynomials.
\begin{defn} \citep{han2016proving}
        Suppose $f\in \ZZ[\xx_{n}]$ is a polynomial of level $n$. Define   
        \begin{align*}
                & {\rm Oc}(f,x_n)=\sqrfree_1(\lc(f,x_{n})), {\rm Od}(f,x_n)=\sqrfree_1(\discrim(f,x_{n})),\\
                & {\rm Ec}(f,x_n)=\sqrfree_2(\lc(f,x_{n})), {\rm Ed}(f,x_n)=\sqrfree_2(\discrim(f,x_{n})),\\
                & {\rm Ocd}(f,x_n)={\rm Oc}(f,x_n)\cup {\rm Od}(f,x_n),\\
                & {\rm Ecd}(f,x_n)={\rm Ec}(f,x_n)\cup {\rm Ed}(f,x_n).
        \end{align*}
        The {\em secondary} and {\em principal parts} of the projection operator $\Nproj$ are defined as
        \begin{align*}
                \Nproj_1(f,[x_{n}])=&{\rm Ocd}(f,x_n),\\
                \Nproj_{2}(f,[x_{n}])=&\{\prod_{g\in {\rm Ecd}(f,x_{n})\setminus {\rm Ocd}(f,x_{n})}{g}\}.
        \end{align*}
        If $L$ is a set of polynomials of level $n$, define
        \begin{align*}
                \Nproj_1(L,[x_{n}])&=\bigcup_{g\in L}{\rm Ocd}(g,x_{n}),\\
                \Nproj_{2}(L,[x_{n}])&=\bigcup_{g\in L}\{\prod_{h\in {\rm Ecd}(g,x_n)\setminus \Nproj_1(L,[x_{n}])}{h}\}.
        \end{align*}
\end{defn}
Based on the projection operator $\Nproj$, we proposed an algorithm, \Proineq, in \citep{han2016proving} for testing semi-definiteness of polynomials. Algorithm \Proineq\ takes a polynomial $f(\xx_n) \in \ZZ[\xx_n]$ as input, and returns whether or not $f(\xx_n) \ge0$ on $\RR^n$.
The readers are referred to \citep{han2016proving} for the details of \Proineq.

The projection operator $\Nproj$ is extended and defined in the next definition.
\begin{defn}
        Let $f\in \ZZ[x_1,\dots,x_n]$ with level $n$. Denote $[\yy]=[y_1,\dots,y_{m}]$, for $1\le m\le n$, where $ y_i \in \{x_1,\dots,x_n\}$ for $1\le i\le m$ and $y_i\neq y_j$ for $i\neq j$. Define
 \[\Nproj(f,[x_i])=\Nproj_2(f,[x_i]),~~ \Nproj(f,[x_i],x_i)=\prod_{g\in\Nproj_1(f,[x_i])}{g}, ~~ \Nproj(f,n)=\Nproj(f,[x_i],x_i).\]
 For $m (m\ge2)$ and $i (1\le i \le m)$, $\Nproj(f,[\yy],y_i)$, $\Nproj(f,[\yy])$ and $\Nproj(f,i)$ are defined recursively as follows.
        \begin{align*}
&\Nproj(f,[\yy],y_i)=\Bproj(\Nproj(f,\hat{[\yy]}_i),y_i), \\
&\Nproj(f,[\yy])=\gcd(\Nproj(f,[\yy],y_1),\ldots,\Nproj(f,[\yy],y_m)),
\end{align*}
where $\hat{[\yy]}_i=[y_1,\ldots,y_{i-1},y_{i+1},\ldots,y_m]$. Define
$$\Nproj(f,i)=\lc(\Nproj(f,i+1),x_i)\frac{\Nproj(f,[x_n,\ldots,x_i],x_i)}{\Nproj(f,[x_n,\ldots,x_i])},$$
\[\overline{\Nproj}(f,i)=\{f,\Nproj(f,[x_{n}]), \ldots, \Nproj(f,[x_{n},\ldots,x_i])\}\] and
\[\widetilde{\Nproj}(f,i)=\{f,\Nproj(f,n),\ldots,\Nproj(f,i)\}.\]
\end{defn}
\begin{thm}\label{thm:2} \citep{han2016proving}
        Given a positive integer $n\ge2$. Let $f\in \ZZ[\xx_n]$ be a non-zero squarefree polynomial and $U$ a connected component of $\Nproj(f,[x_{n}])\neq0$ in $\RR^{n-1}$. If the polynomials in $\Nproj_{1}(f,[x_n])$ are semi-definite on $U$, then $f$ is delineable on $V=U\backslash \bigcup_{h\in \Nproj_{1}(f,[x_n])}\zero(h)$.
\end{thm}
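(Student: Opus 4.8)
The plan is to reduce the statement to the McCallum--Brown delineability theorems (Theorems \ref{thm:McCallum} and \ref{thm:Brown}) exactly as in the classical open CAD argument, the only new twist being that $\Nproj$ keeps a \emph{smaller} set of factors than $\Bproj$ or $\MCproj$, so one must check that the discarded factors are harmless on $V$. First I would recall that $\Nproj(f,[x_n]) = \Nproj_2(f,[x_n])$ is (up to squarefree normalization) the product of those irreducible factors of $\lc(f,x_n)\cdot\discrim(f,x_n)$ that occur to an \emph{even} power and do not already appear in $\Nproj_1(f,[x_n])$, while $\Nproj_1(f,[x_n]) = {\rm Ocd}(f,x_n)$ collects the factors occurring to an odd power. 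Thus $\zero(\Nproj(f,[x_n]))\cup\bigcup_{h\in\Nproj_1(f,[x_n])}\zero(h)$ is precisely $\zero(\lc(f,x_n))\cup\zero(\discrim(f,x_n))$, i.e. the full zero set of the leading coefficient and discriminant of the squarefree polynomial $f$.

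Next, on the set $V = U\setminus\bigcup_{h\in\Nproj_1(f,[x_n])}\zero(h)$ I would verify the hypotheses of Theorem \ref{thm:Brown} (equivalently of Theorem \ref{thm:McCallum}). By construction neither $\lc(f,x_n)$ nor $\discrim(f,x_n)$ vanishes on $V$; since $f$ is squarefree and of level $n$ (so $\deg(f,x_n)>0$), $\discrim(f,x_n)\neq0$ as a polynomial, and $\lc(f,x_n)$ being nonvanishing on $V$ guarantees $f$ does not vanish identically at any point of $V$ and is degree-invariant there. For order-invariance of $\discrim(f,x_n)$ on $V$, I would argue that $V$ is an open connected set: $U$ is connected by hypothesis, the factors in $\Nproj_1(f,[x_n])$ are semi-definite on $U$ hence each $\zero(h)$ with $h\in\Nproj_1(f,[x_n])$ meets $U$ only on a set where $h$ does not change sign, and one invokes Lemma \ref{thm:1}/Lemma \ref{lem:connected} (coprimality of the distinct irreducible factors) to conclude $V$ is still connected; on a connected open subset of $\RR^{n-1}$ where $\discrim(f,x_n)$ is nowhere zero it is automatically sign-invariant, and on an open set sign-invariance upgrades to order-invariance since the order of a nonvanishing analytic function is $0$ everywhere. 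Then Theorem \ref{thm:Brown} (or \ref{thm:McCallum}) applies and yields that $f$ is (analytic, hence a fortiori) delineable on $V$.

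The step I expect to be the main obstacle is the connectedness of $V$: one must be careful that removing the real varieties of \emph{several} factors $h\in\Nproj_1(f,[x_n])$ from the connected open set $U$ does not disconnect it. This is where the semi-definiteness hypothesis on the elements of $\Nproj_1(f,[x_n])$ is essential --- it ensures each such hypersurface is, locally in $U$, a ``boundary-type'' set that can be deleted without separating $U$ --- and where Lemma \ref{lem:connected} (with the observation that the distinct irreducible factors are pairwise coprime) does the real work. The remaining points (nonvanishing of $\lc$ and $\discrim$ on $V$, degree-invariance, passage from sign- to order-invariance on an open set) are routine once $V$ is known to be open and connected. Since this theorem is quoted from \citep{han2012}, I would keep the write-up to a sketch emphasizing these three ingredients.
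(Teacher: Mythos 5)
The paper does not reprove this theorem -- it is quoted from \citep{han2012} -- so I am comparing your proposal against the standard argument behind that citation. Your overall strategy is the right one: the irreducible factors of $\lc(f,x_n)\cdot\discrim(f,x_n)$ are exhausted by $\Nproj_1(f,[x_n])$ together with the factors of $\Nproj_2(f,[x_n])$, so on $V$ neither $\lc$ nor $\discrim$ vanishes, and once $V$ is known to be open and connected the conclusion follows from Theorems \ref{thm:McCallum}/\ref{thm:Brown} exactly as you say. You also correctly identify connectedness of $V$ as the crux.

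However, your stated mechanism for that crux is wrong, and this is a genuine gap. Lemma \ref{thm:1} and Lemma \ref{lem:connected} let you delete the \emph{common} zero set $\zero(f,g)$ (resp.\ $\zero(f_1,\ldots,f_m)$) -- a set of codimension at least two -- from a connected open set; they say nothing about deleting the \emph{union} $\bigcup_{h}\zero(h)$ of hypersurfaces, and ``coprimality of the distinct irreducible factors'' does not help: removing $\zero(x_1)$ from $\RR^2$ disconnects it even though $x_1$ is irreducible. The missing step, which is precisely where semi-definiteness enters and which you leave at the level of ``boundary-type set,'' is this: if $h$ is semi-definite on the open set $U$, then every point of $\zero(h)\cap U$ is a local extremum of $h$, hence a critical point, so $\zero(h)\cap U\subseteq\zero\bigl(h,\partial h/\partial x_i\bigr)$ for any $x_i$ actually occurring in $h$; since $h$ is irreducible of positive degree, $\gcd(h,\partial h/\partial x_i)=1$, and \emph{now} Lemma \ref{thm:1} applies. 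Iterating over the finitely many $h\in\Nproj_1(f,[x_n])$ shows $U\setminus\bigcup_h\zero(h,\partial h/\partial x_{i_h})$ is connected, and $V$ is an open set containing it and contained in its closure, hence connected as well. Without this reduction from the hypersurfaces $\zero(h)$ to the critical loci $\zero(h,\partial h/\partial x_i)$, the appeal to Lemma \ref{lem:connected} is vacuous and the proof does not close.
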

\begin{lem} \label{prop:han2016proving} \citep{han2016proving}
        Given a positive integer $n\ge2$. Let $f\in\ZZ[\xx_{n}]$ be a squarefree polynomial with level $n$ and $U$ a connected open set of $\Nproj(f,[x_{n}])\neq0$ in $\RR^{n-1}$. If $f(\xx_n)$ is semi-definite on $U\times \RR$, then the polynomials in $\Nproj_{1}(f,[x_n])$ are all semi-definite on $U$.
\end{lem}
Now, we can rewrite Theorem \ref{thm:2} in another way. 
\begin{prop}\label{pr:6}
        Let $f\in\ZZ[\xx_{n}]$ be a squarefree polynomial with level $n$ and $U$ a connected component of $\Nproj(f,[x_{n}])\neq0$ in $\RR^{n-1}$. If the polynomials in $\Nproj_{1}(f,[x_n])$ are semi-definite on $U$, then $f$ is weak open delineable on $U\backslash\zero({\Nproj}(f,n))$.
\end{prop}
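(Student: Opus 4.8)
The plan is to show that Proposition~\ref{pr:6} is essentially a reformulation of Theorem~\ref{thm:2} in the language of open delineability, much as the remark following Definition~\ref{def:opendeli} reformulates the classical McCallum/Brown delineability via $\Bproj$. So the first step is to unpack what ``$f$ is open delineable on $U$ w.r.t. $\overline{\Nproj}(f,n)$ and $\widetilde{\Nproj}(f,n)$'' requires by Definition~\ref{def:opendeli}: since here $j=n-1$, $s=j=n-1$, and $L_1=\overline{\Nproj}(f,n)=[f,\Nproj(f,[x_n])]$, $L_2=\widetilde{\Nproj}(f,n)=[f,\Nproj(f,[x_n],x_n)]$, I need to prove that for any $\va\in U$ with $(\Nproj(f,[x_n])\cdot\Nproj(f,[x_n],x_n))(\va)\ne 0$ and any maximal open connected $\mathcal U\subseteq\RR^n$ defined by $f\ne 0$ with $\mathcal U\cap(\{\va\}\times\RR)\ne\emptyset$, one has $\mathcal{A}\cap\mathcal U\ne\emptyset$, where $\mathcal{A}={\tt OpenSP}(L_1,L_2,\{\va\})=\{\va\}\boxplus{\tt SPOne}(f(\va,x_n),f(\va,x_n))$. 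That is, the finite set of sample points chosen by {\tt SPOne} in the open intervals of $\RR$ cut out by the real roots of the univariate polynomial $f(\va,x_n)$ must meet every maximal open connected component of $f\ne 0$ that passes over $\va$.

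The second step is to invoke Theorem~\ref{thm:2}. Set $V=U\backslash\bigcup_{h\in\Nproj_1(f,[x_n])}\zero(h)$. By hypothesis the polynomials in $\Nproj_1(f,[x_n])$ are semi-definite on $U$, so Theorem~\ref{thm:2} gives that $f$ is delineable (in the sense of Definition~\ref{def:delineable}) on $V$. The point $\va$ in the previous paragraph satisfies $\Nproj(f,[x_n],x_n)(\va)=\prod_{g\in\Nproj_1(f,[x_n])}g(\va)\ne0$, hence $\va\in V$. On the connected set $V$, delineability means the real variety of $f$ over $V$ is the union of finitely many disjoint graphs of continuous functions $\theta_1<\cdots<\theta_t\colon V\to\RR$, and the $f$-sectors over $V$ are exactly the connected components of $(V\times\RR)\setminus\bigcup_i\operatorname{graph}\theta_i$. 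The third step is to match up these $f$-sectors over $V$ with the portions over $V$ of the maximal open components $\mathcal U$ of $f\ne 0$: each such $\mathcal U\cap(V\times\RR)$ is a union of $f$-sectors over $V$ (it is open, $f$-invariant, and $V\times\RR$ is covered by sectors and sections), and conversely since $V$ is connected each $f$-sector over $V$ lies in a single $\mathcal U$. Now for $\va\in V$, the fibre $f(\va,x_n)\ne0$ decomposes into open intervals whose endpoints are among $\theta_1(\va)<\cdots<\theta_t(\va)$ (plus $\pm\infty$); {\tt SPOne} picks at least one point in each such interval, i.e. at least one point in each $f$-sector's fibre over $\va$, hence at least one point in $\mathcal U\cap(\{\va\}\times\RR)$ for every $\mathcal U$ meeting the cylinder over $\va$. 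This is precisely $\mathcal{A}\cap\mathcal U\ne\emptyset$, and since every maximal open $\mathcal U$ with $\mathcal U\cap(\{\va\}\times\RR)\ne\emptyset$ meets that cylinder, the open delineability condition holds.

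The main obstacle I anticipate is the bookkeeping in the third step: being careful that $\mathcal U\cap(\{\va\}\times\RR)\ne\emptyset$ for a \emph{maximal} open component $\mathcal U$ really does force $\mathcal U$ to contain an entire $f$-sector over some neighbourhood of $\va$ in $V$ --- this uses that $\va\in V$ (so $f$ is genuinely delineable near $\va$, not merely over all of $U$ where sections may merge or a coefficient vanish) together with connectedness of $V$ from Lemma~\ref{thm:1} applied to the defining polynomials, which is why the excision of $\bigcup_{h\in\Nproj_1}\zero(h)$ is essential. A secondary subtlety is that $\mathcal U$ is an open component of $f\ne 0$ in all of $\RR^n$, whereas delineability only describes the geometry over $V\times\RR$; one must observe that it suffices to find a sample point in $\mathcal U\cap(V\times\RR)$, which is nonempty because $\mathcal U$ is open and $\mathcal U\cap(\{\va\}\times\RR)\ne\emptyset$ with $\va\in V$. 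Once these two points are pinned down, the remaining argument is the routine translation between Collins--McCallum delineability and the {\tt OpenSP}/open-sample vocabulary already rehearsed in the remark after Definition~\ref{def:opendeli}, together with Lemma~\ref{prop:han2012} for the converse direction should one wish to record equivalence with Theorem~\ref{thm:2} rather than just implication.
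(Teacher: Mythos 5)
Your proposal takes the same route as the paper: the paper offers no separate proof of Proposition \ref{pr:6}, introducing it only with the phrase ``we can rewrite Theorem \ref{thm:2} in another way,'' so the intended argument is precisely the translation from Collins--McCallum delineability on $V=U\backslash\bigcup_{h\in \Nproj_1(f,[x_n])}\zero(h)$ into the {\tt OpenSP} vocabulary that you carry out, and your matching of $f$-sectors over the connected set $V$ with maximal components of $f\neq0$ is the right way to make that translation precise.

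One point to tighten. In your first step you reduce the goal to those components of $f\neq 0$ that meet the fiber $\{\va\}\times\RR$, but Definition \ref{def:opendeli} (here with $s=j=n-1$ and $S'=U$) quantifies over every maximal open connected component $W$ of $f\neq0$ with $W\cap(U\times\RR)\neq\emptyset$ --- including components that a priori need not pass over the particular base point $\va$ at which the sample is lifted. Your own machinery already closes this: such a $W$, being open, must meet $V\times\RR$, since each $h\in\Nproj_1(f,[x_n])$ is a nonzero polynomial and hence $U\setminus V$ is a closed nowhere dense subset of $U$; then by your sector--component correspondence $W$ contains an entire $f$-sector over $V$, and since $\va\in V$ the fiber of that sector over $\va$ is a nonempty interval between consecutive real roots of $f(\va,x_n)$, in which {\tt SPOne} places a point. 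State the goal in this form rather than conditioning on $W\cap(\{\va\}\times\RR)\neq\emptyset$; as written, the concluding sentence of your third step establishes a formally weaker statement than the definition demands, even though no new idea is needed to bridge the gap.
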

The proof of Theorem \ref{thm:openweakcad} only uses Theorem \ref{thm:weaktran}, Theorem \ref{thm:weakgcd}, and the fact that $f$ is open weak delineable over $\Bproj(f,[x_n])$ w.r.t. $\{1\}$. Using Proposition \ref{pr:6}, we can prove the following theorem by the same way of proving Theorem \ref{thm:openweakcad}.
\begin{thm} \label{thm:nprojopen}
        Let $j$ be an integer and $2\le j\le n$, $f(\xx_n)\in \ZZ[\bm{x}_n]$, $U$ be any open connected set of $\Nproj(f,[x_n,\ldots,x_j])\neq0$ in $\RR^{j-1}$. If the polynomials in $\bigcup_{i=0}^{n-j} \Nproj_{1}(f,[x_{n-i}])$ are all semi-definite on $U\times \RR^{n-j}$, $f(\xx_n)$ is weak open delineable on $S=U\backslash\zero(\Nproj(f,j))$.
\end{thm}

Theorem \ref{thm:nprojopen} and Proposition \ref{prop:han2016proving} provide us a new way to decide the non-negativity of a polynomial as stated in the next theorem.

\begin{thm}\label{th:6}
Let $f\in\ZZ[\xx_{n}]$ be a squarefree polynomial with level $n$ and $U$ a connected open set of $\Nproj(f,[x_{n},\ldots,x_j])\neq0$ in $\RR^{j-1}$. Denote $S=U\backslash\zero(\Nproj(f,j))$. The necessary and sufficient condition for $f(\xx_n)$ to be positive semi-definite on $U\times \RR^{n-j+1}$ is the following two conditions hold.\\
        $(1)$The polynomials in $\bigcup_{i=0}^{n-j} \Nproj_{1}(f,[x_{n-i}])$ are all semi-definite on $U\times \RR^{n-j}$.\\
        $(2)$There exists a point $\va \in S$ such that $f(\va,x_j,\ldots,x_n)$ is positive semi-definite on $\RR^{n-j+1}$.
\end{thm}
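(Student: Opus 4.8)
\textbf{Proof proposal for Theorem \ref{th:6}.}

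The plan is to derive the equivalence from Theorem \ref{thm:nprojopen} (the $\Nproj$-analogue of the main delineability theorem) together with Lemma \ref{prop:han2012}, which is the converse-direction bridge between semi-definiteness of $f$ on a cylinder and semi-definiteness of the polynomials in $\Nproj_1$ on the base. I would first settle the easier direction, namely that $(1)$ and $(2)$ imply $f(\xx_n)$ is positive semi-definite on $U\times \RR^{n-j+1}$. Assuming $(1)$, Theorem \ref{thm:nprojopen} applies and tells us that $f(\xx_n)$ is open delineable on $S=U\backslash\zero(\{\Nproj(f,[x_n,\dots,x_j],x_t)\mid t=j,\ldots,n\})$ w.r.t. $\overline{\Nproj}(f,j)$ and $\widetilde{\Nproj}(f,j)$. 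Open delineability, via the open sample property (Proposition \ref{prop:sample}), means that the ${\tt OpenSP}$ lifting of any single base point of $S$ hits every maximal open connected component of $f\neq 0$ that meets the cylinder over $S$. So if $f$ were negative somewhere on $U\times \RR^{n-j+1}$, it would be negative on a full-dimensional open region, hence on one of those components $U'$; but $f$ is continuous, so by taking the point $\va\in S$ from condition $(2)$ and lifting, the fiber $f(\va,x_j,\dots,x_n)$ would have to meet $U'$ and take a negative value there — contradicting $(2)$. The only subtlety is that the bad region might lie over $U\setminus S$, i.e. over the lower-dimensional zero set removed in passing from $U$ to $S$; here I would invoke continuity of $f$ and density of $S$ in $U\times\RR^{n-j}$ (together with $S$ being open connected, which Lemma \ref{lem:connected} guarantees) to conclude that if $f\geq 0$ on the dense open cylinder over $S$ then $f\geq 0$ on all of $U\times\RR^{n-j+1}$.

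For the reverse direction, suppose $f$ is positive semi-definite on $U\times\RR^{n-j+1}$. Condition $(2)$ is then immediate: any $\va\in S$ works, since the restriction of a globally nonnegative polynomial to a fiber is nonnegative. For condition $(1)$, I would argue by a downward induction on the levels from $n$ down to $j$, peeling off one variable at a time. At the top level, $f$ is squarefree of level $n$ and positive semi-definite on $U\times\RR^{n-j+1}\supseteq (U\times\RR^{n-j})\times\RR$, so applying Lemma \ref{prop:han2012} with the base $U\times\RR^{n-j}$ in place of its ``$U$'' (intersected appropriately with $\Nproj(f,[x_n])\neq 0$) shows the polynomials in $\Nproj_1(f,[x_n])$ are semi-definite on $U\times\RR^{n-j}$. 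Then $\Nproj(f,[x_n])$, or rather its squarefree part, plays the role of the new $f$ at level $n-1$, is again positive semi-definite on the appropriate cylinder (this needs a small argument: the principal part $\Nproj(f,[x_n])=\Nproj_2(f,[x_n])$ is a product of factors of the leading coefficient and discriminant of $f$, and once the secondary polynomials are sign-invariant $f$ is delineable, forcing its discriminant-type projection to inherit definiteness), and we repeat, collecting $\Nproj_1(f,[x_{n-1}])$, and so on down to level $j$. Accumulating over $i=0,\dots,n-j$ yields exactly that all polynomials in $\bigcup_{i=0}^{n-j}\Nproj_1(f,[x_{n-i}])$ are semi-definite on $U\times\RR^{n-j}$, which is $(1)$.

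The main obstacle I anticipate is making the induction step in the reverse direction fully rigorous — specifically, verifying that the intermediate projection polynomial $\Nproj(f,[x_n,\dots,x_{k}])$ really does stay positive semi-definite on the relevant cylinder so that Lemma \ref{prop:han2012} can be reapplied at the next level down. Lemma \ref{prop:han2012} is stated for a squarefree $f$ of level $n$ and a base cylinder of the form $U\times\RR$; here we need it with a base that is itself a product $U\times\RR^{n-k}$, so I must either re-state the lemma in that slightly more general cylindrical form (the proof in \citep{han2012} should go through verbatim, since $\RR^{n-k}$ is connected) or absorb the extra factors into ``$U$''. A secondary, more bookkeeping-level difficulty is tracking the squarefree parts and the set-theoretic bookkeeping of $\Nproj_1$ versus $\Nproj_2$ across levels, and ensuring that removing the zero sets $\zero(\Nproj(f,[x_n,\dots,x_j],x_t))$ when passing from $U\times\RR^{n-j+1}$ to the cylinder over $S$ does not lose any information — again handled by the density/continuity remark above, which is why I would isolate that as a preliminary lemma before doing either direction.
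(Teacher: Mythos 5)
Your proposal follows exactly the route the paper intends: Theorem \ref{th:6} is stated there without an explicit proof, presented as a direct consequence of Theorem \ref{thm:nprojopen} (which, under condition (1), gives open delineability on $S$ and hence sufficiency via the sample point of condition (2)) together with an iterated application of Lemma \ref{prop:han2012} for the necessity of condition (1). The technical subtleties you flag --- restating Lemma \ref{prop:han2012} for cylindrical bases of the form $U\times\RR^{n-k}$ and the density/continuity argument for the fibers over $U\setminus S$ --- are genuine but are also left implicit by the paper, so your argument is at least as complete as the source.
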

Based on the above theorems, it is easy to design some different algorithms (depending on the choice of $j$) to prove polynomial inequality. For example, the algorithm \TwoPro\ for deciding whether a polynomial is positive semi-definite, which we will introduce later, is based on Theorem \ref{th:6} when $j=n-1$ (Proposition \ref{prop:nproj2}).
\begin{prop}\label{prop:nproj2}
        Given a positive integer $n\ge3$. Let $f\in\ZZ[\xx_{n}]$ be a squarefree polynomial with level $n$ and $U$ a connected open set of $\Nproj(f,[x_{n},x_{n-1}])\neq0$ in $\RR^{n-2}$. Denote $S=U\backslash \zero(\Nproj(f,n-1))$.\\ The necessary and sufficient condition for $f(\xx_n)$ to be positive semi-definite on $U\times \RR^2$ is the following two conditions hold.\\
        $(1)$The polynomials in either $\Nproj_{1}(f,[x_n])$ or $\Nproj_{1}(f,[x_{n-1}])$ are semi-definite on $U\times \RR$.\\
        $(2)$There exists a point $\va\in S$ such that $f(\va,x_{n-1},x_n)$ is positive semi-definite on $\RR^2$.
\end{prop}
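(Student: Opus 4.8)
The final statement to prove is Proposition~\ref{prop:nproj2}, which is the case $j=n-1$ of Theorem~\ref{th:6}. The plan is to derive it as a direct specialization, checking only that the hypotheses collapse to the simpler form stated. First I would set $j=n-1$ in Theorem~\ref{th:6}: the index set $\{n-i \mid i=0,\ldots,n-j\}$ becomes $\{n,n-1\}$, so condition~(1) of Theorem~\ref{th:6}, namely that all polynomials in $\bigcup_{i=0}^{n-j}\Nproj_1(f,[x_{n-i}])$ are semi-definite on $U\times\RR^{n-j}$, becomes the requirement that the polynomials in $\Nproj_1(f,[x_n])\cup\Nproj_1(f,[x_{n-1}])$ are semi-definite on $U\times\RR$. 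Similarly the set $\{\Nproj(f,[x_n,\ldots,x_j],x_t)\mid t=j,\ldots,n\}$ specializes to $\{\Nproj(f,[x_n,x_{n-1}],x_n),\,\Nproj(f,[x_n,x_{n-1}],x_{n-1})\}$, matching the definition of $S$ in the proposition, and condition~(2) of Theorem~\ref{th:6} becomes the existence of $\va\in S$ with $f(\va,x_{n-1},x_n)$ positive semi-definite on $\RR^2$. So Theorem~\ref{th:6} with $j=n-1$ already gives: $f$ is positive semi-definite on $U\times\RR^2$ iff the polynomials in $\Nproj_1(f,[x_n])\cup\Nproj_1(f,[x_{n-1}])$ are semi-definite on $U\times\RR$ and there is a good sample point $\va\in S$.

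The only genuine gap between this and the statement of Proposition~\ref{prop:nproj2} is that the proposition's condition~(1) asks that the polynomials in \emph{either} $\Nproj_1(f,[x_n])$ \emph{or} $\Nproj_1(f,[x_{n-1}])$ be semi-definite on $U\times\RR$, rather than in their union. So the real content is to show that, in the presence of condition~(2) (or more precisely, when $f$ itself is positive semi-definite on $U\times\RR^2$), the ``either/or'' condition is equivalent to the ``union'' condition. One direction is trivial: if all polynomials in the union are semi-definite, then certainly those in each of the two sets are. For the converse, I would invoke Lemma~\ref{prop:han2012}: if $f(\xx_n)$ is semi-definite on $U\times\RR$ (viewed with $x_n$ as the last variable), then all polynomials in $\Nproj_1(f,[x_n])$ are semi-definite on $U$; and symmetrically, applying the lemma after the permutation $\sigma$ swapping $x_n$ and $x_{n-1}$, if $f$ is semi-definite on the appropriate cylinder then all polynomials in $\Nproj_1(f,[x_{n-1}])$ are semi-definite. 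Since positive semi-definiteness of $f$ on $U\times\RR^2$ implies $f$ is semi-definite on $U\times\RR$ in both groupings of the variables, condition~(2) forces both $\Nproj_1(f,[x_n])$ and $\Nproj_1(f,[x_{n-1}])$ to consist of semi-definite polynomials on $U\times\RR$. Hence, granting condition~(2), semi-definiteness of the polynomials in one of the two sets is automatically upgraded to semi-definiteness of the polynomials in the union, and the ``either/or'' form is equivalent to the ``union'' form.

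Assembling the argument: for the ``only if'' direction of Proposition~\ref{prop:nproj2}, assume $f$ is positive semi-definite on $U\times\RR^2$. Then by Theorem~\ref{th:6} (case $j=n-1$), the union condition and condition~(2) both hold; the union condition implies the ``either/or'' condition~(1), and condition~(2) is exactly the same. For the ``if'' direction, assume condition~(1) of the proposition (say $\Nproj_1(f,[x_n])$ consists of semi-definite polynomials on $U\times\RR$) and condition~(2). From condition~(2) and Lemma~\ref{prop:han2012} applied in the $x_{n-1}$ direction (after the permutation $\sigma$), together with the fact that $f(\va,x_{n-1},x_n)\ge 0$ on $\RR^2$ forces $f$ to take a consistent sign, I would want to conclude the polynomials in $\Nproj_1(f,[x_{n-1}])$ are also semi-definite on $U\times\RR$ — though here one must be careful, since Lemma~\ref{prop:han2012} needs $f$ semi-definite on the cylinder, which is what we are trying to prove. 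The cleaner route, and the one I would actually follow, is to observe that whichever of $\Nproj_1(f,[x_n])$, $\Nproj_1(f,[x_{n-1}])$ is assumed semi-definite, one applies Theorem~\ref{thm:nprojopen} (with the corresponding projection order, i.e.\ possibly after applying $\sigma\in P_{n,n-1}$) to get that $f$ is open delineable on $S$ w.r.t.\ the appropriate projection lists; combined with the open-sample property and the existence of the good point $\va\in S$, delineability propagates the sign of $f$ at $\va$ over all of $S$, hence over $U\times\RR^2$ up to a lower-dimensional set, and then by continuity over all of $U\times\RR^2$.

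The main obstacle I anticipate is the asymmetry just described: Theorem~\ref{th:6} is stated with the ``union'' condition and a \emph{fixed} projection order $[x_n,\ldots,x_j]$, whereas Proposition~\ref{prop:nproj2} allows the weaker ``either/or'' condition by exploiting that one may reorder the last two variables. Making this rigorous requires checking that the set $S=U\backslash\zero(\Nproj(f,[x_n,x_{n-1}],x_n),\Nproj(f,[x_n,x_{n-1}],x_{n-1}))$ is \emph{symmetric} in the two variables $x_n,x_{n-1}$ — which it is, since $\Nproj(f,[x_n,x_{n-1}])$ is defined as a gcd over both orders and the two removed zero-sets are exactly the two ``order-specific'' pieces — so that the same $S$ serves for either choice of final variable. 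Once that symmetry is in hand, the proof is just: run Theorem~\ref{thm:nprojopen} with whichever of the two projection orders makes hypothesis~(1) of the theorem hold, then use the good sample point from~(2) to certify positive semi-definiteness, exactly as in the proof of Theorem~\ref{th:6}.
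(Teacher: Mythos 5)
Your first paragraph already contains everything the paper itself does: Proposition~\ref{prop:nproj2} is stated without a separate proof and is simply Theorem~\ref{th:6} instantiated at $j=n-1$. The ``either $\ldots$ or'' in condition (1) is meant distributively (``the polynomials in each of the two sets are semi-definite on the respective copy of $U\times\RR$''), i.e., it is the union condition of Theorem~\ref{th:6}; this reading is confirmed by Algorithm~\ref{TwoPro}, which forms $L_1:=\Nproj_1(f,[x_n])\cup\Nproj_1(f,[x_{n-1}])$ and requires every element of the union to pass. Under that reading your specialization is complete and coincides with the paper's intent, and you could stop there.

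The rest of your argument, which tries to prove the proposition under the genuinely disjunctive reading of (1), has a gap that you partially flagged but did not close. Your ``cleaner route'' invokes Theorem~\ref{thm:nprojopen} ``with the corresponding projection order'', but Theorem~\ref{thm:nprojopen} at $j=n-1$ hypothesizes semi-definiteness of the full union $\Nproj_1(f,[x_n])\cup\Nproj_1(f,[x_{n-1}])$; the paper has no single-order version of it. If only $\Nproj_1(f,[x_n])$ is known to be semi-definite, the inductive argument behind Theorem~\ref{thm:nprojopen} (mirroring Theorem~\ref{thm:open delineable}) yields open delineability only on the connected components of $U\setminus\zero(\Nproj(f,[x_n,x_{n-1}],x_{n-1}))$, a set that may be disconnected and is in general strictly smaller than $S$, since $S$ removes only the \emph{common} zeros of the two order-specific polynomials. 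Patching across $\zero(\Nproj(f,[x_n,x_{n-1}],x_{n-1}))\cap S$ is precisely where the proof switches to the order ending in $x_{n-1},x_n$ via the union property (Proposition~\ref{prop:union}), and that switch requires $\Nproj_1(f,[x_{n-1}])$ to be semi-definite as well. Hence the single sample point $\va\in S$ of condition (2) can only be propagated within one component of the smaller set, not over all of $S$, and sufficiency does not follow; your other suggested bridge, via Lemma~\ref{prop:han2012}, is circular for this direction, as you yourself observe. The fix is to keep your first paragraph, read condition (1) as the union condition, and discard the attempted strengthening.
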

\begin{algorithm}[ht]
        \caption{\TwoPro} \label{TwoPro}
        \begin{algorithmic}[1]
                \Require{An irreducible polynomial $f \in \ZZ[\xx_n]$.}
                \Ensure{Whether or not $\forall \va_n\in \RR^n$, $f(\va_n)\ge0.$}
        \If {$n\le2$}
        \If {$\Proineq(f(x_n))$=\textbf{false}}
        \Return  \textbf{false}
        \EndIf
        \Else
                \State $L_1:=\Nproj_{1}(f,[x_n])\bigcup \Nproj_{1}(f,[x_{n-1}])$
                \State $L_2:=\Nproj(f,[x_{n},x_{n-1}])$
                \For {$g$ in $L_1$}
                \If {\TwoPro$(g)=$\textbf{false}}
                \Return  \textbf{false}
                \EndIf
                \EndFor
                \State $C_{n-2}:=$ A reduced open CAD of $L_2$ w.r.t. $[x_{n-2},\ldots,x_2]$, which satisfies that $\zero(\Nproj(f,n-1))\cap C_{n-2}=\emptyset$.
                \If{$\exists \va_{n-2}\in C_{n-2}$ such that $\Proineq(f(\va_{n-2},x_{n-1},x_n))$=\textbf{false}}
                \Return \textbf{false}
        \EndIf
                \EndIf\\
                \Return\textbf{true}
        \end{algorithmic}
\end{algorithm}

\section{Application: Copositive problem}\label{sec:copositive}
\begin{defn}
A real $n\times n$ matrix $A_n$ is said to be {\em copositive} if $\xx_nA_n\xx_n^T\ge0$ for every nonnegative vector $\xx_n$. For convenience, we also say the form $\xx_nA_n\xx_n^T$ is copositive if $A_n$ is copositive.
\end{defn}
The collection of all copositive matrices is a proper cone; it includes as a subset the collection of real positive-definite matrices. For example, $xy$ is copositive but it is not positive semi-definite.

In general, to check whether a given integer square matrix is not copositive, is NP-complete \citep{murty1987some}.
This means that every algorithm that solves the problem, in the worst case, will require at least
an exponential number of operations, unless P=NP. For that reason, it is
still valuable for the existence of so many incomplete algorithms discussing
some special kinds of matrices \citep{parrilo2000structured}. For
small values of $n$ ($\le6$), some necessary and sufficient conditions have been
constructed \citep{hadeler1983copositive,andersson1995criteria}. We refer the reader to \citep{hiriart2010variational} for a more detailed introduction to copositive matrices.

From another viewpoint, this is a typical real quantifier elimination problem, which can be solved by standard tools of real quantifier elimination ({\it e.g.}, using typical CAD). Thus, any CAD based QE algorithm can serve as a complete algorithm for
deciding copositive matrices theoretically. Unfortunately, such algorithm is not efficient in practice since the computational complexity of CAD is double exponential in $n$. 

To test the copositivity of the form $\xx_nA_n\xx_n^T$, is equivalent to test the nonnegativity of the form $(x_1^2,\ldots,x_n^2)A_n(x_1^2,\ldots,x_n^2)^T$.
In this section, we give a singly exponential incomplete algorithm with time complexity $\OO(n^2 4^n)$ based on the new projection operator proposed in the last section and Theorem \ref{th:6}.
We remark here that the results of \cite{basu1998new} allow to solve the problem in time singly exponential in $n$. However, the constants in the exponent are not made explicit. The constants of our bound are explicit and very low.

Let us take an example to illustrate our idea. Let $$F:=ax^4+bx^2y^2+cy^4+dx^2+ey^2+f,$$
be a squarefree polynomial, where $a,b,c\in \ZZ,d,e,f\in\ZZ[\zz_{n}]$ and $a\neq0$, $c\neq 0$.

To test the nonnegativity of $F$, we could apply typical CAD-based methods directly, {\it i.e.}, we can use Brown's projection operator. 
In general, we have
$$\Bproj(F,[x])=(cy^4+ey^2+f)a(4acy^4+4aey^2+4af-b^2y^4-2by^2d-d^2).$$
We then eliminate $y$,
\begin{align*}
  &\Bproj(F,[x,y])=\Bproj(\Bproj(F,[x]),y)\\
  =&fac(4fc-e^2)(d^2c-edb+fb^2)(4af-d^2)(4ac-b^2)(4afc-ae^2-d^2c+edb-fb^2).
\end{align*}
If $d,e$ are polynomials of degree $2$ and $f$ is a polynomial of degree $4$ (copositive problem is in this case), the degree of the polynomial $\Bproj(F,[x,y])$ is 20 while the original problem is of degree 4 only. That could help us understand why typical CAD-based methods do not work for copositive problems with more than 5 variables in practice.

Now, we apply our new projection operator. Notice that
$$\Res(F,\frac{\partial F }{\partial x}, x)=16(cy^4+ey^2+f)F_1^2,$$
where $F_1=a(4acy^4+4aey^2+4af-b^2y^4-2by^2d-d^2)$.

If $cy^4+ey^2+f$ and $F_1$ are nonzero and squarefree, $\Nproj(F,[x])=F_1$. Thus, in order to test the nonnegativity of $F$, we only need to test the semi-definitness of $cy^4+ey^2+f$, choose sample points defined by
$\Nproj(F,[x])\neq0$ (we also require that $cy^4+ey^2+f$ does not vanish at those sample points) and test the nonnegativity of $F$ at these sample points.

On the other side,
$$\Res(F, \frac{\partial F}{\partial y},y)=16(ax^4+dx^2+f)F_2^2,$$
where $F_2=c(4cx^4a+4cdx^2+4fc-b^2x^4-2bx^2e-e^2)$.

Similarly, if $ax^4+dx^2+f$ and $F_2$ are nonzero and squarefree, $\Nproj(F,[y])=F_2$. In order to test the nonnegativity of $F$, we only need to test the semi-definitness of $ax^4+dx^2+f$, choose sample points defined by
$\Nproj(F,[y])\neq0$ (we also require that $ax^4+dx^2+f$ does not vanish at those sample points) and test the nonnegativity of $F$ at these sample points.

Under some ``generic" conditions ({\it i.e.}, some polynomials are nonzero and squarefree), 
we only need to test the semi-definitness of $ax^4+dx^2+f$ and $cy^4+ey^2+f$, choose sample points $T_2$ defined by
$\Nproj(F,[x,y])=\gcd(\Bproj(\Nproj(F,[x]),y)$, $\Bproj(\Nproj(F,[y]),x))=(4ac-b^2)(4afc-ae^2-d^2c+edb-fb^2)\neq0$ (we also require that $\Res(\Nproj(F,[x]),y)$ does not vanish at $T_2$), obtain sample points $T_1$ defined by
 $\Nproj(F,[x])\neq0$ from $T_2$ (we also require that $cy^4+ey^2+f$ does not vanish at $T_1$)
 and test the nonnegativity of $F$ at $T_1$.

Again, if $d,e$ are polynomials of degree $2$ and $f$ is a polynomial of degree $4$, both the degree of $\Nproj(F,[x])$ and $\Nproj(F,[x,y])$ are exactly $4$. It indicates that our new projection operator may control the degrees of polynomials in projection sets. Moreover, we point out that
\begin{align*}
  \Nproj(F,[x,y])&=(4afc-ae^2-d^2c+edb-fb^2)\\
  &=4 \det\left( \begin{bmatrix} a & \frac{b}{2} &\frac{d}{2}\\ \frac{b}{2} & c &\frac{e}{2}\\
      \frac{d}{2}&\frac{e}{2}&f\end{bmatrix}\right).
\end{align*}
Before giving the result, we introduce some new notations and lemmas for convenience.
\begin{defn}[Sub-sequence]
  An array $I$ is called a {\em sub-sequence} of sequence $\{1,\dots,n\}$ if for any $i$-th component of $I$, $I[i]\in \{1,\dots, n\} $ and $I[i]< I[i+1]$ for $i=1,\dots,|I|-1$. 
\end{defn}

For a sub-sequence $I$ of $\{1,\dots, n\}$ with $m=|I|$, denote $\xx_I=[x_{I[1]},\ldots,x_{I[m]}]$, $\overline{\xx_I}=\{x_i\mid i\not\in I\}$ and
$A_I=(a_{{I[i],I[j]}})_{1\le i,j\le m}$ the sub-matrix of $A_{n}$.

Let \begin{equation}\label{eq:f}
      f(\xx_n) =\sum_{1\le i,j\le n} a_{i,j}x_ix_j+\sum_{i=1}^n (a_{i,n+1}+a_{n+1,i})x_i+a_{n+1,n+1}=(\xx_n,1)A_{n+1}(\xx_n,1)^T
    \end{equation}
be a quadratic polynomial in $\xx_n$, where $A_{n+1}=(a_{i,j})_{(n+1)\times(n+1)}$, 
$a_{i,j}=a_{j,i}\in \ZZ[\zz_s]$ for $1\le i,j\le n+1$. Set $F(\xx_n)=f(x_1^2,\ldots,x_n^2)$. It is not hard to see that (please refer to the proof of Theorem \ref{thm:cop}), for a given sub-sequence $I$ of $\{1,\dots, n\}$ with length $m$,  there exist some polynomials $p_1,\ldots,p_{m+1}\in \ZZ[\zz_s,\overline{\xx_I}]$ such that $F(\xx_n)=(\xx_I^2,1)P_I(\xx_I^2,1)^T$ where
  $$P_I=\begin{bmatrix} A_I & (p_{1}, \cdots , p_{m})^T \\ (p_{1}, \cdots , p_{m}) & p_{m+1} \end{bmatrix}.$$
For convenience, we denote $P_{[1,\dots,m]}$ by $P_{m+1}(x_{m+1},\ldots,x_n)$ or simply $P_{m+1}$. In particular, $P_{n+1}=A_{n+1}$ and $F(\xx_n)=(x_1^2,\ldots, x_i^2,1)P_{i+1}(x_1^2,\ldots,x_i^2,1)^T$.
\begin{ex}
  Suppose $F(x_1,x_2,x_3)=x_1^4+2x_2^4+4x_1^2x_2^2-2x_1^2x_3^2+4x_2^2x_3^2+8x_1^2z^2+5x_3^4+z^4$. 
  If $I=[1,2]$, then $\xx_I=[x_1,x_2],\overline{\xx_I}=\{x_3\}$,
  $$P_I=\begin{bmatrix} 1 & 2 & 4z^2-x_3^2\\
    2 & 2 &        2x_3^2 \\
    4z^2-x_3^2 & 2x_3^2 & 5x_3^4+ z^4  \end{bmatrix}.$$
  If $I=[1]$, then
  $\xx_I=[x_1],\overline{\xx_I}=\{x_2,x_3\}$,
  $$P_I=\begin{bmatrix} 1 &\   2x_2^2+4z^2-x_3^2\\
    2x_2^2+4z^2-x_3^2 & \  2x_2^4+5x_3^4+ z^4  \end{bmatrix}.$$
\end{ex}

\begin{lem}\label{lem:matrix}
  Suppose $R$ is a square matrix with order $n$, $P$ is an invertible  square matrix with order $k<n$, $Q\in \RR^{k\times (n-k)}$, $M\in \RR^{(n-k)\times k }$ and $N\in \RR^{(n-k)\times (n-k) }$. If $R$ can be written as partitioned matrix
  $$R=\begin{bmatrix} P & Q \\ M & N \end{bmatrix},$$ then
  $$\det(R)=\det(P)\det(N-MP^{-1}Q).$$
\end{lem}
\begin{pf}
  It is a well known result in linear algebra.
\end{pf}
For a square matrix $M$, we use $M^{(i,j)}$ to denote the determinant of the sub-matrix obtained by deleting the $i$-th row and the $j$-th column of $M$.
\begin{thm}
  \label{thm:cop}
Suppose $f\in\ZZ[\zz_s][\xx_n]$ is defined as in (\ref{eq:f}). Set $F(\xx_n)=f(x_1^2,\ldots,x_n^2)$. If
  \begin{enumerate}
  \item $F(\xx_n)|_{x_i=0} $ is nonzero and squarefree for any $i\in \{1,\dots,n\}$;
  \item $\det(A_I)=P_{I}^{(|I|+1,|I|+1)}$ is nonzero and squarefree for any sub-sequence $I$ of $\{1,\ldots,n \}$, and $\gcd(A_{I}^{(1,1)},\dots,A_{I}^{(|I|,|I|)})=1$ for any any sub-sequence $|I|\ge2$ of $\{1,\ldots,n \}$;
  \item  $\gcd(P_{I}^{(1,1)},\dots,P_{I}^{(|I|,|I|)})=1$ for any sub-sequence $|I|\ge2$ of $\{1,\ldots,n \} $;
  \item $\det(P_{I})$ is nonzero and squarefree for any sub-sequence $I$ of $\{1,\ldots,n \}$;
  \item $\gcd(\det(P_{I}),\det(A_{I}))=1$ for any sub-sequence $I$ of $\{1,\ldots,n\}$,
  \end{enumerate}
  then $\Nproj(F,[\xx_n])=\det(A_{n})\det(A_{n+1})$. 
\end{thm}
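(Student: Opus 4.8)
The plan is to unwind the recursive definition of $\Nproj$ for the two-variable elimination $[x_n, x_{n-1}]$ applied to $F$, and then to generalize via induction on $n-m$ (equivalently, to compute $\Nproj(F,[x_n,\ldots,x_{m+1}])$ for each $m$). The key structural observation, motivated by the worked example $F=ax^4+bx^2y^2+cy^4+dx^2+ey^2+f$, is that $F$ viewed as a polynomial in $x_n$ alone is a quadratic in $x_n^2$ whose leading coefficient is (up to sign/constants) $A_n^{(n,n)}$-type data and whose discriminant factors as a perfect square times the ``one-variable-deleted'' form $F|_{x_n=0}$-analogue times $\det$-data. Concretely, I would first establish the base case: writing $F(\xx_n) = (x_1^2,\ldots,x_{n-1}^2,x_n^2,1)A_{n+1}(\ldots)^T$ and collecting in $x_n$, show that $\lc(F,x_n) = a_{n,n}$ (a constant, hence contributing nothing to $\Nproj_1$ after squarefree extraction unless it is a perfect square issue — here handled by hypothesis), and that $\discrim(F,x_n)$ factors, via Lemma \ref{lem:matrix} applied to $P_n$, as a unit times $P_n^{(n+1,n+1)} \cdot (\text{something})^2 \cdot \det(P_n)$, where $P_n^{(n+1,n+1)} = \det(A_n)$ after the appropriate identification. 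Using hypotheses (1)–(5) to guarantee the relevant polynomials are nonzero, squarefree, and pairwise coprime, I conclude $\Nproj(F,[x_n]) = \Nproj_2(F,[x_n])$ picks out exactly $\det(A_n)\det(A_{n+1})$ (the genuinely ``even-multiplicity-only'' part), while $\Nproj(F,[x_n],x_n) = \prod_{g\in\Nproj_1(F,[x_n])} g$ collects the odd-multiplicity leftovers.

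Next I would set up the induction. Assume that for every sub-sequence $I$ of the appropriate length the statement gives $\Nproj$ of the partially eliminated form in terms of $\det(P_I)$ and $\det(A_I)$ and the various minors. The recursion $\Nproj(F,[\yy],y_i) = \Bproj(\Nproj(F,\hat{[\yy]}_i), y_i)$ and $\Nproj(F,[\yy]) = \gcd_i \Nproj(F,[\yy],y_i)$ then requires two things: first, computing $\Bproj(\det(P_I)\det(A_I), y_i)$ for the relevant next variable $y_i$, and second, taking the gcd over the two (or more) orders of elimination. The symmetry of the setup in the variables $x_{m+1},\ldots,x_n$ — each $x_t$ appears only through $x_t^2$, and the matrices $P_I$, $A_I$ transform predictably under permutations fixing the ``tail'' — is what should make the gcd collapse cleanly: eliminating $x_n$ then $x_{n-1}$ versus $x_{n-1}$ then $x_n$ produces polynomials sharing exactly the factors $\det(A_n)\det(A_{n+1})$ (together with lower minors that survive), and the ``private'' factors introduced by each order drop out in the gcd. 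I would track carefully, using hypotheses (2), (3), (5) (the various $\gcd = 1$ conditions), that no spurious common factors appear and that the intended factors are not accidentally killed.

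The main obstacle I anticipate is the bookkeeping in the inductive step: precisely identifying, after applying $\Bproj$ (which is $\Res(\sqrfree(\cdot), \partial_{y_i}\sqrfree(\cdot), y_i)$) to the degree-$4$-in-$x_t$ form, which determinantal minors of which $P_I$ survive as the squarefree part, and verifying that the resultant of two determinant polynomials in the different elimination orders has gcd exactly $\det(A_n)\det(A_{n+1})$. This is essentially a multivariate-resultant / Schur-complement identity: one wants something like $\Res_{y_i}(\det P_I, \cdot) = \det(P_{I'}) \cdot \det(A_{I'}) \cdot (\text{extraneous})$ where $I'$ is $I$ with one more index removed, and the extraneous factors differ between orders. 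The squarefreeness hypotheses (1) and (4) are what let us replace $\sqrfree(\cdot)$ by $\cdot$ in the $\Bproj$ computation, and without them the discriminant/resultant picks up extra squared factors that would need separate treatment — so I would be explicit about where each of the five hypotheses is used. I expect the clean statement of the needed determinant identity to be the crux, and I would isolate it as a sub-lemma (a Schur-complement computation of $\det(P_I)$ when one $x_t^2$ is specialized or eliminated), proved by Lemma \ref{lem:matrix}, before assembling the induction.
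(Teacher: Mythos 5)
Your proposal follows essentially the same route as the paper's proof: induction on the number of variables, writing $F=(\xx_I^2,1)P_I(\xx_I^2,1)^T$ and applying the induction hypothesis to get $\Nproj(F,[\xx_I])=\det(A_I)\det(P_I)$, then using the Schur-complement identity (Lemma \ref{lem:matrix}) to express $\det(P_I)=\det(A_I)(\lambda x_n^4+2\mu x_n^2+\nu)$ with $\det(A_I)\lambda$ and $\det(A_I)\nu$ equal to principal minors of $A_{n+1}$, so that the resultant in $x_n$ factors as (a unit times) $(A_n^{(n,n)})^2\det(A_n)^2 A_{n+1}^{(n,n)}\det(A_{n+1})^2$, and finally invoking the coprimality hypotheses so the order-dependent minors $A_{n+1}^{(i,i)},A_n^{(i,i)}$ cancel in the gcd over $i$, leaving $\det(A_n)\det(A_{n+1})$. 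The sub-lemma you propose to isolate is exactly the paper's Lemma \ref{lem:matrix} computation, and your accounting of where hypotheses (1)--(5) enter matches the paper's.
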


\begin{proof}
  We prove the theorem by induction on $n$.

  If $n=1$, $F(\xx_1)=a_{1,1}x_1^4+2a_{1,2}x_1^2+a_{2,2}$. Then $$\Res(F,\frac{\partial F}{\partial x_1},x_1)=256a_{1,1}^2a_{2,2}(a_{1,1}a_{2,2}-a_{1,2}^2)^2.$$ By conditions (1), (2), (4) and (5), $a_{1,1}\neq0
  , a_{2,2}\neq0, a_{1,1}a_{2,2}-a_{1,2}^2\neq0$, and $a_{2,2}$ and $a_{1,1}a_{2,2}-a_{1,2}^2$ are two coprime squarefree polynomials. Thus,
  $\Nproj(F,[\xx_1])=a_{1,1}(a_{1,1}a_{2,2}-a_{1,2}^2)=\det(A_1)\det(A_2)$.

  Assume that the conclusion holds for any quadratic polynomials with $k$ variables where $1\le k < n$. When $n=k$,
  let $I$ be a sub-sequence of $\{1,\dots,n\}$ with $|I|=n-1$. Without loss of generality, we assume that $I=[1,\dots,n-1]$. Set $A_{I}=(a_{i,j})_{1\le i,j< n}$, $B=(a_{1,n},\ldots,a_{n-1,n})$, $C=(a_{1,n+1},\ldots,a_{n-1,n+1})$, and $D=a_{n,n}x_n^4+2a_{n,n+1}x_n^2+a_{n+1,n+1}$. Then, $F(\xx_n)$ could be written as
  \begin{align*}
    F(\xx_{n})=&\sum_{1\le i,j< n} a_{i,j}x_i^2x_j^2+\sum_{i=1}^{n-1} (a_{i,n}+a_{n,i})x_i^2x_n^2+(a_{n,n}x_n^4+2a_{n,n+1}x_n^2+a_{n+1,n+1})\\
    =&(x_1^2,\dots,x_{n-1}^2,1)P_{I}(x_1^2,\dots,x_{n-1}^2,1)^T,
  \end{align*}
  where  $$P_{I}=\begin{bmatrix} A_{I} & (Bx_n^2+C)^T \\ Bx_n^2+C & D \end{bmatrix}.$$
  By assumption, $F|_{x_i=0}$ is squarefree for $i\in I$
  , and $\det(A_I')$ is nonzero and squarefree, $\gcd(A_{I'}^{(1,1)},\dots,A_{I'}^{(|I'|,|I'|)})=1$, $\gcd(P_{I'}^{(1,1)},\dots,P_{I'}^{(|I'|,|I'|)})=1$, $\gcd(\det(P_{I'}),\det(A_{I'}))=1$ for any sub-sequence $I'$ of $I$ with $|I'|\le n-1 $. Thus, by induction hypothesis, $\Nproj(F(\xx_n),[\xx_I])=\det(A_I)\det(P_{I})$.

  In the following, we compute $\det(P_{I})$. By assumption, $\det(A_{I})=P_{I}^{(n,n)}$ is nonzero and squarefree. According to Lemma \ref{lem:matrix},
  \begin{align*}
    \det(P_{I})=&\det(A_{I})(D-(Bx_n^2+C)A_I^{-1}(Bx_n^2+C)^T)\\
    =&\det(A_{I})\left((a_{n,n}-BA_I^{-1}B^T)x_n^4+2(a_{n,n+1}- 
    BA_I^{-1}C^T)x_n^2+a_{n+1,n+1}-CA_I^{-1}C^T\right)\\
    =&\det(A_I)(\lambda x_n^4+2\mu x_n^2+\nu),
  \end{align*}
  where $\lambda=(a_{n,n}-BA_I^{-1}B^T)$, $\mu=a_{n,n+1}-BA_I^{-1}C^T$, $\nu=a_{n+1,n+1}-CA_I^{-1}C^T$.
  By Lemma \ref{lem:matrix} again,
  \begin{align*}
    \det(A_I)\lambda=&\det(A_I)(a_{n,n}-BA_I^{-1}B^T)\\
    =&\det\left(\begin{bmatrix} A_I&B^T\\
        B&a_{n,n}\end{bmatrix}\right) \\
    =& A_{n+1}^{(n+1,n+1)},\numberthis \label{eq:1} \\
    \det(A_I)\nu=&\det(A_I)(a_{n+1,n+1}-CA_I^{-1}C^T)\\
    =&\det\left(\begin{bmatrix} A_I&C^T\\
        C&a_{n+1,n+1}\end{bmatrix}\right)\\
    =& A_{n+1}^{(n,n)}, \numberthis \label{eq:2}
  \end{align*}

  Thus, both $\det(A_I)\lambda$ and $\det(A_I)\nu$ are the determinants of some principal sub-matrices of $A_{n+1}$ with order $n$, respectively.

  Let $H=\det(P_{I})$, according to Lemma \ref{lem:matrix}, it is clear that
  \begin{align*}
    \Res(H,\frac{\partial H }{\partial x_n }, x_n)=&256\det(A_I)^7\lambda^2\nu(\mu^2-\lambda\nu)^2\\
    =&256 \det(A_I)^7\lambda^2\nu \det\left(\begin{bmatrix} \lambda & \mu \\
        \mu & \nu \end{bmatrix}\right)^2\\
    =& 256 \det(A_I)^7\lambda^2\nu \det \left(\begin{bmatrix} a_{n,n} & a_{n,n+1} \\
        a_{n+1,n} & a_{n+1,n+1}\end{bmatrix}\right. 
    -\left . \left(\! \begin{array}{c}
          B \\
          C
        \end{array}
        \!\right)
      A_I^{-1}\left(\! \begin{array}{c}
          B \\
          C
        \end{array}
        \!\right)^T\right)^2\\
    =& 256 \det(A_I)^5\lambda^2\nu \det\left(\begin{bmatrix} A_I&B^T&C^T\\
        B& a_{n,n}&a_{n,n+1}\\
        C&a_{n+1,n},&a_{n+1,n+1}\end{bmatrix}\right)^2\\
    =&256  \det(A_I)^5\lambda^2\nu  \det(A_{n+1})^2\\
    =& 256 \det(A_I)^2  \left(A_{n+1}^{(n+1,n+1)}\right)^2 A_{n+1}^{(n,n)} \det(A_{n+1})^2 \\
    =& 256 (A_{n}^{(n,n)})^2 \det(A_{n})^2 A_{n+1}^{(n,n)} \det(A_{n+1})^2.\numberthis \label{eq:3}
  \end{align*}

  Since $\det(A_I)$ and 
  $A_{n+1}^{(n+1,n+1)}$ are nonzero, according to (\ref{eq:3}), we have

  \begin{align*}
    \Nproj(F,[\xx_n],n)    =&\sqrfree(A_{n+1}^{(n,n)}A_{n}^{(n,n)}\det(A_{n}) \det(A_{n+1})). 
  \end{align*}
Similarly, for $1\le i\le n,$ we have
  \begin{align*}
    \Nproj(F,[\xx_n],i)=&\sqrfree( A_{n+1}^{(i,i)}A_{n}^{(i,i)}\det(A_{n})\det(A_{n+1})). \numberthis \label{eq:4}
  \end{align*}
By assumption, $\gcd(A_{n+1}^{(1,1)},\cdots, A_{n+1}^{(n,n)})=1$, $\gcd(A_{n}^{(1,1)},\cdots, A_{n}^{(n,n)})=1$, and $\det(A_{n})$, $\det(A_{n+1})$ are two nonzero squarefree polynomials with $\gcd(\det(A_{n}),\det(A_{n+1}))=1$, thus
  $$\Nproj(F,[\xx_n])=\gcd(\Nproj(F,[\xx_n],1 ),\ldots,\Nproj(F,[\xx_n],n))=\det(A_{n})\det(A_{n+1}).$$
  That completes the proof.
\end{proof}

\begin{thm}\label{thm:gcop}
If the coefficients $a_{i,j}$ of $f(\xx_{n})$ in Theorem \ref{thm:cop} are pairwise different real parameters except that $a_{i,j}=a_{j,i}$, and $F(\xx_n)=f(x_1^2,\ldots,x_n^2)$. Then all the four hypotheses (1)-(5) of Theorem \ref{thm:cop} hold. As a result, $\Nproj(F,[\xx_n])=\det(A_{n})\det(A_{n+1})$. 
\end{thm}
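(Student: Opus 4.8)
The plan is to verify the five hypotheses of Theorem \ref{thm:cop} one at a time, using the fact that when the $a_{i,j}$ ($1\le i\le j\le n+1$) are algebraically independent parameters (``pairwise different'' in the sense of being independent indeterminates, with only the symmetry constraint $a_{i,j}=a_{j,i}$), each of the polynomials that appears in the hypotheses is a nonzero polynomial in $\ZZ[\ldots,a_{i,j},\ldots]$, and the relevant squarefreeness/coprimality statements become statements about generic specializations. The key observation I would isolate first is that every polynomial occurring in hypotheses (1)--(5) is, up to the substitution $x_i\mapsto x_i^2$, built out of the entries $a_{i,j}$ by polynomial operations (determinants of principal submatrices of $A_I$ or $P_I$, restrictions $F|_{x_i=0}$), and hence can be regarded as an element of the polynomial ring $\ZZ[A]$ in the independent variables $A=\{a_{i,j}\}$, possibly tensored with $\ZZ[\xx_n]$. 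Because the $a_{i,j}$ are independent, it suffices to exhibit, for each claimed non-vanishing, a single specialization of the parameters making the polynomial nonzero, and for each claimed squarefreeness/coprimality, to argue irreducibility or coprimality over $\ZZ[A]$ directly.

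The key steps, in order, are as follows. First, for hypothesis (2) and (4) I would prove that $\det(A_I)$, $\det(P_I)$, and $\det(A_{n+1})$ are irreducible (hence squarefree and nonzero) as polynomials in the independent entries: the determinant of a generic symmetric matrix is a well-known irreducible polynomial over $\ZZ$, and $\det(P_I)$ is the determinant of a symmetric matrix whose entries are either independent parameters or independent linear forms $Bx_n^2+C$ in the remaining variables, so the same irreducibility argument applies after viewing the off-diagonal block entries as new independent indeterminates. Second, for the gcd conditions in (2) and (3), I would note that $A_I^{(1,1)},\ldots,A_I^{(|I|,|I|)}$ are determinants of distinct principal submatrices obtained by deleting different rows/columns; since the parameter $a_{1,1}$ appears in $A_I^{(k,k)}$ for every $k\ge 2$ but not in $A_I^{(1,1)}$, while $a_{|I|,|I|}$ appears in $A_I^{(k,k)}$ for every $k<|I|$ but not in $A_I^{(|I|,|I|)}$, no irreducible factor can divide all of them — more carefully, I would argue that a common factor would have to be a nonzero constant by tracking which variables occur in which minor, and the same bookkeeping handles the $P_I^{(k,k)}$. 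Third, for hypothesis (5), $\gcd(\det(P_I),\det(A_I))=1$ follows because both are irreducible and $\det(P_I)$ genuinely involves the variables $\overline{\xx_I}$ (through the block $Bx_n^2+C$) whereas $\det(A_I)$ does not, so they are non-associate irreducibles. Fourth, hypothesis (1), that $F|_{x_i=0}$ is nonzero and squarefree: setting $x_i=0$ yields a quadratic form in the remaining $x_j^2$ with generic coefficients, and I would invoke the same generic-symmetric-matrix irreducibility (or a direct discriminant computation) to conclude squarefreeness. Finally, having checked (1)--(5), I apply Theorem \ref{thm:cop} verbatim to conclude $\Nproj(F,[\xx_n])=\det(A_n)\det(A_{n+1})$.

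The main obstacle I anticipate is the careful verification of the coprimality conditions (2), (3), and (5) — not because any single one is deep, but because making the ``track which $a_{i,j}$ appears in which minor'' argument rigorous requires a clean lemma: for a generic symmetric matrix, distinct principal minors of the same size are pairwise non-associate irreducibles, and more strongly their collection has trivial gcd. Establishing the irreducibility of $\det(P_I)$ is the other delicate point, since $P_I$ is not a fully generic symmetric matrix — its lower-right-type structure has entries that are linear polynomials in auxiliary variables rather than independent constants — so I would need to argue that treating the distinct monomials/coefficients appearing in $Bx_n^2+C$ and $D$ as fresh indeterminates does not destroy irreducibility, e.g. by a specialization argument back to the fully generic case or by expanding along the last row and using that the cofactors are already known to be irreducible and coprime. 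Once these structural facts about generic (block-)symmetric determinants are in hand, the remaining hypotheses follow by routine specialization, and the theorem reduces to a direct citation of Theorem \ref{thm:cop}.
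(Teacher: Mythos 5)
Your proposal is correct in outline but follows a genuinely different route from the paper. The paper also dispenses with hypotheses (1) and (2) quickly, but for the crux --- that $\det(P_I)$ and the $P_I^{(i,i)}$ are pairwise distinct nonconstant irreducibles, which yields (3), (4), (5) --- it argues by induction on $n$: the Schur-complement identity (Lemma \ref{lem:matrix}) writes $\det(P_{I,F})$ as a quartic $a x_n^4+bx_n^2+c$ whose extreme coefficients $a=\det(A_I)\lambda$ and $c=\det(A_I)\nu$ are determinants of the same shape for smaller quadratic forms $G,H$, hence distinct irreducibles by induction; a bespoke factorization lemma (Lemma \ref{lem:irr}, irreducibility of $ax^4+bx^2+c$ over a UFD when $a,c$ are coprime non-unit irreducibles and $b\neq\pm(ac+1)$, $b\neq\pm(a+c)$) then closes the induction, the exceptional cases being excluded by a degree count in $a_{n,n+1}$. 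You instead propose a direct genericity argument: $\det(P_I)$ is the generic symmetric determinant with the last row/column entries specialized, and the coprimality conditions follow from variable-support bookkeeping among non-associate irreducibles. This is cleaner and avoids both the induction and the quartic lemma, but the step you rightly flag as delicate needs one specific observation to be rigorous: substitution does not preserve irreducibility in general, and what saves you here is that each entry $p_i$ of the last column contains the parameter $a_{i,n+1}$ (and $p_{m+1}$ contains $a_{n+1,n+1}$) linearly with unit coefficient and occurring nowhere else in the matrix, so the specialization extends to a triangular automorphism of the ambient polynomial ring, under which $\det(P_I)$ is the image of a genuinely generic symmetric determinant; irreducibility (and the non-associateness used for the gcd conditions) then transfers verbatim. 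With that observation supplied, your argument is complete and arguably simpler than the paper's; interestingly, the paper's own escape from its exceptional cases also hinges on the privacy of $a_{n,n+1}$, so the two proofs ultimately exploit the same structural feature of $P_I$.
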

\begin{proof}
  It is clear that the hypotheses (1) and (2) of Theorem \ref{thm:cop} hold. We claim that that for any given $m$, $|I|=m$, $P_{I}^{(i,i)}$ and $\det(P_{I})$ are pairwise nonconstant different irreducible polynomials in $\ZZ[\bm{a}_{i,j}][\xx_n]$ for all $n\ge m$, so the hypotheses (3),(4) and (5) of Theorem \ref{thm:cop} also hold. Here we denote $\bm{a}_{i,j}=(a_{1,1},\ldots,a_{n+1,n+1})$.

  We only prove that $\det(P_{I})$ is a nonconstant irreducible polynomial. The other statements of the claim can be proved similarly.

  We prove the claim by introduction on $n$. If $n=m$, it is clear that the claim is true. Assume that the theorem holds for integers $m\le l\le n-1$. We now consider the case $l=n$. Without of loss of generality, we assume that $I=[1,2,\ldots,m]$.

Recall that
 $F(\xx_n)$ could be written as
        \begin{align*}
                F(\xx_n)=(\xx_m^2,1)P_{I,F}(\xx_m^2,1)^T,
        \end{align*}
 where
 $$P_{I,F}=\begin{bmatrix} A_{I} & (p_{1,n}, \cdots , p_{m,n})^T \\ (p_{1,n}, \cdots , p_{m,n}) & p_{m+1,n} \end{bmatrix},$$
  and $p_{i,n}=\sum_{j=m+1}^na_{i,j}x_j^2+a_{i,n+1}$ for $1\le i\le m$, and $p_{m+1,n}=\sum_{j=m+1}^n a_{j,j}x_j^4+\sum_{j=m+1}^n 2a_{n+1,j}x_j^2+\sum_{m+1\le i<j\le n} 2a_{i,j}x_i^2x_j^2+a_{n+1,n+1}$.
  Let $B=(a_{1,n},\ldots,a_{m,n})$, $C=(p_{1,n-1},\ldots,p_{m,n-1})$, and $D=p_{m+1,n}=a_{n,n}x_n^4+2a_{n,n+1}x_n^2+2\sum_{m+1\le i< n} a_{i,n}x_i^2x_n^2+a'_{n+1,n+1}$,
  where $a'_{n+1,n+1}$ is a polynomial with $\deg(a'_{n+1,n+1},x_n)=0$.
Now $P_{I,F}$ can be simply written as
  $$P_{I,F}=\begin{bmatrix} A_{I} & (Bx_n^2+C)^T \\ Bx_n^2+C & D \end{bmatrix}.$$
  In the following, we compute $\det(P_{I,F})$. By Lemma \ref{lem:matrix},
   \begin{align*}
    \det(P_{I,F})=&\det(A_{I})(D-(Bx_n^2+C)A_I^{-1}(Bx_n^2+C)^T)\\
    =&\det(A_{I})\left((a_{n,n}-BA_I^{-1}B^T)x_n^4+\right.2(a_{n,n+1}+\sum_{m+1\le i< n} a_{i,n}x_i^2-BA_I^{-1}C^T)x_n^2\\
    & \left. +a'_{n+1,n+1}-CA_I^{-1}C^T\right)\\
    =&\det(A_I)(\lambda x_n^4+2\mu x_n^2+\nu),
  \end{align*}
  where $\lambda=(a_{n,n}-BA_I^{-1}B^T)$, $\mu=a_{n,n+1}+\sum_{m+1\le i< n} a_{i,n}x_i^2-BA_I^{-1}C^T$, $\nu=a'_{n+1,n+1}-CA_I^{-1}C^T$.
  By Lemma \ref{lem:matrix},
  \begin{align*}
    \det(A_I)\lambda=&\det(A_I)(a_{n,n}-BA_I^{-1}B^T)\\
    =&\det\left(\begin{bmatrix} A_I&B^T\\
        B&a_{n,n}\end{bmatrix}\right) \\
    \det(A_I)\nu=&\det(A_I)(a'_{n+1,n+1}-CA_I^{-1}C^T)\\
    =&\det\left(\begin{bmatrix} A_I&C^T\\
        C&a'_{n+1,n+1}\end{bmatrix}\right).\\
  \end{align*}
 We have $ \det(A_I)\lambda=\det(P_{I,G})$, $\det(A_I)\nu=\det(P_{I,H})$, where
  $$G(\xx_m)=(\xx_m,1)\left(\begin{bmatrix} A_I&B^T\\
        B&a_{n,n}\end{bmatrix}\right)(\xx_m,1)^T,$$ and
  $$H(\xx_{n-1})=(\xx_m,1)\left(\begin{bmatrix} A_I&C^T\\
        C&a'_{n+1,n+1}\end{bmatrix}\right)(\xx_m,1)^T. $$
 By induction, $\det(A_I)\lambda$ and $\det(A_I)\nu$ are two different non-constant irreducible polynomials. Since $\deg(\det(A_I)\mu,a_{n,n+1})>0$, and $\deg(\det(A_I)\lambda,a_{n,n+1})=\deg(\det(A_I)\nu,a_{n,n+1})=0$, it is clear that $\det(A_I)\mu\neq \pm (\det(A_I)\lambda\cdot \det(A_I)\nu+1)$, $\det(A_I)\mu\neq \pm (\det(A_I)\lambda+\det(A_I)\nu)$.
 Now the result follows from Lemma \ref{lem:irr}. We are done.
   \end{proof}


  \begin{lem} \label{lem:irr}
   Let $\mathcal{R}$ be a UFD with units $\pm1$. Let $a,b,c\in \mathcal{R}$, where $b\neq \pm (ac+1)$, $b\neq \pm (a+c)$, and $a,c$ are two non-unit coprime irreducible elements in $\mathcal{R}$, then $T(x)=ax^4+bx^2+c$ is an irreducible polynomial in $\mathcal{R}[x]$. 
   \begin{proof}
   Otherwise, we may assume $T(x)=g(x)h(x)$, where $g,h$ are two nonconstant polynomials in $\mathcal{R}[x]$.
   Notice that if $\alpha\in \mathcal{R}$ is a root of $T(x)$, then $-\alpha$ is also a root of $T(x)$, thus $(x^2-\alpha^2)$ is a factor of $T$. Thus, we may assume that $\deg(g)=\deg(h)=2$. Let $g=g_0+g_1x+g_2x^2$, $h=h_0+h_1x+h_2x^2$, where $g_i,h_i\in\mathcal{R}$.
   By comparing the coefficients of $T$ with $gh$, we have $c=g_0h_0$, $0=g_0h_1+g_1h_0$, $h_1g_2+h_2g_1=0$. Assume that $c|g_0$, then $c\nmid h_0$. if $h_1\neq0$, let $l$ be the largest integer such that $c^l|h_1$, then $l+1$ is the largest integer such that $c^{l+1}|g_1$. But $h_1g_2+h_2g_1=0$, so $c|g_2$, and $c|gh$, which contradicts with $(a,c)=1$. We must have $h_1=0$, and $g_1=0$.
   We assume that $g_0=c,h_0=1$. Now, there are four cases $(g_2=\pm a,h_2=\mp1)$ or $(g_2=\pm1,h_2=\mp a)$. All the four cases will contradict with the assumption that $b\neq \pm (ac+1)$, $b\neq \pm (a+c)$.
   \end{proof}
  \end{lem}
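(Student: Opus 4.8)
The plan is to argue by contradiction: assume $T(x)=ax^4+bx^2+c$ factors as $T=gh$ with $g,h\in\mathcal{R}[x]$ both nonconstant. The first step I would carry out is a reduction to the case $\deg g=\deg h=2$. Since $T(x)=T(-x)$, whenever $g$ divides $T$ so does $g(-x)$. If $g$ were linear, say $g=g_1x+g_0$, then $g(x)$ and $g(-x)$ are either associates --- which forces $g_0=0$ and hence $c=0$, impossible because $c$ is a nonzero irreducible --- or non-associate, hence coprime in $\mathrm{Frac}(\mathcal{R})[x]$, so their product $g(x)g(-x)$, a degree-$2$ polynomial in $x^2$, divides $T$; Gauss's lemma then upgrades this to a factorization of $T$ into two quadratics over $\mathcal{R}$. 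So I may assume $g=g_2x^2+g_1x+g_0$ and $h=h_2x^2+h_1x+h_0$.

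Next I would compare coefficients of $T=gh$, obtaining $g_2h_2=a$, $g_2h_1+g_1h_2=0$, $g_2h_0+g_1h_1+g_0h_2=b$, $g_1h_0+g_0h_1=0$, and $g_0h_0=c$. Since $c$ is irreducible, one of $g_0,h_0$ is a unit; say $h_0$ is a unit and $g_0$ an associate of $c$. The \emph{crucial step} is to force $g_1=h_1=0$: writing $v_c$ for the $c$-adic valuation, if $h_1\neq 0$ then $g_1h_0=-g_0h_1$ gives $v_c(g_1)=1+v_c(h_1)$ (using that $h_0$ is a unit and $v_c(g_0)=1$), and substituting this into $g_2h_1=-g_1h_2$ gives $v_c(g_2)\ge 1$, so $c\mid g_2$; but $g_2h_2=a$ would then give $c\mid a$, contradicting $\gcd(a,c)=1$. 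Hence $h_1=0$, and then $g_1h_0=0$ with $h_0$ a unit yields $g_1=0$.

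Finally, after normalizing $g$ and $h$ by the unit $h_0$ (legitimate because $h_0^2=1$) so that $g_0=c$ and $h_0=1$, I would use that $a$ is irreducible to enumerate the possibilities $(g_2,h_2)\in\{(a,1),(-a,-1),(1,a),(-1,-a)\}$; the surviving coefficient relation $b=g_2h_0+g_0h_2=g_2+ch_2$ then equals $\pm(a+c)$ in the first two cases and $\pm(ac+1)$ in the last two, each excluded by hypothesis. This contradiction establishes that $T$ is irreducible in $\mathcal{R}[x]$.

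I expect the main obstacle to be executing the reduction to two quadratic factors cleanly over a UFD that is not a field: one has to genuinely rule out a linear-times-cubic split and apply Gauss's lemma correctly, and then keep the bookkeeping of units and signs straight throughout. The prime-valuation argument forcing $g_1=h_1=0$ is the technical heart, but it becomes routine once the reduction is in place.
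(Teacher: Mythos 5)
Your proof is correct and follows essentially the same route as the paper's: reduce to a factorization into two quadratics, use $c$-divisibility (your $c$-adic valuation is the paper's ``largest integer $l$ with $c^l\mid h_1$'') to force $g_1=h_1=0$, and then enumerate the four unit/associate cases for $(g_2,h_2)$ to contradict $b\neq\pm(ac+1)$, $b\neq\pm(a+c)$. If anything, your reduction to $\deg g=\deg h=2$ via $g(x)g(-x)$ and Gauss's lemma is more careful than the paper's appeal to roots $\alpha\in\mathcal{R}$, which does not literally account for non-monic linear factors.
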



\begin{thm}
Suppose $g(\xx_{n})=\sum_{1\le i,j\le n} a_{i,j}x_ix_j=\xx_{n}A_{n}\xx_{n}^T$ is a quadratic polynomial
        where $a_{i,j}$ are pairwise different real parameters except that $a_{i,j}=a_{j,i} (1\le i,j\le n)$. Let $G(\xx_n)=g(x_1^2,\ldots,x_n^2)$, 
        then $\Nproj(G,[\xx_n])=\det(A_{n})$. 
\end{thm}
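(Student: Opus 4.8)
The statement is the specialization of Theorem~\ref{thm:cop} (via Theorem~\ref{thm:gcop}) to the homogeneous quadratic form $g(\xx_n)=\xx_nA_n\xx_n^T$, where there is no ``$+1$'' slot, so the relevant matrices are $A_n$ and its principal minors rather than $A_n$ and $A_{n+1}$. The plan is to mimic the inductive argument of Theorem~\ref{thm:cop} almost verbatim, tracking one matrix instead of two. First I would check the hypotheses needed to run the $\Nproj$ recursion: since the $a_{i,j}$ are pairwise distinct parameters, $G(\xx_n)|_{x_i=0}$ is nonzero and squarefree, and for every sub-sequence $I$ the minors $\det(A_I)$ are nonzero, squarefree, pairwise distinct irreducible polynomials in $\ZZ[\bm a_{i,j}][\xx_n]$, with $\gcd(A_I^{(1,1)},\ldots,A_I^{(|I|,|I|)})=1$ for $|I|\ge 2$. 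These are exactly the ``generic'' facts established in the proof of Theorem~\ref{thm:gcop} (and the irreducibility via Lemma~\ref{lem:irr}), so I would cite them rather than redo them.

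Next comes the induction on $n$. The base case $n=1$: $G(\xx_1)=a_{1,1}x_1^4$, so $\Res(G,\partial G/\partial x_1,x_1)$ is a power of $a_{1,1}$ up to a constant, giving $\Nproj(G,[\xx_1])=a_{1,1}=\det(A_1)$. For the inductive step, write $I=[1,\ldots,n-1]$ and decompose
\[
G(\xx_n)=(\xx_{n-1}^2,x_n^2)\begin{bmatrix} A_I & B^T\\ B & a_{n,n}\end{bmatrix}(\xx_{n-1}^2,x_n^2)^T,
\]
with $B=(a_{1,n},\ldots,a_{n-1,n})$. By the induction hypothesis applied to the $(n-1)$-variable form $(\xx_{n-1}^2,1)P_I(\xx_{n-1}^2,1)^T$ obtained by treating $x_n$ as a parameter-bearing slot, $\Nproj(G,[\xx_{n-1}])=\det(A_I)\det(P_I)$ where $P_I=\begin{bmatrix} A_I & (Bx_n^2)^T\\ Bx_n^2 & a_{n,n}x_n^4\end{bmatrix}$. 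Using Lemma~\ref{lem:matrix}, $\det(P_I)=\det(A_I)(a_{n,n}-BA_I^{-1}B^T)x_n^4=\det(A_n)x_n^4$, so $\Nproj(G,[\xx_{n-1}])=\det(A_I)\det(A_n)x_n^4$. Setting $H=\Nproj(G,[\xx_{n-1}])$ and computing $\Res(H,\partial H/\partial x_n,x_n)$, the only $x_n$-dependent factor is $x_n^4$, whose discriminant-type resultant is a nonzero constant; hence $\Nproj(G,[\xx_n],n)=\sqrfree(\det(A_I)\det(A_n))$, and symmetrically $\Nproj(G,[\xx_n],i)=\sqrfree(A_n^{(i,i)}\det(A_n))$ after removing the nonzero squarefree factor $\det(A_n)$ appropriately — more precisely $\Nproj(G,[\xx_n],i)=\sqrfree(A_n^{(i,i)}\cdot A_n^{(i,i)}\cdot\det(A_n))=A_n^{(i,i)}\det(A_n)$ up to the squarefree reduction, matching equation~(\ref{eq:4}) with the $A_{n+1}$-factors absent.

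Finally, I would take the gcd over $i=1,\ldots,n$: since $\gcd(A_n^{(1,1)},\ldots,A_n^{(n,n)})=1$ and $\det(A_n)$ is a nonzero squarefree polynomial,
\[
\Nproj(G,[\xx_n])=\gcd\big(\Nproj(G,[\xx_n],1),\ldots,\Nproj(G,[\xx_n],n)\big)=\det(A_n).
\]
The main obstacle I anticipate is bookkeeping rather than genuine difficulty: one must be careful that the induction hypothesis from Theorem~\ref{thm:cop} is being invoked in the correct form (the $n=1$ ``affine'' base of that theorem does not directly apply to the homogeneous case, so the base case must be redone as above), and that the repeated $\sqrfree$ operations and the constant resultant factors are handled cleanly so that exactly $\det(A_n)$ survives the gcd. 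A short remark noting that this is literally Theorem~\ref{thm:cop} with the $(n{+}1)$-st coordinate suppressed would make the argument almost immediate.
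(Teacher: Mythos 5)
Your proposal is correct and follows essentially the same route as the paper: the paper's two-line proof also treats $x_n$ as a parameter, applies Theorem~\ref{thm:gcop} to the resulting $(n-1)$-variable affine form to get $\Nproj(G,[x_1,\ldots,x_{n-1}])=\det(A_{n-1})\det(A_n)x_n$ (your $\det(P_I)=\det(A_n)x_n^4$ before the squarefree reduction), performs the final $\Bproj$ step in $x_n$, and concludes by taking the gcd over the $n$ projection orders using coprimality of the principal minors. The only cosmetic difference is that you package this as an induction on $n$, but your ``induction hypothesis'' in the inductive step is really just Theorem~\ref{thm:gcop} for $n-1$ variables, which is exactly the single invocation the paper makes directly.
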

\begin{proof}
By Theorem \ref{thm:gcop}, we have
$$\Nproj(G,[\xx_n],n)=\Bproj(\det(A_{n-1})\det(A_n)x_n,x_{n})=\det(A_{n-1})\det(A_n).$$
Therefore,
        $\Nproj(G,[\xx_n])=\gcd(\Nproj(G,[\xx_n],1 ),\ldots,\Nproj(G,[\xx_n],n))=\det(A_{n}).$
\end{proof}
By similar method, we can prove that
\begin{thm}
Suppose $g(\xx_{n})=\sum_{1\le i,j\le n} a_{i,j}x_ix_j=\xx_{n}A_{n}\xx_{n}^T$ is a quadratic form
        where $a_{i,j}$ are pairwise different real parameters except that $a_{i,j}=a_{j,i} (1\le i,j\le n)$. Let $A_{n}=(a_{i,j})_{i,j=1}^{n}$. 
        Then $\Hproj(g,[\xx_n])=\det(A_{n})=\discrim(g,[\xx_n])$, where $\discrim(g,[\xx_n])$ is the discriminant of the quadratic form $g$, it is an irreducible polynomial in $\ZZ[\bm{a}_{i,j}]$.
\end{thm}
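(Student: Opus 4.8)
The plan is to run essentially the same induction as in the proofs of Theorems~\ref{thm:cop} and~\ref{thm:gcop}, but specialized from degree $4$ to degree $2$, while keeping careful track of how the $\sqrfree$ and $\gcd$ operations built into $\Hproj$ (Definition~\ref{def:hp}) strip off the spurious leading-coefficient factors that $\Bproj$ introduces. Throughout, the $a_{i,j}$ are treated as independent parameters, so everything lives in $\ZZ[\bm{a}_{i,j}][\xx_n]$; for $K\subseteq\{1,\dots,n\}$ write $A_K$ for the corresponding principal submatrix of $A_n$, so that $A_n^{(i,i)}=\det(A_{\{1,\dots,n\}\setminus\{i\}})$. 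The case $n=1$ is immediate ($\sqrfree(a_{1,1}x_1^2)=a_{1,1}x_1$, hence $\Hproj(g,[x_1])=\Res(a_{1,1}x_1,a_{1,1},x_1)=a_{1,1}=\det(A_1)$), so assume $n\ge2$, in which case $g=\xx_nA_n\xx_n^T$ is irreducible and $\sqrfree(g)=g$.

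The core is an intermediate claim, proved by induction on $|K|$: for every $K$ with $1\le|K|\le n-1$, writing $J=\{1,\dots,n\}\setminus K$,
\[
\Hproj\bigl(g,[\text{variables indexed by }K]\bigr)=c_K\,\det(A_K)\cdot\bigl(\xx_{J}M_K\xx_{J}^{T}\bigr),
\]
where $c_K$ is a positive integer and $M_K=\det(A_K)\cdot(\text{Schur complement of }A_K\text{ in }A_n)$; thus $M_K$ has diagonal entries $\det(A_{K\cup\{i\}})$ and $\det(M_K)=\det(A_K)^{|J|-1}\det(A_n)$, and when $|J|=1$ the bracket is just $\det(A_n)x_j^{2}$. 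For $|K|=1$ this is the Schur-complement evaluation $\Bproj(g,[x_k])=\Res(g,\partial g/\partial x_k,x_k)=-a_{k,k}\,\discrim_{x_k}(g)=4\,a_{k,k}\,(\xx_JM_K\xx_J^{T})$ (Lemma~\ref{lem:matrix}). For the step from $K\setminus\{k\}$ to $K$, one applies $\Bproj$ in $x_k$ to $\det(A_{K\setminus\{k\}})\,(\xx_{J'}M_{K\setminus\{k\}}\xx_{J'}^{T})$; a second use of Lemma~\ref{lem:matrix} (together with the quotient formula for Schur complements) shows this resultant equals $4\,\det(A_{K\setminus\{k\}})^{4}\det(A_K)\,(\xx_{J}M_K\xx_{J}^{T})$, and taking $\gcd$ over $k\in K$ removes the varying factors $\det(A_{K\setminus\{k\}})^{4}$ because the corank-$1$ principal minors of $A_K$ are pairwise coprime for generic $A_K$. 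Applying $\Bproj$ once more to the $|K|=n-1$ formula (where the bracket is $\det(A_n)x_k^{2}$ and $\sqrfree$ lowers $x_k^{2}$ to $x_k$) gives $\Hproj(g,[\xx_n],x_k)=A_n^{(k,k)}\det(A_n)$, and hence $\Hproj(g,[\xx_n])=\gcd_{k}\bigl(A_n^{(k,k)}\det(A_n)\bigr)=\det(A_n)$.

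The main obstacle is the arithmetic underpinning these $\gcd$ cancellations: the induction closes only if at each stage $\xx_JM_K\xx_J^{T}$ is primitive, squarefree, and coprime to $\det(A_K)$, and the minors entering the $\gcd$ steps are irreducible and pairwise coprime. The entries of $M_K$ are the \emph{non-generic} polynomials $\det(A_K)a_{i,j}-A_{\{i\}K}\operatorname{adj}(A_K)A_{K\{j\}}$, so one cannot merely invoke the generic case; instead one argues in the spirit of Lemma~\ref{lem:irr} and of the proof of Theorem~\ref{thm:gcop}. The diagonal entries of $M_K$ are the generic symmetric determinants $\det(A_{K\cup\{i\}})$, hence irreducible, and distinct for distinct $i\in J$ (the one indexed by $i$ is the unique one involving the parameters $a_{i,q}$), so pairwise coprime, which forces $\xx_JM_K\xx_J^{T}$ to be primitive; a nondegenerate quadratic form in $|J|\ge3$ variables is automatically irreducible, and for $|J|=2$ the binary form cannot factor because its discriminant $\det(A_K)\det(A_n)$ is a product of two distinct irreducibles; finally, the irreducibility and pairwise coprimality of $\det(A_n)$ and of the $A_n^{(i,i)}$ are instances of the classical irreducibility of the generic symmetric determinant (itself reprovable by the same inductive, Lemma~\ref{lem:irr}-style reasoning). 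Once these facts are secured the displayed formula propagates through the induction, and the remaining assertions follow at once: $\Hproj(g,[\xx_n])=\det(A_n)$ up to the normalization built into $\Hproj$, this polynomial is irreducible in $\ZZ[\bm{a}_{i,j}]$, and it coincides with the classical discriminant $\discrim(g,[\xx_n])$ of the quadratic form $g$.
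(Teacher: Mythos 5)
Your proof is correct and follows essentially the route the paper intends: the paper justifies this theorem only with ``by similar method,'' pointing back to the Schur-complement induction of Theorems~\ref{thm:cop} and~\ref{thm:gcop} and the preceding computation $\Nproj(G,[\xx_n])=\det(A_n)$, and your argument is exactly that induction carried out for $\Hproj$ on the degree-$2$ form, ending with $\Hproj(g,[\xx_n],x_j)=A_n^{(j,j)}\det(A_n)$ and a final $\gcd$ over the $n$ orders. The genericity facts you supply along the way (irreducibility and pairwise coprimality of the principal minors, irreducibility and primitivity of the intermediate Schur-complement forms) are the same ones the paper verifies in Theorem~\ref{thm:gcop}.
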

This theorem implies that, for a class of polynomial $g$, $\Hproj(g,[\xx_n])$ may coincide with its discriminant.

Let $g$ be a ``generic'' form in $n$ variables with degree $d$,
$$g(\xx_n,\bm{C}_{\bm{\alpha}})=\sum_{|\bm{\alpha}|=d}C_{\bm{\alpha}}\bm {x}^{\bm{\alpha}},$$
where $\bm{\alpha}=(\alpha_1,\ldots,\alpha_n)$, $|\bm{\alpha}|=\sum_{i=1}^n \alpha_i$, $\bm {x}^{\bm{\alpha}}=\prod_{i=1}^n x_i^{\alpha_i}$, $\{\bm{C}_{\bm{\alpha}}\}=\{C_{\bm{\alpha}}||\bm{\alpha}|=d\}$, and $N=(\begin{subarray}{c}n+d-1\\n-1 \end{subarray})$.

It was proved in \cite{han2016multivariate} that

(1) the multivariate discriminant $\discrim(g,[x_n,\ldots,x_1])$ of the generic form $g(\xx_n)$ with even degree $d$ is an irreducible factor of $\Hproj(g,[\xx_n])$.

(2) for generic form $g(x,y,z)$ in three variables with degree $d$, we have
  $$\Hproj(g,[x,y,z])=\discrim(g,[x,y,z]).$$

We conjecture that for generic form $g(x_1,\ldots,x_n)$, we have
  $$\Hproj(g,[x_n,\ldots,x_1])=\discrim(g,[x_n,\ldots,x_1]).$$

Theorem \ref{thm:cop} and Theorem \ref{thm:gcop} show that, for a generic copositive problem, we can compute the projection set $\overline{\Nproj}(F,{n-1})=\{f,\Nproj(f,[x_{n}]), \ldots, \Nproj(f,[x_{n},\ldots,x_2)]\}$ directly. Based on the theorem, it is easy to design a complete algorithm for solving copositive problems. However, for an input $f(\xx_n)$, checking whether $f(\xx_n)$ satisfies the hypothesis (3) of Theorem \ref{thm:cop} is expensive. Therefore we propose a special incomplete algorithm \TCPT\ for {\bf c}opositive {\bf m}atrix {\bf t}esting, which is formally described as Algorithm \ref{cop}.

\begin{algorithm}[ht]
  \caption{\TCPT}
  \label{cop}
  \begin{algorithmic}[1]
    \Require{An even quartic squarefree polynomial $F(\xx_n) \in \ZZ[\xx_n]$, $n\ge1$, with an ordering $x_n\prec x_{n-1}\cdots \prec x_1$ and a set $Q$ of nonnegative polynomials.}
    \Ensure{Whether or not $F(\xx_n) \ge0$ on $\RR^n$}
    \If {$F\in Q$}
      \Return \textbf{true}
    \EndIf
    \For {$i$ from $1$ to $n$}
    \If {$\TCPT(F(x_1,x_2,\ldots,x_n)|_{x_i=0},Q)=$\textbf{false}}
    \Return  \textbf{false}
    \Else ~ $Q:=Q\cup (F(x_1,x_2,\ldots,x_n)|_{x_i=0})$
    \EndIf
    \EndFor
    \State $g_n(x_n):=\det(P_{n})$ \Comment{Recall that $F(\xx_n)=(x_1^2,\ldots, x_{n-1}^2,1)P_{n}(x_1^2,\ldots,x_{n-1}^2,1)^T$.}
    \State $O:={\tt SPOne}(g_n,1)$
    \For {$i$ from $2$ to $n$}
    \State $S:=\emptyset$
    \State $g_{n-i+1}(x_{n-i+1},\ldots,x_n):=\det(P_{n-i+1})$ \Comment{Recall that $F(\xx_n)=(x_1^2,\ldots,x_{n-i}^2,1)P_{n-i+1}(x_1^2,\ldots,x_{n-i}^2,1)^T$.}
    \For {$\va$ in $O$}
    \State $S:=S\bigcup (\va\boxplus {\tt SPOne}(g_{n-i+1}(x_{n-i+1},\va),1))$
    \EndFor
    \State $O:=S$
    \EndFor
    \If{$\exists \va_{n}\in O$ such that $F(\va_{n})<0$} 
    \Return \textbf{false}
    \EndIf
    \State\textbf{return} \textbf{true}
  \end{algorithmic}
\end{algorithm}

\begin{rem}
In Algorithm \ref{cop}, we do not check the hypotheses of Theorem \ref{thm:cop}. Thus the algorithm is incomplete. However, the algorithm still makes sense because almost all $f(\xx_n)$ defined by Eq. (\ref{eq:f}) satisfy the hypotheses. On the other hand, for an input $f(\xx_n)$, checking whether $f(\xx_n)$ satisfies the hypothesis (3) of Theorem \ref{thm:cop} is expensive but the other three hypotheses are easy to check. Furthermore, $f(\xx_n)$ is degenerate when some hypotheses do not hold and such case can be easily handled. Therefore, when implementing Algorithm \TCPT, we take into account those possible improvements. The details are omitted here.
\end{rem}

{\bf Complexity analysis of  Algorithm \ref{cop}}.
We analyze the upper bound on the number of algebraic operations of Algorithm \ref{cop}.

We first estimate the complexity of computing $\det(P_k)$ for $1\le k\le n$. Because the entries of the last row and the last column of $P_k$ are polynomials with $k^2$ terms and the other entries are integers, we expand $\det(P_k)$ by minors along the last column and then expand the minors again along the last rows. Therefore, the complexity of computing $\det(P_k)$, {\it i.e.}, $g_k$, is $\OO(k^2(k-2)^3+k^2(n-k)^2)$. Since $g_k(x_k,\va)$ is an even quartic univariate polynomial, the complexity of real root isolation for $g_k(x_k,\va)$ is $\OO(1)$ and we only need to choose positive sample points when calling {\tt SPOne}. That means {\tt SPOne}$(g_k(x_k,\va),1)$ returns at most 3 points. Thus the scale of $O$ in line 13 is at most $3^{i-1}$. The cost of computing $g_k(x_k,\va)$ is $\OO(k^2)$ for each sample point $\va$. Then the cost of the ``for loop" at lines 10-17 is bounded by
$$\OO(\sum_{i=1}^{n}(i^2(i-2)^3+i^2(n-i)^2+i^23^{i-1}))=\OO(n^23^n).$$

In line $18$ of Algorithm \ref{cop}, the number of checking $F(\va_n)$ is at most $3^n$. And the complexity of every check in line $18$ is $\OO(n^2)$ since $F$ has at most $n^2$ terms. Then, the complexity of line $18$ is bounded by $\OO(n^23^n)$. Therefore, the complexity of lines $8-19$ is bounded by $\OO(n^23^n)$.

The scale of the set $Q$ is at most $\sum_{k=0}^{n} {n \choose k}=2^n.$ So, the cost of all recursive calls is bounded by
$$\OO(\sum_{k=0}^n{n \choose k}3^kk^2)=\OO(n^22^{2n}).$$

In conclusion, the complexity of Algorithm \ref{cop} is bounded by $\OO(n^22^{2n}).$

\begin{rem}
By a more careful discussion, we may choose at most two sample points on every call {\tt SPOne}$(g_k(x_k,\va),1)$. That will lead to an upper bound complexity, $\OO(3^n n^2)$.
\end{rem}

\section{Examples}
\label{sec:applicat}

The Algorithm \TwoHp, Algorithm \TwoPro, and Algorithm \TCPT\ have been implemented as three programs using Maple. In this section, we report the performance of the three programs, respectively.
All the timings in the tables are in seconds.

\begin{ex} \citep{Strzebonski}
\[f=ax^3+(a+b+c)x^2+(a^2+b^2+c^2)x+a^3+b^3+c^3-1.\]\
Under the order $a\prec b\prec c\prec x$, an open CAD defined by $f$ has $132$ sample points, while an open sample obtained by the algorithm \TwoHp\ has $15$ sample points.
\end{ex}
 \begin{ex}\citep{han2014constructing}
$$f=x^4-2x^2y^2+2x^2z^2+y^4-2y^2z^2+z^4+2x^2+2y^2-4z^2-4.$$ 
Under the order $z\succ y \succ x$, an open CAD defined by $f$ has $113$ sample points, while an open sample obtained by the algorithm \TwoHp\ has $87$ sample points.
\end{ex}
\begin{ex}
For $100$ random polynomials $f(x,y,z)$ with degree $8$\footnote{Generated by {\tt randpoly([x,y,z],degree=8)} in Maple 15.}, Figure \ref{fig:number} shows the numbers of real roots of $\Bproj(f,[z,y])$, $\Bproj(f,[y,z])$ and $\Hproj(f,[y,z])$, respectively. It is clear that the number of real roots of $\Hproj(f,[y,z])$ is always less than those of $\Bproj(f,[z,y])$ and $\Bproj(f,[y,z])$.
\begin{figure}[ht!]
        \begin{centering}
                \includegraphics[width=0.6\textwidth]{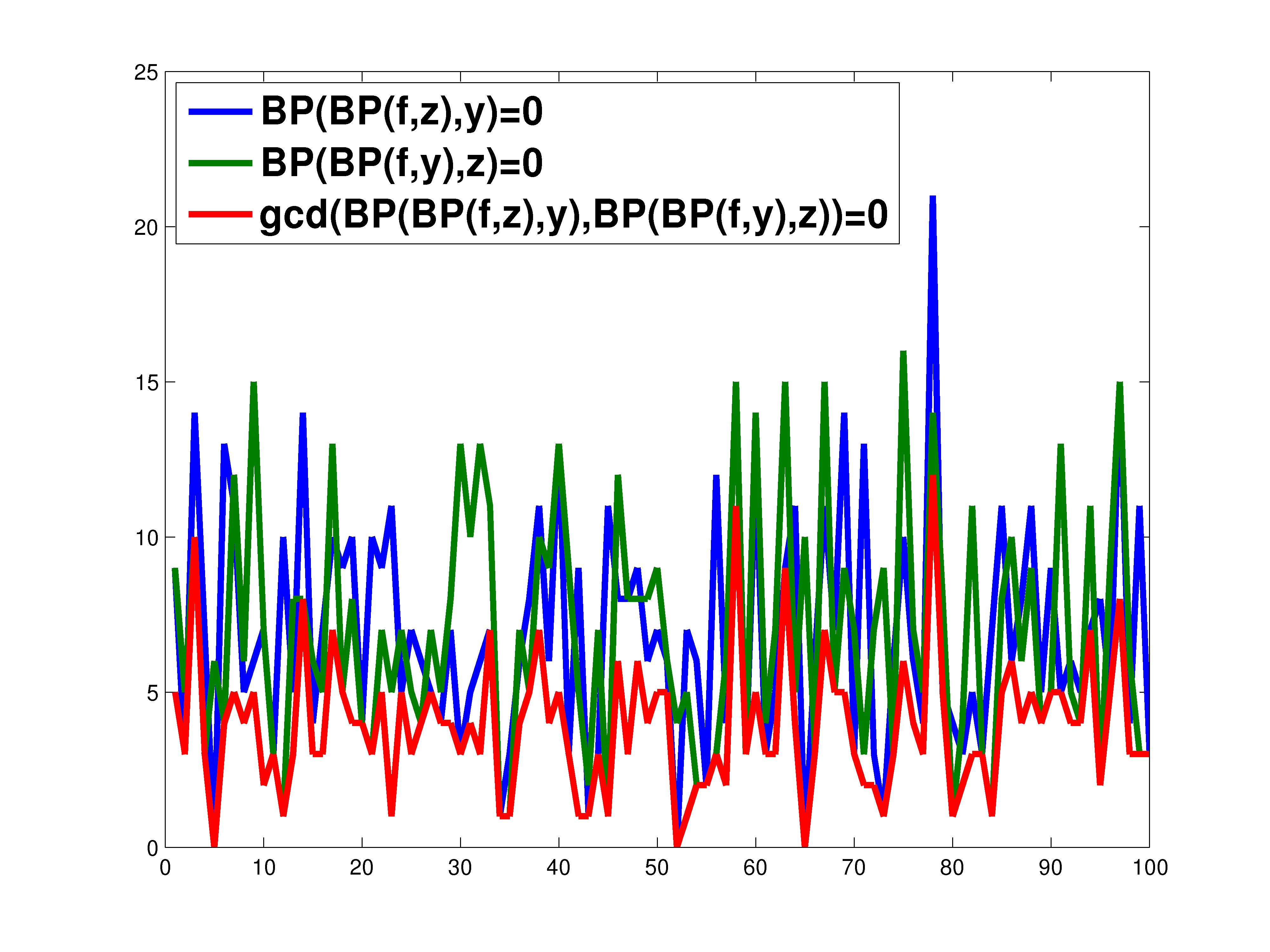}
                \caption{The number of real roots.} 
                \label{fig:number}
        \end{centering}
\end{figure}
\end{ex}

\begin{ex}\label{ex:61}
In this example, we compare the performance of Algorithm \TwoHp\ with open CAD on randomly generated polynomials. All the data in this example were obtained on a PC with Intel(R) Core(TM) i5 3.20GHz CPU, 8GB RAM, Windows 7 and Maple 17.

In the following table, we list the average time of projection phase and lifting phase, and the average number of sample points on $30$ random polynomials with $4$ variables and degree $4$ generated by {\tt randpoly([x,y,z,w],degree=4)-1}.

\begin{center}
                \begin{tabular}{ccccc}
                        \hline & {\rm Projection} &{\rm Lifting}& {\rm Sample\ points}  & \\
                        \hline
                        $\TwoHp$           &$0.13$& $ 0.29$& $262$&\\
                        ${\tt open\ CAD}$         & $0.19$&$ 3.11$&$ 486$&\\
                        \hline
                \end{tabular}
        \end{center}
If we get random polynomials with $5$ variables and degree $3$ by the command
$${\tt randpoly([seq(x[i],i=1..5)], degree=3)},$$ then the degrees of some variables are usually one. That makes the computation very easy for both \TwoHp\ and open CAD. Therefore,
we run the command ${\tt randpoly([seq(x[i],}$ ${\tt i=1..5)], degree=3)+add(x[i]^2,i=1..5)-1}$ 
 ten times to generate $10$ random polynomials with $5$ variables and degree $3$. 
The data on the $10$ polynomials are listed in the following table.
\begin{center}
                \begin{tabular}{ccccc}
                        \hline & {\rm Projection} &{\rm Lifting}& {\rm Sample\ points}  & \\
                        \hline
                        $\TwoHp$           &$2.87$& $ 3.51$& $2894$&\\
                        ${\tt open\ CAD}$         & $0.76$&$ 12.01$&$7802$&\\
                        \hline
                \end{tabular}
        \end{center}
For many random polynomials with $4$ variables and degree greater than $4$ (or $5$ variables and degree greater than $3$), neither \TwoHp\ nor open CAD can finish computation in reasonable time. 
\end{ex}

A main application of the new projection operator \Hproj\ is testing semi-definiteness of polynomials. 
Now, we illustrate the performance of our implementation of Algorithm \TwoPro\ and Algorithm \TCPT\ with several non-trivial examples. For more examples, please visit the homepage\footnote{\url{https://sites.google.com/site/jingjunhan/home/software}} of the first author.

We report the timings of the programs \TCPT, \TwoPro, and \Proineq\ \citep{han2016proving}, the function PartialCylindricalAlgebraicDecomposition (\PCAD) in Maple 15, function FindInstance (\FI) in Mathematica 9, QEPCAD B (\QEPCAD), the program \RAGlib \footnote{\RAGlib\ release 3.23 (Mar., 2015). The \RAGlib\ has gone through significant improvements. Thus, we updated the timing using the most recent version.}, and {\tt SOSTOOLS} in MATLAB \footnote{The MATLAB version is R2011b, SOSTOOLS's version is 3.00 and SeDuMi's version is 1.3.} on these examples.

\QEPCAD\ and {\tt SOSTOOLS} were performed on a PC with Intel(R) Core(TM) i5 3.20GHz CPU, 4GB RAM and ubuntu.
The other computations were performed on a laptop with Inter Core(TM) i5-3317U 1.70GHz CPU, 4GB RAM, Windows 8 and Maple 15.
\begin{ex}\label{ex:62} \citep{han2011} Prove that
        $$F(\bm{x}_{n})=(\sum_{i=1}^nx_i^2)^2-4\sum_{i=1}^n x_i^2x_{i+1}^2\ge 0,$$
        where $x_{n+1}=x_1$.

Hereafter ``$\infty$" means either the running time is over 4000 seconds or the software fails to get an answer.

        \begin{center}
                \begin{tabular}{lllllll}
                        \hline$n$ & 5 & 8& 11  & 17  &23  \\
                        \hline
            $\TCPT$          &0.06 &0.48&1.28&4.87 &11.95  \\
                        $\TwoPro$        &0.28 &0.95&6.26&29.53&140.01 \\
                        $\RAGlib$        &0.42 &0.76& $1.34$&$3.95$&$8.25$\\
                        $\Proineq$       &0.29 &$\infty$&$\infty$&$\infty$&$\infty$\\
            $\FI$            &0.10 &$\infty$ &$\infty$&$\infty$&$\infty$\\
            $\PCAD$          &0.26 &$\infty$ &$\infty$&$\infty$&$\infty$\\
            $\QEPCAD$        &0.10 &$\infty$ &$\infty$&$\infty$&$\infty$\\
            {\tt SOSTOOLS}   &0.23 & 1.38    & 3.94  & 247.56  & $\infty$\\
                        \hline
                \end{tabular}
        \end{center}

We then test the semi-definiteness of the polynomials (in fact, all $G(\bm{x}_{n})$ are indefinite)
$$G(\bm{x}_{n})=F(\bm{x}_{n})-\frac{1}{10^{10}}x_n^4.$$
The timings are reported in the following table.

        \begin{center}
                \begin{tabular}{lllllllll}
                        \hline$n$ &\TCPT& \TwoPro & \RAGlib&   \Proineq &  \FI& \PCAD&\QEPCAD\\
                        \hline
                        $20$       &1.81 &3.828&0.59& $\infty$& $\infty$&$\infty$&$\infty$\\
                        $30$       &5.59&13.594&2.01& $\infty$& $\infty$&$\infty$&$\infty$\\
           \hline
                \end{tabular}
        \end{center}
\end{ex}

\begin{ex} Prove that
         $$B(\bm{x}_{3m+2})=(\sum_{i=1}^{3m+2}x_i^2)^2-2\sum_{i=1}^{3m+2}x_i^2\sum_{j=1}^mx_{i+3j+1}^2\ge 0,$$
        where $x_{3m+2+r}=x_r$.
        If $m=1$, it is equivalent to the case $n=5$ of Example \ref{ex:62}. This form was once studied in \cite{parrilo2000structured}.
\end{ex}
        \begin{center}
                \begin{tabular}{lllllllll}
                        \hline$m$ &\TCPT& \TwoPro& \RAGlib&   \Proineq   &  \FI& \PCAD&\QEPCAD                \\
                        \hline
                        $1$       &0.03 &0.296&0.42&0.297&0.1&0.26&0.104 \\
                        $2$       &0.56 &1.390&0.36&23.094&$\infty$&$\infty$&$\infty$\\
                        $3$       &0.71 &9.672&0.75&$\infty$&$\infty$&$\infty$&$\infty$\\
            $4$       &7.68 &$\infty$&0.87&$\infty$&$\infty$&$\infty$&$\infty$\\
                        \hline
                \end{tabular}
        \end{center}

\begin{rem}

As showed by Example \ref{ex:61}, according to our experiments, the application of \TwoHp\ and \TwoPro\ is limited at $3$-$4$ variables and low degrees generally.
It is not difficult to see that, if the input polynomial $f(\xx_n)$ is symmetric, the new projection operator \Hproj\ cannot reduce the projection scale and the number of sample points.
Thus, it is reasonable to conclude that the complexity of \TwoPro\ is still doubly exponential.
\end{rem}

\section*{Acknowledgements}
Part of this paper is written while the first author visited Princeton University and North Carolina State University, and he would like to thank the institutes for their hospitality. He would like to thank ``Training, Research and Motion" (TRAM) network for supporting him visiting Princeton University. The first author would also like to thank his advisor Gang Tian for his constant support and encouragement.

The authors  thank M. Safey El Din who provided us several examples and communicated with us on the usage of \RAGlib.
The authors would like to convey their gratitude to all the four referees of our ISSAC'2014 paper and two referees of this paper, who provided their valuable comments, advice and suggestion, which helped improve this paper greatly on not only the presentation but also the technical details.

\bibliographystyle{elsart-harv}

\end{document}